\DeclareMathOperator{\CDF}{CDF}
\DeclareMathOperator{\CMF}{CMF}
\DeclareMathOperator{\DF}{DF}
\DeclareMathOperator{\MF}{MF}
\DeclareMathOperator{\PC}{PC}
\DeclareMathOperator{\PSC}{PSC}
\DeclareMathOperator{\app}{app}
\DeclareMathOperator{\nat}{nat}
\DeclareMathOperator{\per}{per}
\DeclareMathOperator{\Repart}{Re}
\renewcommand{\Re}{\Repart}
\renewcommand{\phi}{\varphi}
\newcommand{\C}{{\mathbb C}}
\newcommand{\F}{{\mathbb F}}
\newcommand{\R}{{\mathbb R}}
\newcommand{\Z}{{\mathbb Z}}
\newcommand{\Fp}{\F_p}
\newcommand{\Fpu}{\Fp^*}
\newcommand{\Fputm}{\Fp^{*2m}}
\newcommand{\Fpuf}{\Fp^{*4}}
\newcommand{\Fput}{\Fp^{*2}}
\newcommand{\card}[1]{\left|{#1}\right|}
\newcommand{\conj}[1]{\overline{#1}}
\newcommand{\sums}[1]{\sum_{\substack{#1}}}
\newcommand{\floor}[1]{\lfloor{#1}\rfloor}
\newcommand{\ceil}[1]{\lceil{#1}\rceil}
\newcommand{\ft}[1]{{\widehat{#1}}}
\newcommand{\achars}{\widehat{\Fp}}
\newcommand{\mchars}{\widehat{\Fpu}}
\newcommand{\fpsl}{\tilde{f}_p^{s,\ell}}
\newcommand{\gpsl}{\tilde{g}_p^{s,\ell}}
\newcommand{\hpsl}{\tilde{h}_p^{s,\ell}}
\newcommand{\fpslu}{f_p^{s,\ell}}
\newcommand{\gpslu}{g_p^{s,\ell}}
\newcommand{\hpslu}{h_p^{s,\ell}}
\newcommand{\fpspu}{f_p^{s,p}}
\newcommand{\gpspu}{g_p^{s,p}}
\newcommand{\hpspu}{h_p^{s,p}}
\newcommand{\fpzpu}{f_p^{0,p}}
\newcommand{\gpzpu}{g_p^{0,p}}
\newcommand{\hpzpu}{h_p^{0,p}}
\newcommand{\fpnat}{f_p^{\nat}}
\newcommand{\gpnat}{g_p^{\nat}}
\newcommand{\hpnat}{h_p^{\nat}}
\newcommand{\fpapp}{f_p^{\app}}
\newcommand{\gpapp}{g_p^{\app}}
\newcommand{\hpapp}{h_p^{\app}}
\newtheorem{theorem}{Theorem}[section]
\newtheorem{proposition}[theorem]{Proposition}
\newtheorem{lemma}[theorem]{Lemma}
\newtheorem{corollary}[theorem]{Corollary}
\newtheorem{example}[theorem]{Example}
\title[Low Correlation Sequences]{Low Correlation Sequences from Linear Combinations of Characters}
\author{Kelly T.~R.~Boothby and Daniel J.~Katz}
\date{first version: 14 February 2016; this version: 25 March 2017}
\thanks{Kelly T.~R.~Boothby is at D-Wave Systems, Inc.~and was previously at Simon Fraser University.  Daniel J.~Katz is at California State University, Northridge.}
\thanks{The work of Daniel J.~Katz on this paper was supported in part by the National Science Foundation under Grant DMS 1500856, in part by California State University, Northridge through a Research, Scholarship and Creative Activity Award, and in part by funding from the Natural Sciences and Engineering Research Council of Canada through a grant awarded to Jonathan Jedwab.}
\begin{document}
\begin{abstract}
Pairs of binary sequences formed using linear combinations of multiplicative characters of finite fields are exhibited that, when compared to random sequence pairs, simultaneously achieve significantly lower mean square autocorrelation values (for each sequence in the pair) and significantly lower mean square crosscorrelation values.
If we define crosscorrelation merit factor analogously to the usual merit factor for autocorrelation, and if we define demerit factor as the reciprocal of merit factor, then randomly selected binary sequence pairs are known to have an average crosscorrelation demerit factor of $1$.
Our constructions provide sequence pairs with crosscorrelation demerit factor significantly less than $1$, and at the same time, the autocorrelation demerit factors of the individual sequences can also be made significantly less than $1$ (which also indicates better than average performance).
The sequence pairs studied here provide combinations of autocorrelation and crosscorrelation performance that are not achievable using sequences formed from single characters, such as maximal linear recursive sequences (m-sequences) and Legendre sequences.
In this study, exact asymptotic formulae are proved for the autocorrelation and crosscorrelation merit factors of sequence pairs formed using linear combinations of multiplicative characters.
Data is presented that shows that the asymptotic behavior is closely approximated by sequences of modest length.
\end{abstract}
\maketitle

\section{Introduction}
The design of sequences with low autocorrelation and sequence pairs with low mutual crosscorrelation is a central mathematical problem in engineering, as it is crucial for a host of applications, including radar and communications networks \cite{Sarwate-Pursley-1980-Crosscorrelation,Golomb-Gong-2005-Signal,Schroeder-2006-Number}.
This paper investigates the aperiodic autocorrelation and crosscorrelation properties of sequences derived from linear combinations of finite field characters.
We are primarily interested in {\it binary sequences}, that is, sequences whose terms are elements of $\{-1,1\}$, although many of the results here apply to {\it unimodular sequences}, that is, sequences whose terms are unimodular (magnitude $1$) complex numbers, typically roots of unity.

Let us first define aperiodic correlation.
If $f=(f_0,\ldots,f_{\ell-1})$ and $g=(g_0,\ldots,g_{\ell-1})$ are sequences of complex numbers of length $\ell$, and if $s \in \Z$, then the {\it aperiodic crosscorrelation of $f$ with $g$ at shift $s$} is
\begin{equation}\label{Agnes}
C_{f,g}(s)=\sum_{j \in \Z} f_j \conj{g_{j+s}}
\end{equation}
where we use the convention that $f_j=g_j=0$ whenever $j\not\in\{0,1,\ldots,\ell-1\}$.  Note that $C_{f,g}(s)=0$ whenever $|s| \geq \ell$.

The {\it aperiodic autocorrelation of $f$ at shift $s$} is just $C_{f,f}(s)$.  The terms of our sequences are usually complex numbers of unit magnitude, in which case autocorrelation at shift zero becomes the length of the sequence, that is, $C_{f,f}(0)=C_{g,g}(0)=\ell$.

Our measures of correlation performance are based on mean squared magnitude of correlation values.  We have the {\it crosscorrelation demerit factor} of sequence pair $(f,g)$, which is
\begin{equation}\label{Celeste}
\CDF(f,g) = \frac{\sum_{s \in \Z} |C_{f,g}(s)|^2}{|C_{f,f}(0)| \cdot |C_{g,g}(0)|}.
\end{equation}
This ratio is small when the mean square magnitude of crosscorrelation is low, so a low demerit factor is desirable.  The {\it crosscorrelation merit factor} is the reciprocal, that is, $\CMF(f,g)=1/\CDF(f,g)$.  Similarly, we have the {\it autocorrelation demerit factor} of a sequence $f$, which is
\[
\DF(f)=\CDF(f,f)-1=\frac{\sums{s \in \Z \\ s\not=0} |C_{f,f}(s)|^2}{|C_{f,f}(0)|^2},
\]
and its reciprocal is the {\it autocorrelation merit factor}, $\MF(f)=1/\DF(f)$.  The autocorrelation merit factor $\MF(f)$ was originally defined by Golay \cite{Golay-1972-Class}.
Following his definition, we have excluded $|C_{f,f}(0)|^2$ from the numerator in $\DF(f)$, which accounts for the subtraction of $1$ in $\DF(f)=\CDF(f)-1$.
The demerit factors are more tractable for mathematical analysis, while the merit factors usually give a better intuitive sense of performance, because large merit factors are desirable.

In general it is difficult to derive explicit formulae for autocorrelation and crosscorrelation merit factors of sequences, but sometimes one can obtain formulae for {\it asymptotic merit factors}, which are limiting values as the length of the sequences tends to infinity.
Sarwate \cite[eqs.~(13),(38)]{Sarwate-1984-Mean} found that randomly selected binary sequences of length $\ell$ have average autocorrelation demerit factor $1-1/\ell$ and average crosscorrelation demerit factor $1$.  So average demerit factors tend to $1$ in the limit as $\ell\to\infty$ both for autocorrelation and crosscorrelation.

Pursley and Sarwate \cite[eqs.~(3),(4)]{Pursley-Sarwate-1976-Bounds} proved a bound that relates the autocorrelation and crosscorrelation performance of a sequence pair:
\[
1-\sqrt{\DF(f)\DF(g)} \leq \CDF(f,g) \leq 1+\sqrt{\DF(f)\DF(g)}.
\]
This bound is derived from the Cauchy-Schwarz inequality.  We define the {\it Pursley-Sarwate Criterion} to be the quantity
\begin{equation}\label{Penelope}
\PSC(f,g)=\sqrt{\DF(f)\DF(g)}+\CDF(f,g),
\end{equation}
which cannot be less than $1$.  A value of this criterion close to $1$ is a sign of a sequence pair where the individual sequences have low crosscorrelation and their mutual crosscorrelation is also low.  In view of Sarwate's calculations of average demerit factors summarized in the previous paragraph, we expect randomly selected binary sequence pairs to have a Pursley-Sarwate Criterion of about $2$.

The highest known asymptotic autocorrelation merit factor for binary sequences is slightly higher than $6.34$.  This is achieved by sequences derived from finite field characters, specifically quadratic characters (also known as Legendre symbols).  See \cite[Theorem 1.1]{Jedwab-Katz-Schmidt-2013-Littlewood}, \cite[Theorem 1.5]{Katz-2013-Asymptotic}, and \cite{Jedwab-Katz-Schmidt-2013-Advances} for details.
Other sequences with good correlation properties derived from finite field characters include the maximal linear recursive sequences (m-sequences), which are used extensively in radar and communications networks.
Each of these sequences is derived from the values of a single character of a finite field.
The aperiodic autocorrelation properties of sequences derived from single finite field characters have been studied extensively \cite{Turyn-1960-Optimum,Sarwate-Pursley-1980-Crosscorrelation,Golay-1983-Merit,Sarwate-1984-Mean,Sarwate-1984-Upper,Hoholdt-Jensen-1988-Determination,Jensen-Hoholdt-1989-Binary,Jensen-Jensen-Hoholdt-1991-Merit,Kirilusha-Narayanaswamy-1999-Construction,Borwein-Choi-2000-Merit,Borwein-Choi-2002-Explicit,Borwein-Choi-Jedwab-2004-Binary,Jedwab-Schmidt-2010-Appended,Jedwab-Katz-Schmidt-2013-Littlewood,Jedwab-Katz-Schmidt-2013-Advances,Katz-2013-Asymptotic}.
Aperiodic crosscorrelation, although a more difficult problem, has also been studied \cite{Pursley-Sarwate-1976-Bounds, Sarwate-Pursley-1980-Crosscorrelation,Sarwate-1984-Mean,Karkkainen-1992-Mean,Katz-2016-Aperiodic}, and sequences derived from single characters have been found that simultaneously have autocorrelation and crosscorrelation performance superior to randomly selected sequences \cite[Sections II.E,III.D,III.E,IV.D]{Katz-2016-Aperiodic}.

In this paper we investigate sequences whose values come from linear combinations of multiplicative characters of finite fields.  As such, these include sequences like those derived from Legendre symbols that achieve the current record high value for asymptotic autocorrelation merit factor.  But we show that allowing the combination of two or more characters allows for correlation performance not attainable with sequences that only derive from a single character.  To provide a concrete example, we shall focus especially on sequences derived from quartic characters.

Let $p$ be a prime with $p\equiv 1 \pmod{4}$ and let $\Fp$ be the finite field of order $p$.  Let $\alpha_p$ be a primitive element of the multiplicative group $\Fpu$.  Then $\Fpu$ is partitioned into four classes of $(p-1)/4$ elements each: $R_0$, $R_1$, $R_2$, and $R_3$, where $R_j=\{\alpha_p^{4 k+j} : k \in \Z\}$.
For $x \in \Fp$, we define
\begin{align}
F_p(x) & = \begin{cases}
+1 & \text{if $x \in \{0\} \cup R_0 \cup R_1$}, \\
-1 & \text{if $x \in R_2 \cup R_3$},
\end{cases}\nonumber\\
G_p(x) & = \begin{cases}
+1 & \text{if $x \in \{0\} \cup R_0 \cup R_3$}, \\
-1 & \text{if $x \in R_1 \cup R_2$},
\end{cases}\label{Nelson}\\
H_p(x) & = \begin{cases}
+1 & \text{if $x \in \{0\} \cup R_0 \cup R_2$}, \\
-1 & \text{if $x \in R_1 \cup R_3$}.
\end{cases}\nonumber
\end{align}
These functions form the foundation of sequences with good correlation properties.
The sequence $(H_p(0),H_p(1),\ldots,H_p(p-1))$ is called the {\it Legendre sequence} of length $p$.
Note that $H_p(a)=+1$ if $a$ is a square in $\Fp$, and $H_p(a)=-1$ if $a$ is a nonsquare.  This is the value of the quadratic character (Legendre symbol) applied to $a$, except when $a=0$, which the true Legendre symbol maps to $0$.

More generally, we define the sequences
\begin{align}
\fpslu &=(F_p(s),F_p(s+1),\ldots,F_p(s+\ell-1)) \nonumber \\
\gpslu &=(G_p(s),G_p(s+1),\ldots,G_p(s+\ell-1)) \label{Elizabeth} \\
\hpslu &=(H_p(s),H_p(s+1),\ldots,H_p(s+\ell-1)) \nonumber,
\end{align}
where for any $j \in \Z$, we read $F_p(j)$ (or $G_p(j)$ or $H_p(j)$) by first reducing $j$ modulo $p$ to obtain an element $a \in \Fp$, and then using the value of $F_p(a)$ (or $G_p(a)$ or $H_p(a)$) defined above.
Thus the Legendre sequence of length $p$ is $\hpzpu$.  We note that $\hpspu$ is just the Legendre sequence cyclically shifted $s$ places to the left, while the general $\hpslu$ produces truncated (if $\ell < p$) or appended (if $\ell > p$) versions of $\hpspu$.  Suitably chosen families of sequences $\hpspu$ achieve the highest asymptotic autocorrelation merit factor (of $6$) known up until a few years ago: this record was proved by H\o holdt and Jensen \cite{Hoholdt-Jensen-1988-Determination} in 1988.  Recently Jedwab, Schmidt, and the second author \cite[Theorem 1.1]{Jedwab-Katz-Schmidt-2013-Littlewood} proved that an asymptotic autocorrelation merit factor slightly greater than $6.34$ is achievable with suitably chosen families of sequence $\hpslu$ with $\ell > p$.

This paper investigates a broad class of sequences that includes $\fpslu$, $\gpslu$, and $\hpslu$.
Boehmer \cite{Boehmer-1967-Binary} studied sequences such as $\fpspu$ and $\gpspu$ because she found them to have good periodic and aperiodic autocorrelation properties.
Ding, Helleseth, and Lam \cite{Ding-Helleseth-Lam-2000-Duadic} later studied their periodic autocorrelation and crosscorrelation properties.
The functions $F_p$, $G_p$, and $H_p$ provide three ways of assigning signs to the four classes $R_0$, $R_1$, $R_2$, and $R_3$.  The other ways of assigning signs to the four classes either produce sequences that are (except for the zeroth term) the negations of $\fpslu$, $\gpslu$, or $\hpslu$ (and thus have very similar correlation properties), or else produce highly unbalanced sequences (having many more $+1$ terms than $-1$ terms, or vice versa) that have poor correlation properties.

We now show that the functions $F_p$ and $G_p$ can be derived from linear combinations of multiplicative characters of $\Fp$.
When $p\equiv 1 \pmod{4}$, there are two quartic characters of $\Fp$, that is, group homomorphisms from $\Fpu$ onto the group $\{\pm 1,\pm i\}$ of fourth roots of unity in $\C$.
The quartic characters are $\theta_p\colon \Fpu \to \C$ with $\theta_p(\alpha_p^k)=i^k$  and $\conj{\theta_p} \colon\Fpu \to \C$ with $\conj{\theta_p}(\alpha_p^k)=\conj{i^k}=i^{-k}$.
Then one can check that for any $x \in \Fpu$, we have
\begin{align*}
F_p(x) & = \frac{1-i}{2} \theta_p(x)+\frac{1+i}{2} \conj{\theta_p}(x) \\
G_p(x) & = \frac{1+i}{2} \theta_p(x)+\frac{1-i}{2} \conj{\theta_p}(x),
\end{align*}
while we have decreed that $F_p(0)=G_p(0)=1$.  Although it is customary to decree that multiplicative characters like $\theta_p$ and $\conj{\theta_p}$ take $0$ to $0$, this would fail to produce a binary sequence.

Theorems \ref{Anne} and \ref{Rebecca} of this paper show that our sequences $\fpspu$ and $\gpspu$ behave similarly to shifted Legendre sequences $\hpspu$ in that their autocorrelation performance improves if they are cyclically shifted by approximately $1/4$ (or approximately $3/4$) of their length, as was observed in \cite{Golay-1983-Merit} and proved asymptotically in \cite{Hoholdt-Jensen-1988-Determination}.
Therefore, we use sequences $\fpnat=f_p^{(p-1)/4,p}$ and $\gpnat=g_p^{(p-1)/4,p}$ to obtain low autocorrelation similar to that of $\hpnat=h_p^{(p-1)/4,p}$.
The ``$\nat$'' superscript is to remind us that these sequences are of ``natural'' length: neither truncated nor appended.
There is an additional issue in the autocorrelation of $\fpnat$ and $\gpnat$ that does not appear for the shifted Legendre sequences $\hpnat$.
In Theorems \ref{Anne} and \ref{Christopher} we show that the asymptotic autocorrelation demerit factors for $\fpnat$ and $\gpnat$, and the asymptotic crosscorrelation demerit factor for the pair $(\fpnat,\gpnat)$ depend intimately on number-theoretic properties of the prime $p$, that is, the order of the finite field upon which these sequences are derived.

Every prime $p$ with $p\equiv 1 \pmod{4}$ can be expressed uniquely as $p=a^2+b^2$ where $a, b$ are positive integers with $a$ odd and $b$ even: equivalently we can say that $p$ factors uniquely in the ring $\Z[i]$ of Gaussian integers as $p=(a+b i)(a-b i)$ with $a+b i$ in the first quadrant of the complex plane.
This means that there is a unique $\gamma_p \in (0,\pi/2)$ such that $a=\sqrt{p}\cos\gamma_p$ and $b=\sqrt{p}\sin\gamma_p$.
Then we show that the autocorrelation demerit factors of both $\fpnat$ and $\gpnat$ are approximately $\frac{1}{2} -\frac{\cos(2\gamma_p)}{3}$ and the crosscorrelation demerit factor of the pair $(\fpnat,\gpnat)$ is approximately $\frac{2}{3}+\frac{\cos(2\gamma_p)}{3}$.

Lemma \ref{Theresa} below shows that the values of $\gamma_p$ are equidistributed in the interval $(0,\pi/2)$.  So for any $\gamma \in (0,\pi/2)$, there is an increasing sequence of primes $p$ such that $\gamma_p \to \gamma$ as $p\to\infty$, and then in this limit
\begin{align}
\DF(\fpnat) & \to \frac{1}{2}-\frac{\cos(2\gamma)}{3} \nonumber \\
\DF(\gpnat) & \to \frac{1}{2}-\frac{\cos(2\gamma)}{3} \label{Janice} \\
\CDF(\fpnat,\gpnat) & \to \frac{2}{3}+\frac{\cos(2\gamma)}{3}, \nonumber \\
\PSC(\fpnat,\gpnat) & \to \frac{7}{6}, \nonumber
\end{align}
where one should recall the definition of the Pursley-Sarwate Criterion, $\PSC(f,g)$ from \eqref{Penelope}, and remember that this criterion cannot be less than $1$, and will typically be about $2$ for a randomly selected sequence pair.
The limits in \eqref{Janice} are proved as specific cases of Theorems \ref{Anne} (using the parameters $\Lambda=1$ and $R=1/4$) and \ref{Christopher} (using the parameter $\Lambda=1$).

Note that there is a tradeoff between autocorrelation and crosscorrelation: the asymptotic Pursley-Sarwate Criterion is always $7/6$, so lowering crosscorrelation requires a concomitant rise in autocorrelation.  On one extreme, we can obtain asymptotic autocorrelation merit factor of $6$ (demerit factor $1/6$) and crosscorrelation merit factor of $1$ (demerit factor $1$).  And on the other extreme, we can obtain asymptotic autocorrelation merit factor $6/5$ (demerit factor $5/6$) and asymptotic crosscorrelation merit factor $3$ (demerit factor $1/3$).
In the middle, we can obtain sequence pairs with asymptotic autocorrelation merit factor $3$ and crosscorrelation merit factor $6/5$, equal to the best that can be achieved with m-sequences (see \cite[Section II.E]{Katz-2016-Aperiodic}), but the extreme of autocorrelation merit factor $6$ and crosscorrelation merit factor $1$ (and much of the range in between) is inaccessible to any previously known sequence pair construction.  Throughout the range, our asymptotic autocorrelation merit factors and crosscorrelation merit factor are always better than $1$ (except at the one extreme where the asymptotic crosscorrelation merit factor is $1$).  So these sequence pairs have superior correlation performance to pairs of randomly selected sequences (which have average autocorrelation and crosscorrelation demerit factors that tend to $1$ as length tends to infinity per Sarwate \cite[eqs.~(13),(38)]{Sarwate-1984-Mean}).

In Figure \ref{Aaron}, we show the dependence of autocorrelation and crosscorrelation demerit factors of $\fpnat$ and $\gpnat$ on $\cos(2\gamma_p)$.  The lines indicate the asymptotic values calculated in Theorems \ref{Anne} and \ref{Christopher}, while the data points show the actual values for the sequence pairs $(\fpnat,\gpnat)$ for all primes $p$ with $p\equiv 1 \pmod{4}$ and $p < 2000$.  To avoid clutter, we have only plotted the autocorrelation demerit factors of $\fpnat$: the values for $\gpnat$ are similar.
Note that the data points are close to the asymptotic values.
The few exceptions come from very short sequences.

\begin{center}
\begin{figure}
\begin{center}
\caption{Demerit factors of quartic residue sequences as a function of $\cos(2\gamma_p)$: autocorrelation (plusses, dashed line) and crosscorrelation (filled circles, solid line)}\label{Aaron}
\begin{tikzpicture}[gnuplot]
\path (0.000,0.000) rectangle (12.446,9.398);
\gpcolor{color=gp lt color border}
\gpsetlinetype{gp lt border}
\gpsetlinewidth{1.00}
\draw[gp path] (1.504,0.985)--(1.684,0.985);
\draw[gp path] (11.893,0.985)--(11.713,0.985);
\node[gp node right] at (1.320,0.985) { 0};
\draw[gp path] (1.504,2.326)--(1.684,2.326);
\draw[gp path] (11.893,2.326)--(11.713,2.326);
\node[gp node right] at (1.320,2.326) { 0.2};
\draw[gp path] (1.504,3.666)--(1.684,3.666);
\draw[gp path] (11.893,3.666)--(11.713,3.666);
\node[gp node right] at (1.320,3.666) { 0.4};
\draw[gp path] (1.504,5.007)--(1.684,5.007);
\draw[gp path] (11.893,5.007)--(11.713,5.007);
\node[gp node right] at (1.320,5.007) { 0.6};
\draw[gp path] (1.504,6.348)--(1.684,6.348);
\draw[gp path] (11.893,6.348)--(11.713,6.348);
\node[gp node right] at (1.320,6.348) { 0.8};
\draw[gp path] (1.504,7.688)--(1.684,7.688);
\draw[gp path] (11.893,7.688)--(11.713,7.688);
\node[gp node right] at (1.320,7.688) { 1};
\draw[gp path] (1.504,9.029)--(1.684,9.029);
\draw[gp path] (11.893,9.029)--(11.713,9.029);
\node[gp node right] at (1.320,9.029) { 1.2};
\draw[gp path] (1.504,0.985)--(1.504,1.165);
\draw[gp path] (1.504,9.029)--(1.504,8.849);
\node[gp node center] at (1.504,0.677) {-1};
\draw[gp path] (4.101,0.985)--(4.101,1.165);
\draw[gp path] (4.101,9.029)--(4.101,8.849);
\node[gp node center] at (4.101,0.677) {-0.5};
\draw[gp path] (6.699,0.985)--(6.699,1.165);
\draw[gp path] (6.699,9.029)--(6.699,8.849);
\node[gp node center] at (6.699,0.677) { 0};
\draw[gp path] (9.296,0.985)--(9.296,1.165);
\draw[gp path] (9.296,9.029)--(9.296,8.849);
\node[gp node center] at (9.296,0.677) { 0.5};
\draw[gp path] (11.893,0.985)--(11.893,1.165);
\draw[gp path] (11.893,9.029)--(11.893,8.849);
\node[gp node center] at (11.893,0.677) { 1};
\draw[gp path] (1.504,9.029)--(1.504,0.985)--(11.893,0.985)--(11.893,9.029)--cycle;
\node[gp node center,rotate=-270] at (0.246,5.007) {Demerit Factor};
\node[gp node center] at (6.698,0.215) {$\cos(2\gamma_p)$};
\gpcolor{rgb color={0.000,0.000,0.000}}
\gpsetlinetype{gp lt plot 2}
\gpsetlinewidth{2.00}
\draw[gp path] (1.504,6.571)--(1.609,6.526)--(1.714,6.481)--(1.819,6.436)--(1.924,6.391)%
  --(2.029,6.345)--(2.134,6.300)--(2.239,6.255)--(2.344,6.210)--(2.448,6.165)--(2.553,6.120)%
  --(2.658,6.075)--(2.763,6.029)--(2.868,5.984)--(2.973,5.939)--(3.078,5.894)--(3.183,5.849)%
  --(3.288,5.804)--(3.393,5.759)--(3.498,5.713)--(3.603,5.668)--(3.708,5.623)--(3.813,5.578)%
  --(3.918,5.533)--(4.023,5.488)--(4.127,5.443)--(4.232,5.397)--(4.337,5.352)--(4.442,5.307)%
  --(4.547,5.262)--(4.652,5.217)--(4.757,5.172)--(4.862,5.127)--(4.967,5.081)--(5.072,5.036)%
  --(5.177,4.991)--(5.282,4.946)--(5.387,4.901)--(5.492,4.856)--(5.597,4.811)--(5.702,4.765)%
  --(5.807,4.720)--(5.911,4.675)--(6.016,4.630)--(6.121,4.585)--(6.226,4.540)--(6.331,4.495)%
  --(6.436,4.450)--(6.541,4.404)--(6.646,4.359)--(6.751,4.314)--(6.856,4.269)--(6.961,4.224)%
  --(7.066,4.179)--(7.171,4.134)--(7.276,4.088)--(7.381,4.043)--(7.486,3.998)--(7.590,3.953)%
  --(7.695,3.908)--(7.800,3.863)--(7.905,3.818)--(8.010,3.772)--(8.115,3.727)--(8.220,3.682)%
  --(8.325,3.637)--(8.430,3.592)--(8.535,3.547)--(8.640,3.502)--(8.745,3.456)--(8.850,3.411)%
  --(8.955,3.366)--(9.060,3.321)--(9.165,3.276)--(9.270,3.231)--(9.374,3.186)--(9.479,3.140)%
  --(9.584,3.095)--(9.689,3.050)--(9.794,3.005)--(9.899,2.960)--(10.004,2.915)--(10.109,2.870)%
  --(10.214,2.824)--(10.319,2.779)--(10.424,2.734)--(10.529,2.689)--(10.634,2.644)--(10.739,2.599)%
  --(10.844,2.554)--(10.949,2.508)--(11.053,2.463)--(11.158,2.418)--(11.263,2.373)--(11.368,2.328)%
  --(11.473,2.283)--(11.578,2.238)--(11.683,2.193)--(11.788,2.147)--(11.893,2.102);
\gpsetlinetype{gp lt plot 0}
\draw[gp path] (1.504,3.219)--(1.609,3.265)--(1.714,3.310)--(1.819,3.355)--(1.924,3.400)%
  --(2.029,3.445)--(2.134,3.490)--(2.239,3.535)--(2.344,3.581)--(2.448,3.626)--(2.553,3.671)%
  --(2.658,3.716)--(2.763,3.761)--(2.868,3.806)--(2.973,3.851)--(3.078,3.897)--(3.183,3.942)%
  --(3.288,3.987)--(3.393,4.032)--(3.498,4.077)--(3.603,4.122)--(3.708,4.167)--(3.813,4.213)%
  --(3.918,4.258)--(4.023,4.303)--(4.127,4.348)--(4.232,4.393)--(4.337,4.438)--(4.442,4.483)%
  --(4.547,4.529)--(4.652,4.574)--(4.757,4.619)--(4.862,4.664)--(4.967,4.709)--(5.072,4.754)%
  --(5.177,4.799)--(5.282,4.844)--(5.387,4.890)--(5.492,4.935)--(5.597,4.980)--(5.702,5.025)%
  --(5.807,5.070)--(5.911,5.115)--(6.016,5.160)--(6.121,5.206)--(6.226,5.251)--(6.331,5.296)%
  --(6.436,5.341)--(6.541,5.386)--(6.646,5.431)--(6.751,5.476)--(6.856,5.522)--(6.961,5.567)%
  --(7.066,5.612)--(7.171,5.657)--(7.276,5.702)--(7.381,5.747)--(7.486,5.792)--(7.590,5.838)%
  --(7.695,5.883)--(7.800,5.928)--(7.905,5.973)--(8.010,6.018)--(8.115,6.063)--(8.220,6.108)%
  --(8.325,6.154)--(8.430,6.199)--(8.535,6.244)--(8.640,6.289)--(8.745,6.334)--(8.850,6.379)%
  --(8.955,6.424)--(9.060,6.470)--(9.165,6.515)--(9.270,6.560)--(9.374,6.605)--(9.479,6.650)%
  --(9.584,6.695)--(9.689,6.740)--(9.794,6.786)--(9.899,6.831)--(10.004,6.876)--(10.109,6.921)%
  --(10.214,6.966)--(10.319,7.011)--(10.424,7.056)--(10.529,7.102)--(10.634,7.147)--(10.739,7.192)%
  --(10.844,7.237)--(10.949,7.282)--(11.053,7.327)--(11.158,7.372)--(11.263,7.417)--(11.368,7.463)%
  --(11.473,7.508)--(11.578,7.553)--(11.683,7.598)--(11.788,7.643)--(11.893,7.688);
\gpsetlinewidth{1.00}
\gpsetpointsize{2.00}
\gppoint{gp mark 1}{(3.582,6.348)}
\gppoint{gp mark 1}{(8.696,3.682)}
\gppoint{gp mark 1}{(2.115,5.995)}
\gppoint{gp mark 1}{(10.460,3.185)}
\gppoint{gp mark 1}{(1.785,6.920)}
\gppoint{gp mark 1}{(7.839,4.143)}
\gppoint{gp mark 1}{(11.109,2.426)}
\gppoint{gp mark 1}{(5.762,5.273)}
\gppoint{gp mark 1}{(2.785,6.208)}
\gppoint{gp mark 1}{(4.422,5.392)}
\gppoint{gp mark 1}{(10.179,2.997)}
\gppoint{gp mark 1}{(1.607,6.655)}
\gppoint{gp mark 1}{(2.362,5.894)}
\gppoint{gp mark 1}{(6.009,4.580)}
\gppoint{gp mark 1}{(10.680,2.708)}
\gppoint{gp mark 1}{(4.921,4.940)}
\gppoint{gp mark 1}{(9.511,3.273)}
\gppoint{gp mark 1}{(11.653,2.231)}
\gppoint{gp mark 1}{(6.153,4.639)}
\gppoint{gp mark 1}{(4.142,5.681)}
\gppoint{gp mark 1}{(1.557,6.402)}
\gppoint{gp mark 1}{(11.712,2.243)}
\gppoint{gp mark 1}{(9.039,3.566)}
\gppoint{gp mark 1}{(11.203,2.346)}
\gppoint{gp mark 1}{(1.544,6.443)}
\gppoint{gp mark 1}{(8.031,4.041)}
\gppoint{gp mark 1}{(4.542,5.509)}
\gppoint{gp mark 1}{(2.428,6.089)}
\gppoint{gp mark 1}{(11.751,2.192)}
\gppoint{gp mark 1}{(7.113,4.194)}
\gppoint{gp mark 1}{(5.470,4.718)}
\gppoint{gp mark 1}{(4.001,5.729)}
\gppoint{gp mark 1}{(2.248,6.103)}
\gppoint{gp mark 1}{(10.009,2.904)}
\gppoint{gp mark 1}{(2.869,6.087)}
\gppoint{gp mark 1}{(9.222,3.152)}
\gppoint{gp mark 1}{(10.951,2.551)}
\gppoint{gp mark 1}{(1.530,6.920)}
\gppoint{gp mark 1}{(1.733,6.544)}
\gppoint{gp mark 1}{(7.056,4.192)}
\gppoint{gp mark 1}{(8.438,3.674)}
\gppoint{gp mark 1}{(2.638,5.945)}
\gppoint{gp mark 1}{(11.529,2.299)}
\gppoint{gp mark 1}{(9.639,3.054)}
\gppoint{gp mark 1}{(2.014,6.169)}
\gppoint{gp mark 1}{(3.917,5.694)}
\gppoint{gp mark 1}{(9.973,2.834)}
\gppoint{gp mark 1}{(8.237,3.730)}
\gppoint{gp mark 1}{(4.590,5.405)}
\gppoint{gp mark 1}{(1.522,6.698)}
\gppoint{gp mark 1}{(10.772,2.663)}
\gppoint{gp mark 1}{(1.936,6.543)}
\gppoint{gp mark 1}{(6.402,4.221)}
\gppoint{gp mark 1}{(7.582,4.094)}
\gppoint{gp mark 1}{(11.634,2.240)}
\gppoint{gp mark 1}{(4.193,5.496)}
\gppoint{gp mark 1}{(11.327,2.384)}
\gppoint{gp mark 1}{(9.670,3.148)}
\gppoint{gp mark 1}{(1.519,6.622)}
\gppoint{gp mark 1}{(1.875,6.615)}
\gppoint{gp mark 1}{(4.801,5.065)}
\gppoint{gp mark 1}{(11.836,2.129)}
\gppoint{gp mark 1}{(2.616,5.978)}
\gppoint{gp mark 1}{(6.432,4.382)}
\gppoint{gp mark 1}{(9.948,2.973)}
\gppoint{gp mark 1}{(5.388,4.888)}
\gppoint{gp mark 1}{(3.081,5.889)}
\gppoint{gp mark 1}{(1.825,6.555)}
\gppoint{gp mark 1}{(9.413,3.122)}
\gppoint{gp mark 1}{(10.640,2.678)}
\gppoint{gp mark 1}{(7.947,3.748)}
\gppoint{gp mark 1}{(11.699,2.204)}
\gppoint{gp mark 1}{(11.467,2.276)}
\gppoint{gp mark 1}{(8.874,3.433)}
\gppoint{gp mark 1}{(7.420,4.118)}
\gppoint{gp mark 1}{(5.507,4.924)}
\gppoint{gp mark 1}{(10.789,2.555)}
\gppoint{gp mark 1}{(3.346,5.679)}
\gppoint{gp mark 1}{(11.723,2.201)}
\gppoint{gp mark 1}{(11.518,2.261)}
\gppoint{gp mark 1}{(3.821,5.633)}
\gppoint{gp mark 1}{(6.929,4.276)}
\gppoint{gp mark 1}{(2.735,5.988)}
\gppoint{gp mark 1}{(1.595,6.585)}
\gppoint{gp mark 1}{(1.752,6.433)}
\gppoint{gp mark 1}{(10.914,2.493)}
\gppoint{gp mark 1}{(3.146,5.894)}
\gppoint{gp mark 1}{(11.855,2.102)}
\gppoint{gp mark 1}{(9.469,3.249)}
\gppoint{gp mark 1}{(7.359,4.061)}
\gppoint{gp mark 1}{(5.606,4.830)}
\gppoint{gp mark 1}{(8.212,3.664)}
\gppoint{gp mark 1}{(11.316,2.325)}
\gppoint{gp mark 1}{(1.724,6.490)}
\gppoint{gp mark 1}{(2.976,6.036)}
\gppoint{gp mark 1}{(6.910,4.209)}
\gppoint{gp mark 1}{(7.748,3.894)}
\gppoint{gp mark 1}{(9.708,3.099)}
\gppoint{gp mark 1}{(11.859,2.119)}
\gppoint{gp mark 1}{(2.184,6.209)}
\gppoint{gp mark 1}{(3.376,5.836)}
\gppoint{gp mark 1}{(2.488,6.227)}
\gppoint{gp mark 1}{(11.377,2.336)}
\gppoint{gp mark 1}{(1.512,6.551)}
\gppoint{gp mark 1}{(6.495,4.359)}
\gppoint{gp mark 1}{(1.701,6.509)}
\gppoint{gp mark 1}{(8.840,3.435)}
\gppoint{gp mark 1}{(11.863,2.126)}
\gppoint{gp mark 1}{(3.197,5.876)}
\gppoint{gp mark 1}{(6.112,4.636)}
\gppoint{gp mark 1}{(5.350,4.951)}
\gppoint{gp mark 1}{(11.429,2.364)}
\gppoint{gp mark 1}{(1.568,6.632)}
\gppoint{gp mark 1}{(10.097,2.884)}
\gppoint{gp mark 1}{(9.102,3.319)}
\gppoint{gp mark 1}{(1.845,6.533)}
\gppoint{gp mark 1}{(9.720,3.035)}
\gppoint{gp mark 1}{(5.043,5.056)}
\gppoint{gp mark 1}{(4.373,5.108)}
\gppoint{gp mark 1}{(1.510,6.533)}
\gppoint{gp mark 1}{(1.562,6.399)}
\gppoint{gp mark 1}{(2.592,6.058)}
\gppoint{gp mark 1}{(11.252,2.362)}
\gppoint{gp mark 1}{(7.603,3.967)}
\gppoint{gp mark 1}{(3.767,5.622)}
\gppoint{gp mark 1}{(2.905,6.070)}
\gppoint{gp mark 1}{(9.905,2.908)}
\gppoint{gp mark 1}{(11.795,2.163)}
\gppoint{gp mark 1}{(8.951,3.402)}
\gppoint{gp mark 1}{(2.234,6.258)}
\gppoint{gp mark 1}{(3.237,5.814)}
\gppoint{gp mark 1}{(6.522,4.448)}
\gppoint{gp mark 1}{(5.824,4.814)}
\gppoint{gp mark 1}{(10.396,2.725)}
\gppoint{gp mark 1}{(1.649,6.558)}
\gppoint{gp mark 1}{(8.570,3.506)}
\gppoint{gp mark 1}{(6.869,4.260)}
\gppoint{gp mark 1}{(7.544,3.977)}
\gppoint{gp mark 1}{(10.808,2.598)}
\gppoint{gp mark 1}{(3.093,5.981)}
\gppoint{gp mark 1}{(8.199,3.710)}
\gppoint{gp mark 1}{(11.545,2.280)}
\gppoint{gp mark 1}{(2.412,6.125)}
\gppoint{gp mark 1}{(11.360,2.333)}
\gppoint{gp mark 1}{(4.289,5.292)}
\gppoint{gp mark 1}{(11.142,2.414)}
\gppoint{gp mark 1}{(5.879,4.665)}
\gppoint{gp mark 7}{(3.582,3.398)}
\gppoint{gp mark 7}{(8.696,7.212)}
\gppoint{gp mark 7}{(2.115,4.627)}
\gppoint{gp mark 7}{(10.460,6.796)}
\gppoint{gp mark 7}{(1.785,3.536)}
\gppoint{gp mark 7}{(7.839,5.886)}
\gppoint{gp mark 7}{(11.109,7.641)}
\gppoint{gp mark 7}{(5.762,4.964)}
\gppoint{gp mark 7}{(2.785,3.824)}
\gppoint{gp mark 7}{(4.422,4.906)}
\gppoint{gp mark 7}{(10.179,7.201)}
\gppoint{gp mark 7}{(1.607,3.333)}
\gppoint{gp mark 7}{(2.362,3.969)}
\gppoint{gp mark 7}{(6.009,5.387)}
\gppoint{gp mark 7}{(10.680,7.407)}
\gppoint{gp mark 7}{(4.921,4.892)}
\gppoint{gp mark 7}{(9.511,6.818)}
\gppoint{gp mark 7}{(11.653,7.527)}
\gppoint{gp mark 7}{(6.153,5.336)}
\gppoint{gp mark 7}{(4.142,4.261)}
\gppoint{gp mark 7}{(1.557,3.413)}
\gppoint{gp mark 7}{(11.712,7.665)}
\gppoint{gp mark 7}{(9.039,6.504)}
\gppoint{gp mark 7}{(11.203,7.269)}
\gppoint{gp mark 7}{(1.544,3.439)}
\gppoint{gp mark 7}{(8.031,5.909)}
\gppoint{gp mark 7}{(4.542,4.408)}
\gppoint{gp mark 7}{(2.428,3.624)}
\gppoint{gp mark 7}{(11.751,7.484)}
\gppoint{gp mark 7}{(7.113,5.788)}
\gppoint{gp mark 7}{(5.470,5.070)}
\gppoint{gp mark 7}{(4.001,4.208)}
\gppoint{gp mark 7}{(2.248,3.547)}
\gppoint{gp mark 7}{(10.009,7.036)}
\gppoint{gp mark 7}{(2.869,3.775)}
\gppoint{gp mark 7}{(9.222,6.558)}
\gppoint{gp mark 7}{(10.951,7.309)}
\gppoint{gp mark 7}{(1.530,3.026)}
\gppoint{gp mark 7}{(1.733,3.252)}
\gppoint{gp mark 7}{(7.056,5.631)}
\gppoint{gp mark 7}{(8.438,6.293)}
\gppoint{gp mark 7}{(2.638,3.871)}
\gppoint{gp mark 7}{(11.529,7.574)}
\gppoint{gp mark 7}{(9.639,6.733)}
\gppoint{gp mark 7}{(2.014,3.613)}
\gppoint{gp mark 7}{(3.917,4.184)}
\gppoint{gp mark 7}{(9.973,6.859)}
\gppoint{gp mark 7}{(8.237,6.157)}
\gppoint{gp mark 7}{(4.590,4.439)}
\gppoint{gp mark 7}{(1.522,3.248)}
\gppoint{gp mark 7}{(10.772,7.096)}
\gppoint{gp mark 7}{(1.936,3.460)}
\gppoint{gp mark 7}{(6.402,5.316)}
\gppoint{gp mark 7}{(7.582,5.697)}
\gppoint{gp mark 7}{(11.634,7.577)}
\gppoint{gp mark 7}{(4.193,4.298)}
\gppoint{gp mark 7}{(11.327,7.428)}
\gppoint{gp mark 7}{(9.670,6.690)}
\gppoint{gp mark 7}{(1.519,3.260)}
\gppoint{gp mark 7}{(1.875,3.281)}
\gppoint{gp mark 7}{(4.801,4.666)}
\gppoint{gp mark 7}{(11.836,7.732)}
\gppoint{gp mark 7}{(2.616,3.796)}
\gppoint{gp mark 7}{(6.432,5.470)}
\gppoint{gp mark 7}{(9.948,6.904)}
\gppoint{gp mark 7}{(5.388,4.893)}
\gppoint{gp mark 7}{(3.081,3.968)}
\gppoint{gp mark 7}{(1.825,3.327)}
\gppoint{gp mark 7}{(9.413,6.629)}
\gppoint{gp mark 7}{(10.640,7.176)}
\gppoint{gp mark 7}{(7.947,6.072)}
\gppoint{gp mark 7}{(11.699,7.630)}
\gppoint{gp mark 7}{(11.467,7.468)}
\gppoint{gp mark 7}{(8.874,6.500)}
\gppoint{gp mark 7}{(7.420,5.734)}
\gppoint{gp mark 7}{(5.507,4.885)}
\gppoint{gp mark 7}{(10.789,7.254)}
\gppoint{gp mark 7}{(3.346,4.149)}
\gppoint{gp mark 7}{(11.723,7.596)}
\gppoint{gp mark 7}{(11.518,7.583)}
\gppoint{gp mark 7}{(3.821,4.210)}
\gppoint{gp mark 7}{(6.929,5.570)}
\gppoint{gp mark 7}{(2.735,3.764)}
\gppoint{gp mark 7}{(1.595,3.322)}
\gppoint{gp mark 7}{(1.752,3.397)}
\gppoint{gp mark 7}{(10.914,7.309)}
\gppoint{gp mark 7}{(3.146,3.948)}
\gppoint{gp mark 7}{(11.855,7.643)}
\gppoint{gp mark 7}{(9.469,6.663)}
\gppoint{gp mark 7}{(7.359,5.715)}
\gppoint{gp mark 7}{(5.606,4.982)}
\gppoint{gp mark 7}{(8.212,6.128)}
\gppoint{gp mark 7}{(11.316,7.454)}
\gppoint{gp mark 7}{(1.724,3.353)}
\gppoint{gp mark 7}{(2.976,3.801)}
\gppoint{gp mark 7}{(6.910,5.562)}
\gppoint{gp mark 7}{(7.748,5.926)}
\gppoint{gp mark 7}{(9.708,6.642)}
\gppoint{gp mark 7}{(11.859,7.745)}
\gppoint{gp mark 7}{(2.184,3.471)}
\gppoint{gp mark 7}{(3.376,4.019)}
\gppoint{gp mark 7}{(2.488,3.603)}
\gppoint{gp mark 7}{(11.377,7.384)}
\gppoint{gp mark 7}{(1.512,3.251)}
\gppoint{gp mark 7}{(6.495,5.443)}
\gppoint{gp mark 7}{(1.701,3.355)}
\gppoint{gp mark 7}{(8.840,6.428)}
\gppoint{gp mark 7}{(11.863,7.652)}
\gppoint{gp mark 7}{(3.197,3.956)}
\gppoint{gp mark 7}{(6.112,5.240)}
\gppoint{gp mark 7}{(5.350,4.939)}
\gppoint{gp mark 7}{(11.429,7.557)}
\gppoint{gp mark 7}{(1.568,3.307)}
\gppoint{gp mark 7}{(10.097,6.834)}
\gppoint{gp mark 7}{(9.102,6.475)}
\gppoint{gp mark 7}{(1.845,3.278)}
\gppoint{gp mark 7}{(9.720,6.760)}
\gppoint{gp mark 7}{(5.043,4.680)}
\gppoint{gp mark 7}{(4.373,4.520)}
\gppoint{gp mark 7}{(1.510,3.346)}
\gppoint{gp mark 7}{(1.562,3.299)}
\gppoint{gp mark 7}{(2.592,3.695)}
\gppoint{gp mark 7}{(11.252,7.445)}
\gppoint{gp mark 7}{(7.603,5.876)}
\gppoint{gp mark 7}{(3.767,4.170)}
\gppoint{gp mark 7}{(2.905,3.812)}
\gppoint{gp mark 7}{(9.905,6.849)}
\gppoint{gp mark 7}{(11.795,7.647)}
\gppoint{gp mark 7}{(8.951,6.443)}
\gppoint{gp mark 7}{(2.234,3.610)}
\gppoint{gp mark 7}{(3.237,3.934)}
\gppoint{gp mark 7}{(6.522,5.366)}
\gppoint{gp mark 7}{(5.824,5.035)}
\gppoint{gp mark 7}{(10.396,7.044)}
\gppoint{gp mark 7}{(1.649,3.261)}
\gppoint{gp mark 7}{(8.570,6.269)}
\gppoint{gp mark 7}{(6.869,5.561)}
\gppoint{gp mark 7}{(7.544,5.885)}
\gppoint{gp mark 7}{(10.808,7.191)}
\gppoint{gp mark 7}{(3.093,3.849)}
\gppoint{gp mark 7}{(8.199,6.167)}
\gppoint{gp mark 7}{(11.545,7.481)}
\gppoint{gp mark 7}{(2.412,3.638)}
\gppoint{gp mark 7}{(11.360,7.468)}
\gppoint{gp mark 7}{(4.289,4.545)}
\gppoint{gp mark 7}{(11.142,7.349)}
\gppoint{gp mark 7}{(5.879,5.090)}
\gpcolor{color=gp lt color border}
\gpsetlinetype{gp lt border}
\draw[gp path] (1.504,9.029)--(1.504,0.985)--(11.893,0.985)--(11.893,9.029)--cycle;
\gpdefrectangularnode{gp plot 1}{\pgfpoint{1.504cm}{0.985cm}}{\pgfpoint{11.893cm}{9.029cm}}
\end{tikzpicture}
\end{center}
\end{figure}
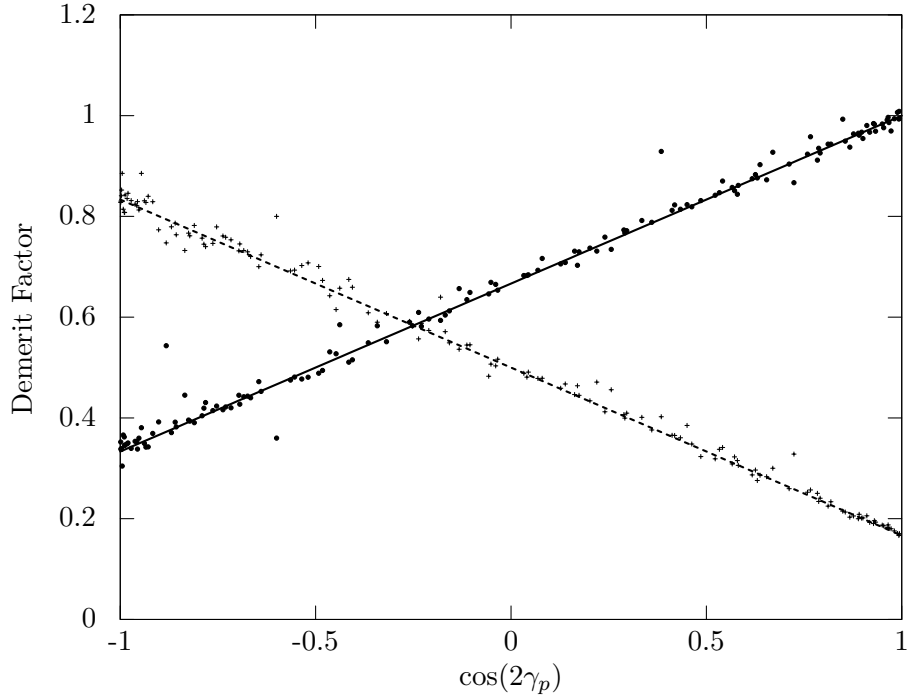
\end{center}

Interestingly, if we crosscorrelate our sequences $\fpnat$ and $\gpnat$ derived from quartic characters with shifted Legendre sequences $\hpnat$, the performance does not depend appreciably on the prime $p$.  This is also in accord with our proof in Theorem \ref{Clarence} (setting $\Lambda=1$) that the asymptotic crosscorrelation demerit factor for $(\fpnat,\hpnat)$ or $(\gpnat,\hpnat)$ should always tend to $1$ as $p\to\infty$.
In Figure \ref{Edward}, we show the dependence of autocorrelation and crosscorrelation demerit factors of $\fpnat$ and $\hpnat$ on $\cos(2\gamma_p)$.  The lines indicate the asymptotic values calculated in Theorems \ref{Anne}, \ref{Rebecca}, and \ref{Clarence}.
The data points show the actual values for the sequence pairs $(\fpnat,\hpnat)$ for all primes $p$ with $p\equiv 1 \pmod{4}$ and $p < 2000$.  Note that the autocorrelation demerit factor for $\hpnat$ shows little dependence on $p$ (it has asymptotic value of $1/6$ regardless of $\gamma_p$), while the autocorrelation demerit factor of $\fpnat$ varies considerably with $\cos(2\gamma_p)$ as already seen in Figure \ref{Aaron}.  The crosscorrelation demerit factor for $(\fpnat,\hpnat)$ is always around $1$, the asymptotic value proved in Theorem \ref{Clarence}.
One would get a similar plot if $\fpnat$ were replaced with $\gpnat$.
Again, note that the data points are close to the asymptotic values.

\begin{center}
\begin{figure}
\begin{center}
\caption{Demerit factors as a function of $\cos(2\gamma_p)$: autocorrelation of quartic residue sequence (plusses, dashed line); autocorrelation of Legendre sequence (diamonds, dot-dash line); crosscorrelation of these paired (filled circles, solid line)}\label{Edward}
\begin{tikzpicture}[gnuplot]
\path (0.000,0.000) rectangle (12.446,9.398);
\gpcolor{color=gp lt color border}
\gpsetlinetype{gp lt border}
\gpsetlinewidth{1.00}
\draw[gp path] (1.504,0.985)--(1.684,0.985);
\draw[gp path] (11.893,0.985)--(11.713,0.985);
\node[gp node right] at (1.320,0.985) { 0};
\draw[gp path] (1.504,2.326)--(1.684,2.326);
\draw[gp path] (11.893,2.326)--(11.713,2.326);
\node[gp node right] at (1.320,2.326) { 0.2};
\draw[gp path] (1.504,3.666)--(1.684,3.666);
\draw[gp path] (11.893,3.666)--(11.713,3.666);
\node[gp node right] at (1.320,3.666) { 0.4};
\draw[gp path] (1.504,5.007)--(1.684,5.007);
\draw[gp path] (11.893,5.007)--(11.713,5.007);
\node[gp node right] at (1.320,5.007) { 0.6};
\draw[gp path] (1.504,6.348)--(1.684,6.348);
\draw[gp path] (11.893,6.348)--(11.713,6.348);
\node[gp node right] at (1.320,6.348) { 0.8};
\draw[gp path] (1.504,7.688)--(1.684,7.688);
\draw[gp path] (11.893,7.688)--(11.713,7.688);
\node[gp node right] at (1.320,7.688) { 1};
\draw[gp path] (1.504,9.029)--(1.684,9.029);
\draw[gp path] (11.893,9.029)--(11.713,9.029);
\node[gp node right] at (1.320,9.029) { 1.2};
\draw[gp path] (1.504,0.985)--(1.504,1.165);
\draw[gp path] (1.504,9.029)--(1.504,8.849);
\node[gp node center] at (1.504,0.677) {-1};
\draw[gp path] (4.101,0.985)--(4.101,1.165);
\draw[gp path] (4.101,9.029)--(4.101,8.849);
\node[gp node center] at (4.101,0.677) {-0.5};
\draw[gp path] (6.699,0.985)--(6.699,1.165);
\draw[gp path] (6.699,9.029)--(6.699,8.849);
\node[gp node center] at (6.699,0.677) { 0};
\draw[gp path] (9.296,0.985)--(9.296,1.165);
\draw[gp path] (9.296,9.029)--(9.296,8.849);
\node[gp node center] at (9.296,0.677) { 0.5};
\draw[gp path] (11.893,0.985)--(11.893,1.165);
\draw[gp path] (11.893,9.029)--(11.893,8.849);
\node[gp node center] at (11.893,0.677) { 1};
\draw[gp path] (1.504,9.029)--(1.504,0.985)--(11.893,0.985)--(11.893,9.029)--cycle;
\node[gp node center,rotate=-270] at (0.246,5.007) {Demerit Factor};
\node[gp node center] at (6.698,0.215) {$\cos(2\gamma_p)$};
\gpcolor{rgb color={0.000,0.000,0.000}}
\gpsetlinetype{gp lt plot 2}
\gpsetlinewidth{2.00}
\draw[gp path] (1.504,6.571)--(1.609,6.526)--(1.714,6.481)--(1.819,6.436)--(1.924,6.391)%
  --(2.029,6.345)--(2.134,6.300)--(2.239,6.255)--(2.344,6.210)--(2.448,6.165)--(2.553,6.120)%
  --(2.658,6.075)--(2.763,6.029)--(2.868,5.984)--(2.973,5.939)--(3.078,5.894)--(3.183,5.849)%
  --(3.288,5.804)--(3.393,5.759)--(3.498,5.713)--(3.603,5.668)--(3.708,5.623)--(3.813,5.578)%
  --(3.918,5.533)--(4.023,5.488)--(4.127,5.443)--(4.232,5.397)--(4.337,5.352)--(4.442,5.307)%
  --(4.547,5.262)--(4.652,5.217)--(4.757,5.172)--(4.862,5.127)--(4.967,5.081)--(5.072,5.036)%
  --(5.177,4.991)--(5.282,4.946)--(5.387,4.901)--(5.492,4.856)--(5.597,4.811)--(5.702,4.765)%
  --(5.807,4.720)--(5.911,4.675)--(6.016,4.630)--(6.121,4.585)--(6.226,4.540)--(6.331,4.495)%
  --(6.436,4.450)--(6.541,4.404)--(6.646,4.359)--(6.751,4.314)--(6.856,4.269)--(6.961,4.224)%
  --(7.066,4.179)--(7.171,4.134)--(7.276,4.088)--(7.381,4.043)--(7.486,3.998)--(7.590,3.953)%
  --(7.695,3.908)--(7.800,3.863)--(7.905,3.818)--(8.010,3.772)--(8.115,3.727)--(8.220,3.682)%
  --(8.325,3.637)--(8.430,3.592)--(8.535,3.547)--(8.640,3.502)--(8.745,3.456)--(8.850,3.411)%
  --(8.955,3.366)--(9.060,3.321)--(9.165,3.276)--(9.270,3.231)--(9.374,3.186)--(9.479,3.140)%
  --(9.584,3.095)--(9.689,3.050)--(9.794,3.005)--(9.899,2.960)--(10.004,2.915)--(10.109,2.870)%
  --(10.214,2.824)--(10.319,2.779)--(10.424,2.734)--(10.529,2.689)--(10.634,2.644)--(10.739,2.599)%
  --(10.844,2.554)--(10.949,2.508)--(11.053,2.463)--(11.158,2.418)--(11.263,2.373)--(11.368,2.328)%
  --(11.473,2.283)--(11.578,2.238)--(11.683,2.193)--(11.788,2.147)--(11.893,2.102);
\gpsetlinetype{gp lt plot 4}
\draw[gp path] (1.504,2.102)--(1.609,2.102)--(1.714,2.102)--(1.819,2.102)--(1.924,2.102)%
  --(2.029,2.102)--(2.134,2.102)--(2.239,2.102)--(2.344,2.102)--(2.448,2.102)--(2.553,2.102)%
  --(2.658,2.102)--(2.763,2.102)--(2.868,2.102)--(2.973,2.102)--(3.078,2.102)--(3.183,2.102)%
  --(3.288,2.102)--(3.393,2.102)--(3.498,2.102)--(3.603,2.102)--(3.708,2.102)--(3.813,2.102)%
  --(3.918,2.102)--(4.023,2.102)--(4.127,2.102)--(4.232,2.102)--(4.337,2.102)--(4.442,2.102)%
  --(4.547,2.102)--(4.652,2.102)--(4.757,2.102)--(4.862,2.102)--(4.967,2.102)--(5.072,2.102)%
  --(5.177,2.102)--(5.282,2.102)--(5.387,2.102)--(5.492,2.102)--(5.597,2.102)--(5.702,2.102)%
  --(5.807,2.102)--(5.911,2.102)--(6.016,2.102)--(6.121,2.102)--(6.226,2.102)--(6.331,2.102)%
  --(6.436,2.102)--(6.541,2.102)--(6.646,2.102)--(6.751,2.102)--(6.856,2.102)--(6.961,2.102)%
  --(7.066,2.102)--(7.171,2.102)--(7.276,2.102)--(7.381,2.102)--(7.486,2.102)--(7.590,2.102)%
  --(7.695,2.102)--(7.800,2.102)--(7.905,2.102)--(8.010,2.102)--(8.115,2.102)--(8.220,2.102)%
  --(8.325,2.102)--(8.430,2.102)--(8.535,2.102)--(8.640,2.102)--(8.745,2.102)--(8.850,2.102)%
  --(8.955,2.102)--(9.060,2.102)--(9.165,2.102)--(9.270,2.102)--(9.374,2.102)--(9.479,2.102)%
  --(9.584,2.102)--(9.689,2.102)--(9.794,2.102)--(9.899,2.102)--(10.004,2.102)--(10.109,2.102)%
  --(10.214,2.102)--(10.319,2.102)--(10.424,2.102)--(10.529,2.102)--(10.634,2.102)--(10.739,2.102)%
  --(10.844,2.102)--(10.949,2.102)--(11.053,2.102)--(11.158,2.102)--(11.263,2.102)--(11.368,2.102)%
  --(11.473,2.102)--(11.578,2.102)--(11.683,2.102)--(11.788,2.102)--(11.893,2.102);
\gpsetlinetype{gp lt plot 0}
\draw[gp path] (1.504,7.688)--(1.609,7.688)--(1.714,7.688)--(1.819,7.688)--(1.924,7.688)%
  --(2.029,7.688)--(2.134,7.688)--(2.239,7.688)--(2.344,7.688)--(2.448,7.688)--(2.553,7.688)%
  --(2.658,7.688)--(2.763,7.688)--(2.868,7.688)--(2.973,7.688)--(3.078,7.688)--(3.183,7.688)%
  --(3.288,7.688)--(3.393,7.688)--(3.498,7.688)--(3.603,7.688)--(3.708,7.688)--(3.813,7.688)%
  --(3.918,7.688)--(4.023,7.688)--(4.127,7.688)--(4.232,7.688)--(4.337,7.688)--(4.442,7.688)%
  --(4.547,7.688)--(4.652,7.688)--(4.757,7.688)--(4.862,7.688)--(4.967,7.688)--(5.072,7.688)%
  --(5.177,7.688)--(5.282,7.688)--(5.387,7.688)--(5.492,7.688)--(5.597,7.688)--(5.702,7.688)%
  --(5.807,7.688)--(5.911,7.688)--(6.016,7.688)--(6.121,7.688)--(6.226,7.688)--(6.331,7.688)%
  --(6.436,7.688)--(6.541,7.688)--(6.646,7.688)--(6.751,7.688)--(6.856,7.688)--(6.961,7.688)%
  --(7.066,7.688)--(7.171,7.688)--(7.276,7.688)--(7.381,7.688)--(7.486,7.688)--(7.590,7.688)%
  --(7.695,7.688)--(7.800,7.688)--(7.905,7.688)--(8.010,7.688)--(8.115,7.688)--(8.220,7.688)%
  --(8.325,7.688)--(8.430,7.688)--(8.535,7.688)--(8.640,7.688)--(8.745,7.688)--(8.850,7.688)%
  --(8.955,7.688)--(9.060,7.688)--(9.165,7.688)--(9.270,7.688)--(9.374,7.688)--(9.479,7.688)%
  --(9.584,7.688)--(9.689,7.688)--(9.794,7.688)--(9.899,7.688)--(10.004,7.688)--(10.109,7.688)%
  --(10.214,7.688)--(10.319,7.688)--(10.424,7.688)--(10.529,7.688)--(10.634,7.688)--(10.739,7.688)%
  --(10.844,7.688)--(10.949,7.688)--(11.053,7.688)--(11.158,7.688)--(11.263,7.688)--(11.368,7.688)%
  --(11.473,7.688)--(11.578,7.688)--(11.683,7.688)--(11.788,7.688)--(11.893,7.688);
\gpsetlinewidth{1.00}
\gpsetpointsize{2.00}
\gppoint{gp mark 1}{(3.582,6.348)}
\gppoint{gp mark 1}{(8.696,3.682)}
\gppoint{gp mark 1}{(2.115,5.995)}
\gppoint{gp mark 1}{(10.460,3.185)}
\gppoint{gp mark 1}{(1.785,6.920)}
\gppoint{gp mark 1}{(7.839,4.143)}
\gppoint{gp mark 1}{(11.109,2.426)}
\gppoint{gp mark 1}{(5.762,5.273)}
\gppoint{gp mark 1}{(2.785,6.208)}
\gppoint{gp mark 1}{(4.422,5.392)}
\gppoint{gp mark 1}{(10.179,2.997)}
\gppoint{gp mark 1}{(1.607,6.655)}
\gppoint{gp mark 1}{(2.362,5.894)}
\gppoint{gp mark 1}{(6.009,4.580)}
\gppoint{gp mark 1}{(10.680,2.708)}
\gppoint{gp mark 1}{(4.921,4.940)}
\gppoint{gp mark 1}{(9.511,3.273)}
\gppoint{gp mark 1}{(11.653,2.231)}
\gppoint{gp mark 1}{(6.153,4.639)}
\gppoint{gp mark 1}{(4.142,5.681)}
\gppoint{gp mark 1}{(1.557,6.402)}
\gppoint{gp mark 1}{(11.712,2.243)}
\gppoint{gp mark 1}{(9.039,3.566)}
\gppoint{gp mark 1}{(11.203,2.346)}
\gppoint{gp mark 1}{(1.544,6.443)}
\gppoint{gp mark 1}{(8.031,4.041)}
\gppoint{gp mark 1}{(4.542,5.509)}
\gppoint{gp mark 1}{(2.428,6.089)}
\gppoint{gp mark 1}{(11.751,2.192)}
\gppoint{gp mark 1}{(7.113,4.194)}
\gppoint{gp mark 1}{(5.470,4.718)}
\gppoint{gp mark 1}{(4.001,5.729)}
\gppoint{gp mark 1}{(2.248,6.103)}
\gppoint{gp mark 1}{(10.009,2.904)}
\gppoint{gp mark 1}{(2.869,6.087)}
\gppoint{gp mark 1}{(9.222,3.152)}
\gppoint{gp mark 1}{(10.951,2.551)}
\gppoint{gp mark 1}{(1.530,6.920)}
\gppoint{gp mark 1}{(1.733,6.544)}
\gppoint{gp mark 1}{(7.056,4.192)}
\gppoint{gp mark 1}{(8.438,3.674)}
\gppoint{gp mark 1}{(2.638,5.945)}
\gppoint{gp mark 1}{(11.529,2.299)}
\gppoint{gp mark 1}{(9.639,3.054)}
\gppoint{gp mark 1}{(2.014,6.169)}
\gppoint{gp mark 1}{(3.917,5.694)}
\gppoint{gp mark 1}{(9.973,2.834)}
\gppoint{gp mark 1}{(8.237,3.730)}
\gppoint{gp mark 1}{(4.590,5.405)}
\gppoint{gp mark 1}{(1.522,6.698)}
\gppoint{gp mark 1}{(10.772,2.663)}
\gppoint{gp mark 1}{(1.936,6.543)}
\gppoint{gp mark 1}{(6.402,4.221)}
\gppoint{gp mark 1}{(7.582,4.094)}
\gppoint{gp mark 1}{(11.634,2.240)}
\gppoint{gp mark 1}{(4.193,5.496)}
\gppoint{gp mark 1}{(11.327,2.384)}
\gppoint{gp mark 1}{(9.670,3.148)}
\gppoint{gp mark 1}{(1.519,6.622)}
\gppoint{gp mark 1}{(1.875,6.615)}
\gppoint{gp mark 1}{(4.801,5.065)}
\gppoint{gp mark 1}{(11.836,2.129)}
\gppoint{gp mark 1}{(2.616,5.978)}
\gppoint{gp mark 1}{(6.432,4.382)}
\gppoint{gp mark 1}{(9.948,2.973)}
\gppoint{gp mark 1}{(5.388,4.888)}
\gppoint{gp mark 1}{(3.081,5.889)}
\gppoint{gp mark 1}{(1.825,6.555)}
\gppoint{gp mark 1}{(9.413,3.122)}
\gppoint{gp mark 1}{(10.640,2.678)}
\gppoint{gp mark 1}{(7.947,3.748)}
\gppoint{gp mark 1}{(11.699,2.204)}
\gppoint{gp mark 1}{(11.467,2.276)}
\gppoint{gp mark 1}{(8.874,3.433)}
\gppoint{gp mark 1}{(7.420,4.118)}
\gppoint{gp mark 1}{(5.507,4.924)}
\gppoint{gp mark 1}{(10.789,2.555)}
\gppoint{gp mark 1}{(3.346,5.679)}
\gppoint{gp mark 1}{(11.723,2.201)}
\gppoint{gp mark 1}{(11.518,2.261)}
\gppoint{gp mark 1}{(3.821,5.633)}
\gppoint{gp mark 1}{(6.929,4.276)}
\gppoint{gp mark 1}{(2.735,5.988)}
\gppoint{gp mark 1}{(1.595,6.585)}
\gppoint{gp mark 1}{(1.752,6.433)}
\gppoint{gp mark 1}{(10.914,2.493)}
\gppoint{gp mark 1}{(3.146,5.894)}
\gppoint{gp mark 1}{(11.855,2.102)}
\gppoint{gp mark 1}{(9.469,3.249)}
\gppoint{gp mark 1}{(7.359,4.061)}
\gppoint{gp mark 1}{(5.606,4.830)}
\gppoint{gp mark 1}{(8.212,3.664)}
\gppoint{gp mark 1}{(11.316,2.325)}
\gppoint{gp mark 1}{(1.724,6.490)}
\gppoint{gp mark 1}{(2.976,6.036)}
\gppoint{gp mark 1}{(6.910,4.209)}
\gppoint{gp mark 1}{(7.748,3.894)}
\gppoint{gp mark 1}{(9.708,3.099)}
\gppoint{gp mark 1}{(11.859,2.119)}
\gppoint{gp mark 1}{(2.184,6.209)}
\gppoint{gp mark 1}{(3.376,5.836)}
\gppoint{gp mark 1}{(2.488,6.227)}
\gppoint{gp mark 1}{(11.377,2.336)}
\gppoint{gp mark 1}{(1.512,6.551)}
\gppoint{gp mark 1}{(6.495,4.359)}
\gppoint{gp mark 1}{(1.701,6.509)}
\gppoint{gp mark 1}{(8.840,3.435)}
\gppoint{gp mark 1}{(11.863,2.126)}
\gppoint{gp mark 1}{(3.197,5.876)}
\gppoint{gp mark 1}{(6.112,4.636)}
\gppoint{gp mark 1}{(5.350,4.951)}
\gppoint{gp mark 1}{(11.429,2.364)}
\gppoint{gp mark 1}{(1.568,6.632)}
\gppoint{gp mark 1}{(10.097,2.884)}
\gppoint{gp mark 1}{(9.102,3.319)}
\gppoint{gp mark 1}{(1.845,6.533)}
\gppoint{gp mark 1}{(9.720,3.035)}
\gppoint{gp mark 1}{(5.043,5.056)}
\gppoint{gp mark 1}{(4.373,5.108)}
\gppoint{gp mark 1}{(1.510,6.533)}
\gppoint{gp mark 1}{(1.562,6.399)}
\gppoint{gp mark 1}{(2.592,6.058)}
\gppoint{gp mark 1}{(11.252,2.362)}
\gppoint{gp mark 1}{(7.603,3.967)}
\gppoint{gp mark 1}{(3.767,5.622)}
\gppoint{gp mark 1}{(2.905,6.070)}
\gppoint{gp mark 1}{(9.905,2.908)}
\gppoint{gp mark 1}{(11.795,2.163)}
\gppoint{gp mark 1}{(8.951,3.402)}
\gppoint{gp mark 1}{(2.234,6.258)}
\gppoint{gp mark 1}{(3.237,5.814)}
\gppoint{gp mark 1}{(6.522,4.448)}
\gppoint{gp mark 1}{(5.824,4.814)}
\gppoint{gp mark 1}{(10.396,2.725)}
\gppoint{gp mark 1}{(1.649,6.558)}
\gppoint{gp mark 1}{(8.570,3.506)}
\gppoint{gp mark 1}{(6.869,4.260)}
\gppoint{gp mark 1}{(7.544,3.977)}
\gppoint{gp mark 1}{(10.808,2.598)}
\gppoint{gp mark 1}{(3.093,5.981)}
\gppoint{gp mark 1}{(8.199,3.710)}
\gppoint{gp mark 1}{(11.545,2.280)}
\gppoint{gp mark 1}{(2.412,6.125)}
\gppoint{gp mark 1}{(11.360,2.333)}
\gppoint{gp mark 1}{(4.289,5.292)}
\gppoint{gp mark 1}{(11.142,2.414)}
\gppoint{gp mark 1}{(5.879,4.665)}
\gpsetpointsize{2.80}
\gppoint{gp mark 12}{(3.582,6.348)}
\gppoint{gp mark 12}{(8.696,4.317)}
\gppoint{gp mark 12}{(2.115,3.397)}
\gppoint{gp mark 12}{(10.460,3.121)}
\gppoint{gp mark 12}{(1.785,3.042)}
\gppoint{gp mark 12}{(7.839,2.389)}
\gppoint{gp mark 12}{(11.109,2.350)}
\gppoint{gp mark 12}{(5.762,2.823)}
\gppoint{gp mark 12}{(2.785,2.525)}
\gppoint{gp mark 12}{(4.422,2.210)}
\gppoint{gp mark 12}{(10.179,2.484)}
\gppoint{gp mark 12}{(1.607,2.213)}
\gppoint{gp mark 12}{(2.362,2.382)}
\gppoint{gp mark 12}{(6.009,2.396)}
\gppoint{gp mark 12}{(10.680,2.179)}
\gppoint{gp mark 12}{(4.921,2.221)}
\gppoint{gp mark 12}{(9.511,2.176)}
\gppoint{gp mark 12}{(11.653,2.314)}
\gppoint{gp mark 12}{(6.153,2.263)}
\gppoint{gp mark 12}{(4.142,2.233)}
\gppoint{gp mark 12}{(1.557,2.213)}
\gppoint{gp mark 12}{(11.712,2.102)}
\gppoint{gp mark 12}{(9.039,2.221)}
\gppoint{gp mark 12}{(11.203,2.201)}
\gppoint{gp mark 12}{(1.544,2.232)}
\gppoint{gp mark 12}{(8.031,2.202)}
\gppoint{gp mark 12}{(4.542,2.195)}
\gppoint{gp mark 12}{(2.428,2.184)}
\gppoint{gp mark 12}{(11.751,2.221)}
\gppoint{gp mark 12}{(7.113,2.171)}
\gppoint{gp mark 12}{(5.470,2.190)}
\gppoint{gp mark 12}{(4.001,2.207)}
\gppoint{gp mark 12}{(2.248,2.177)}
\gppoint{gp mark 12}{(10.009,2.162)}
\gppoint{gp mark 12}{(2.869,2.138)}
\gppoint{gp mark 12}{(9.222,2.177)}
\gppoint{gp mark 12}{(10.951,2.168)}
\gppoint{gp mark 12}{(1.530,2.148)}
\gppoint{gp mark 12}{(1.733,2.162)}
\gppoint{gp mark 12}{(7.056,2.151)}
\gppoint{gp mark 12}{(8.438,2.157)}
\gppoint{gp mark 12}{(2.638,2.145)}
\gppoint{gp mark 12}{(11.529,2.123)}
\gppoint{gp mark 12}{(9.639,2.109)}
\gppoint{gp mark 12}{(2.014,2.131)}
\gppoint{gp mark 12}{(3.917,2.127)}
\gppoint{gp mark 12}{(9.973,2.183)}
\gppoint{gp mark 12}{(8.237,2.128)}
\gppoint{gp mark 12}{(4.590,2.156)}
\gppoint{gp mark 12}{(1.522,2.160)}
\gppoint{gp mark 12}{(10.772,2.141)}
\gppoint{gp mark 12}{(1.936,2.121)}
\gppoint{gp mark 12}{(6.402,2.144)}
\gppoint{gp mark 12}{(7.582,2.139)}
\gppoint{gp mark 12}{(11.634,2.125)}
\gppoint{gp mark 12}{(4.193,2.158)}
\gppoint{gp mark 12}{(11.327,2.162)}
\gppoint{gp mark 12}{(9.670,2.138)}
\gppoint{gp mark 12}{(1.519,2.131)}
\gppoint{gp mark 12}{(1.875,2.123)}
\gppoint{gp mark 12}{(4.801,2.133)}
\gppoint{gp mark 12}{(11.836,2.124)}
\gppoint{gp mark 12}{(2.616,2.139)}
\gppoint{gp mark 12}{(6.432,2.113)}
\gppoint{gp mark 12}{(9.948,2.129)}
\gppoint{gp mark 12}{(5.388,2.134)}
\gppoint{gp mark 12}{(3.081,2.121)}
\gppoint{gp mark 12}{(1.825,2.125)}
\gppoint{gp mark 12}{(9.413,2.146)}
\gppoint{gp mark 12}{(10.640,2.105)}
\gppoint{gp mark 12}{(7.947,2.119)}
\gppoint{gp mark 12}{(11.699,2.131)}
\gppoint{gp mark 12}{(11.467,2.158)}
\gppoint{gp mark 12}{(8.874,2.137)}
\gppoint{gp mark 12}{(7.420,2.117)}
\gppoint{gp mark 12}{(5.507,2.128)}
\gppoint{gp mark 12}{(10.789,2.145)}
\gppoint{gp mark 12}{(3.346,2.122)}
\gppoint{gp mark 12}{(11.723,2.122)}
\gppoint{gp mark 12}{(11.518,2.105)}
\gppoint{gp mark 12}{(3.821,2.126)}
\gppoint{gp mark 12}{(6.929,2.128)}
\gppoint{gp mark 12}{(2.735,2.130)}
\gppoint{gp mark 12}{(1.595,2.122)}
\gppoint{gp mark 12}{(1.752,2.117)}
\gppoint{gp mark 12}{(10.914,2.107)}
\gppoint{gp mark 12}{(3.146,2.130)}
\gppoint{gp mark 12}{(11.855,2.147)}
\gppoint{gp mark 12}{(9.469,2.115)}
\gppoint{gp mark 12}{(7.359,2.135)}
\gppoint{gp mark 12}{(5.606,2.126)}
\gppoint{gp mark 12}{(8.212,2.121)}
\gppoint{gp mark 12}{(11.316,2.124)}
\gppoint{gp mark 12}{(1.724,2.106)}
\gppoint{gp mark 12}{(2.976,2.114)}
\gppoint{gp mark 12}{(6.910,2.127)}
\gppoint{gp mark 12}{(7.748,2.100)}
\gppoint{gp mark 12}{(9.708,2.114)}
\gppoint{gp mark 12}{(11.859,2.106)}
\gppoint{gp mark 12}{(2.184,2.122)}
\gppoint{gp mark 12}{(3.376,2.121)}
\gppoint{gp mark 12}{(2.488,2.114)}
\gppoint{gp mark 12}{(11.377,2.108)}
\gppoint{gp mark 12}{(1.512,2.125)}
\gppoint{gp mark 12}{(6.495,2.120)}
\gppoint{gp mark 12}{(1.701,2.118)}
\gppoint{gp mark 12}{(8.840,2.109)}
\gppoint{gp mark 12}{(11.863,2.128)}
\gppoint{gp mark 12}{(3.197,2.116)}
\gppoint{gp mark 12}{(6.112,2.112)}
\gppoint{gp mark 12}{(5.350,2.111)}
\gppoint{gp mark 12}{(11.429,2.121)}
\gppoint{gp mark 12}{(1.568,2.111)}
\gppoint{gp mark 12}{(10.097,2.111)}
\gppoint{gp mark 12}{(9.102,2.127)}
\gppoint{gp mark 12}{(1.845,2.113)}
\gppoint{gp mark 12}{(9.720,2.109)}
\gppoint{gp mark 12}{(5.043,2.118)}
\gppoint{gp mark 12}{(4.373,2.132)}
\gppoint{gp mark 12}{(1.510,2.110)}
\gppoint{gp mark 12}{(1.562,2.116)}
\gppoint{gp mark 12}{(2.592,2.124)}
\gppoint{gp mark 12}{(11.252,2.108)}
\gppoint{gp mark 12}{(7.603,2.108)}
\gppoint{gp mark 12}{(3.767,2.121)}
\gppoint{gp mark 12}{(2.905,2.112)}
\gppoint{gp mark 12}{(9.905,2.127)}
\gppoint{gp mark 12}{(11.795,2.114)}
\gppoint{gp mark 12}{(8.951,2.107)}
\gppoint{gp mark 12}{(2.234,2.110)}
\gppoint{gp mark 12}{(3.237,2.124)}
\gppoint{gp mark 12}{(6.522,2.119)}
\gppoint{gp mark 12}{(5.824,2.117)}
\gppoint{gp mark 12}{(10.396,2.113)}
\gppoint{gp mark 12}{(1.649,2.121)}
\gppoint{gp mark 12}{(8.570,2.117)}
\gppoint{gp mark 12}{(6.869,2.105)}
\gppoint{gp mark 12}{(7.544,2.118)}
\gppoint{gp mark 12}{(10.808,2.124)}
\gppoint{gp mark 12}{(3.093,2.114)}
\gppoint{gp mark 12}{(8.199,2.109)}
\gppoint{gp mark 12}{(11.545,2.113)}
\gppoint{gp mark 12}{(2.412,2.116)}
\gppoint{gp mark 12}{(11.360,2.101)}
\gppoint{gp mark 12}{(4.289,2.105)}
\gppoint{gp mark 12}{(11.142,2.121)}
\gppoint{gp mark 12}{(5.879,2.123)}
\gpsetpointsize{2.00}
\gppoint{gp mark 7}{(10.460,8.135)}
\gppoint{gp mark 7}{(1.785,6.787)}
\gppoint{gp mark 7}{(7.839,8.215)}
\gppoint{gp mark 7}{(11.109,8.099)}
\gppoint{gp mark 7}{(5.762,6.744)}
\gppoint{gp mark 7}{(2.785,8.795)}
\gppoint{gp mark 7}{(4.422,8.271)}
\gppoint{gp mark 7}{(10.179,7.175)}
\gppoint{gp mark 7}{(1.607,8.430)}
\gppoint{gp mark 7}{(2.362,8.458)}
\gppoint{gp mark 7}{(6.009,8.276)}
\gppoint{gp mark 7}{(10.680,8.247)}
\gppoint{gp mark 7}{(4.921,8.284)}
\gppoint{gp mark 7}{(9.511,7.626)}
\gppoint{gp mark 7}{(11.653,7.764)}
\gppoint{gp mark 7}{(6.153,8.197)}
\gppoint{gp mark 7}{(4.142,8.364)}
\gppoint{gp mark 7}{(1.557,7.239)}
\gppoint{gp mark 7}{(11.712,7.753)}
\gppoint{gp mark 7}{(9.039,7.430)}
\gppoint{gp mark 7}{(11.203,7.916)}
\gppoint{gp mark 7}{(1.544,8.272)}
\gppoint{gp mark 7}{(8.031,8.030)}
\gppoint{gp mark 7}{(4.542,8.340)}
\gppoint{gp mark 7}{(2.428,7.106)}
\gppoint{gp mark 7}{(11.751,7.688)}
\gppoint{gp mark 7}{(7.113,7.550)}
\gppoint{gp mark 7}{(5.470,7.419)}
\gppoint{gp mark 7}{(4.001,8.268)}
\gppoint{gp mark 7}{(2.248,8.028)}
\gppoint{gp mark 7}{(10.009,7.950)}
\gppoint{gp mark 7}{(2.869,7.985)}
\gppoint{gp mark 7}{(9.222,7.919)}
\gppoint{gp mark 7}{(10.951,7.907)}
\gppoint{gp mark 7}{(1.530,7.313)}
\gppoint{gp mark 7}{(1.733,8.205)}
\gppoint{gp mark 7}{(7.056,7.423)}
\gppoint{gp mark 7}{(8.438,7.487)}
\gppoint{gp mark 7}{(2.638,8.149)}
\gppoint{gp mark 7}{(11.529,7.620)}
\gppoint{gp mark 7}{(9.639,7.717)}
\gppoint{gp mark 7}{(2.014,7.257)}
\gppoint{gp mark 7}{(3.917,7.382)}
\gppoint{gp mark 7}{(9.973,7.867)}
\gppoint{gp mark 7}{(8.237,7.578)}
\gppoint{gp mark 7}{(4.590,7.293)}
\gppoint{gp mark 7}{(1.522,8.054)}
\gppoint{gp mark 7}{(10.772,7.512)}
\gppoint{gp mark 7}{(1.936,8.001)}
\gppoint{gp mark 7}{(6.402,7.824)}
\gppoint{gp mark 7}{(7.582,7.363)}
\gppoint{gp mark 7}{(11.634,7.632)}
\gppoint{gp mark 7}{(4.193,7.430)}
\gppoint{gp mark 7}{(11.327,7.581)}
\gppoint{gp mark 7}{(9.670,7.432)}
\gppoint{gp mark 7}{(1.519,7.951)}
\gppoint{gp mark 7}{(1.875,8.123)}
\gppoint{gp mark 7}{(4.801,7.342)}
\gppoint{gp mark 7}{(11.836,7.739)}
\gppoint{gp mark 7}{(2.616,7.972)}
\gppoint{gp mark 7}{(6.432,7.505)}
\gppoint{gp mark 7}{(9.948,7.625)}
\gppoint{gp mark 7}{(5.388,7.364)}
\gppoint{gp mark 7}{(3.081,7.978)}
\gppoint{gp mark 7}{(1.825,7.465)}
\gppoint{gp mark 7}{(9.413,7.507)}
\gppoint{gp mark 7}{(10.640,7.810)}
\gppoint{gp mark 7}{(7.947,7.790)}
\gppoint{gp mark 7}{(11.699,7.633)}
\gppoint{gp mark 7}{(11.467,7.641)}
\gppoint{gp mark 7}{(8.874,7.932)}
\gppoint{gp mark 7}{(7.420,7.499)}
\gppoint{gp mark 7}{(5.507,7.890)}
\gppoint{gp mark 7}{(10.789,7.783)}
\gppoint{gp mark 7}{(3.346,7.862)}
\gppoint{gp mark 7}{(11.723,7.679)}
\gppoint{gp mark 7}{(11.518,7.734)}
\gppoint{gp mark 7}{(3.821,7.403)}
\gppoint{gp mark 7}{(6.929,7.958)}
\gppoint{gp mark 7}{(2.735,7.458)}
\gppoint{gp mark 7}{(1.595,7.427)}
\gppoint{gp mark 7}{(1.752,8.036)}
\gppoint{gp mark 7}{(10.914,7.706)}
\gppoint{gp mark 7}{(3.146,8.026)}
\gppoint{gp mark 7}{(11.855,7.672)}
\gppoint{gp mark 7}{(9.469,7.657)}
\gppoint{gp mark 7}{(7.359,7.477)}
\gppoint{gp mark 7}{(5.606,7.848)}
\gppoint{gp mark 7}{(8.212,7.891)}
\gppoint{gp mark 7}{(11.316,7.714)}
\gppoint{gp mark 7}{(1.724,7.423)}
\gppoint{gp mark 7}{(2.976,7.530)}
\gppoint{gp mark 7}{(6.910,7.843)}
\gppoint{gp mark 7}{(7.748,7.624)}
\gppoint{gp mark 7}{(9.708,7.504)}
\gppoint{gp mark 7}{(11.859,7.670)}
\gppoint{gp mark 7}{(2.184,7.904)}
\gppoint{gp mark 7}{(3.376,8.016)}
\gppoint{gp mark 7}{(2.488,7.947)}
\gppoint{gp mark 7}{(11.377,7.529)}
\gppoint{gp mark 7}{(1.512,7.917)}
\gppoint{gp mark 7}{(6.495,7.893)}
\gppoint{gp mark 7}{(1.701,7.894)}
\gppoint{gp mark 7}{(8.840,7.882)}
\gppoint{gp mark 7}{(11.863,7.717)}
\gppoint{gp mark 7}{(3.197,7.858)}
\gppoint{gp mark 7}{(6.112,7.773)}
\gppoint{gp mark 7}{(5.350,7.940)}
\gppoint{gp mark 7}{(11.429,7.729)}
\gppoint{gp mark 7}{(1.568,7.569)}
\gppoint{gp mark 7}{(10.097,7.791)}
\gppoint{gp mark 7}{(9.102,7.732)}
\gppoint{gp mark 7}{(1.845,7.523)}
\gppoint{gp mark 7}{(9.720,7.724)}
\gppoint{gp mark 7}{(5.043,7.426)}
\gppoint{gp mark 7}{(4.373,7.889)}
\gppoint{gp mark 7}{(1.510,7.909)}
\gppoint{gp mark 7}{(1.562,7.808)}
\gppoint{gp mark 7}{(2.592,7.479)}
\gppoint{gp mark 7}{(11.252,7.722)}
\gppoint{gp mark 7}{(7.603,7.842)}
\gppoint{gp mark 7}{(3.767,7.472)}
\gppoint{gp mark 7}{(2.905,7.564)}
\gppoint{gp mark 7}{(9.905,7.741)}
\gppoint{gp mark 7}{(11.795,7.631)}
\gppoint{gp mark 7}{(8.951,7.794)}
\gppoint{gp mark 7}{(2.234,7.946)}
\gppoint{gp mark 7}{(3.237,7.462)}
\gppoint{gp mark 7}{(6.522,7.529)}
\gppoint{gp mark 7}{(5.824,7.482)}
\gppoint{gp mark 7}{(10.396,7.509)}
\gppoint{gp mark 7}{(1.649,7.363)}
\gppoint{gp mark 7}{(8.570,7.808)}
\gppoint{gp mark 7}{(6.869,7.630)}
\gppoint{gp mark 7}{(7.544,7.736)}
\gppoint{gp mark 7}{(10.808,7.634)}
\gppoint{gp mark 7}{(3.093,7.539)}
\gppoint{gp mark 7}{(8.199,7.798)}
\gppoint{gp mark 7}{(11.545,7.732)}
\gppoint{gp mark 7}{(2.412,7.441)}
\gppoint{gp mark 7}{(11.360,7.708)}
\gppoint{gp mark 7}{(4.289,7.597)}
\gppoint{gp mark 7}{(11.142,7.605)}
\gppoint{gp mark 7}{(5.879,7.866)}
\gpcolor{color=gp lt color border}
\gpsetlinetype{gp lt border}
\draw[gp path] (1.504,9.029)--(1.504,0.985)--(11.893,0.985)--(11.893,9.029)--cycle;
\gpdefrectangularnode{gp plot 1}{\pgfpoint{1.504cm}{0.985cm}}{\pgfpoint{11.893cm}{9.029cm}}
\end{tikzpicture}
\end{center}
\end{figure}
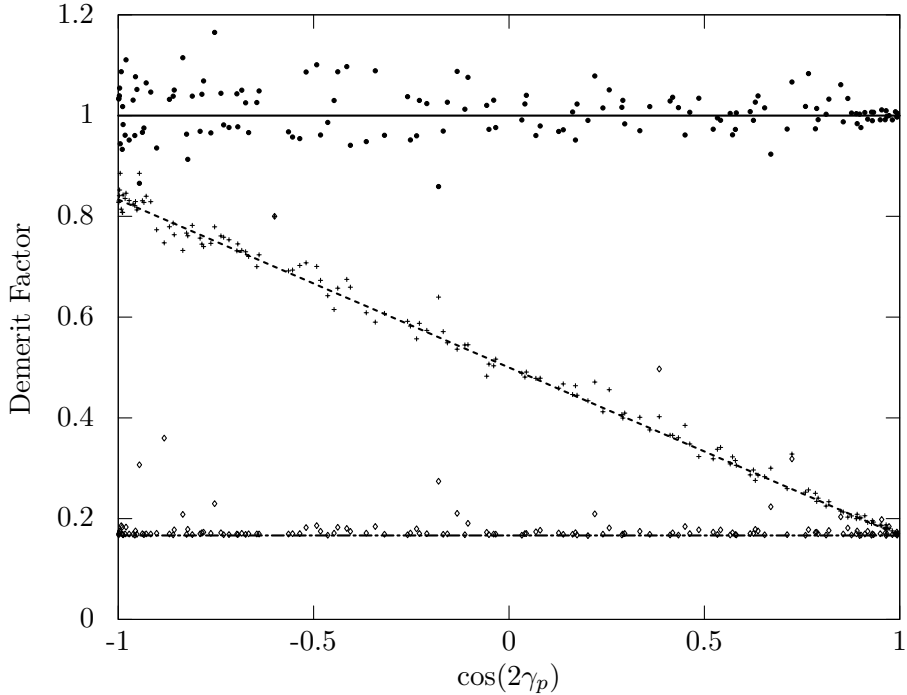
\end{center}

With Legendre sequences, we can improve the autocorrelation merit factors of our sequences to an asymptotic value as high as $6.342061\ldots$ \cite[Theorem 1.1]{Jedwab-Katz-Schmidt-2013-Littlewood}, the largest root of $29 x^3 - 249 x^2 +417 x-27$, by a process known as {\it appending}.
Instead of using the sequences $\fpnat$, $\gpnat$, and $\hpnat$ of shift $(p-1)/4$ and length $p$, we use sequences of the form $\fpapp=\fpslu$, $\gpapp=\gpslu$, and $\hpapp=\hpslu$ where $\ell$ is slightly larger than $p$: if $\Lambda_{\app}=1.057827\ldots$, the middle root of $4 x^3-30 x+27$, then we choose $\ell$ as close as possible to $p \cdot \Lambda_{\app}$ (we round to the nearest integer).  We choose $s$ to be as close as possible to $p \cdot (3-2 \Lambda_{\app})/4$.  To summarize, $\fpapp$ (or $\gpapp$ or $\hpapp$) is obtained by shifting $\fpzpu$ (or $\gpzpu$ or $\hpzpu$) to the left by about $22$ percent of its length, and then periodically extending the shifted sequence by about $6$ percent.

In \cite[Theorem 1.1]{Jedwab-Katz-Schmidt-2013-Littlewood} (or Theorem \ref{Rebecca} in this paper) it is shown that sequences $\hpapp$ achieve the record high asymptotic merit factor value of $6.342061\ldots$ (demerit factor $0.157677\ldots$) as $p\to \infty$.
With $\fpapp$ and $\gpapp$, Theorem \ref{Anne} shows that we cannot exceed this asymptotic value, but that we achieve the same asymptotic value if and only if we select sequences belonging to a sequence of primes $p$ where $\cos(2\gamma_p)\to 1$ as $p\to\infty$.
When this happens, Theorem \ref{Christopher} says that the asymptotic crosscorrelation merit factor for pairs $(\fpapp,\gpapp)$ becomes $0.994058\ldots$ (demerit factor $1.005976\ldots$).
When $\cos(2\gamma_p)$ tends to lower values, we obtain worse autocorrelation but better crosscorrelation, and in the extreme where $\cos(2\gamma_p) \to -1$ we obtain asymptotic autocorrelation merit factor of $1.158888\ldots$ (demerit factor $0.862896\ldots$) and asymptotic crosscorrelation merit factor $3.325929\ldots$ (demerit factor $0.300758\ldots$).  
Indeed, Theorems \ref{Anne} and \ref{Christopher} show that if $\gamma$ is a real number, and we have an infinite family of our sequence pairs $(\fpapp,\gpapp)$ using primes $p$ such that $\gamma_p \to \gamma$ as $p\to\infty$, then
\begin{align}
\DF(\fpapp) & \to (0.510286\ldots) - (0.352609\ldots) \cos(2\gamma) \nonumber \\
\DF(\gpapp) & \to (0.510286\ldots) - (0.352609\ldots) \cos(2\gamma) \label{Helen} \\
\CDF(\fpapp,\gpapp) & \to (0.653368\ldots) + (0.352609\ldots) \cos(2\gamma), \nonumber \\
\PSC(\fpapp,\gpapp) & \to 1.163654\ldots.\nonumber
\end{align}
Note that the asymptotic Pursley-Sarwate Criterion here is slightly lower than what we obtained above for unappended sequences.

In Figure \ref{Boris}, we show the dependence of autocorrelation and crosscorrelation demerit factors of $\fpapp$ and $\gpapp$ on $\cos(2\gamma_p)$ when we use sequence pairs $(\fpapp,\gpapp)$.  The lines indicate the asymptotic values calculated in Theorems \ref{Anne} and \ref{Christopher}, while the data points show the actual values for the sequence pair $(\fpapp,\gpapp)$ for all primes $p$ with $p\equiv 1 \pmod{4}$ and $p < 2000$.  To avoid clutter, we have only plotted the autocorrelation demerit factors of $\fpapp$: the values for $\gpapp$ are similar.  Comparison with Figure \ref{Aaron} shows that autocorrelation performance has become slightly more extreme as a result of appending (better in the best cases, worse in the worst).  Similarly, crosscorrelation performance has become more extreme.

\begin{center}
\begin{figure}
\begin{center}
\caption{Demerit factors of appended quartic residue sequences as a function of $\cos(2\gamma_p)$: autocorrelation (plusses, dashed line) and crosscorrelation (filled circles, solid line)}\label{Boris}
\begin{tikzpicture}[gnuplot]
\path (0.000,0.000) rectangle (12.446,9.398);
\gpcolor{color=gp lt color border}
\gpsetlinetype{gp lt border}
\gpsetlinewidth{1.00}
\draw[gp path] (1.504,0.985)--(1.684,0.985);
\draw[gp path] (11.893,0.985)--(11.713,0.985);
\node[gp node right] at (1.320,0.985) { 0};
\draw[gp path] (1.504,2.326)--(1.684,2.326);
\draw[gp path] (11.893,2.326)--(11.713,2.326);
\node[gp node right] at (1.320,2.326) { 0.2};
\draw[gp path] (1.504,3.666)--(1.684,3.666);
\draw[gp path] (11.893,3.666)--(11.713,3.666);
\node[gp node right] at (1.320,3.666) { 0.4};
\draw[gp path] (1.504,5.007)--(1.684,5.007);
\draw[gp path] (11.893,5.007)--(11.713,5.007);
\node[gp node right] at (1.320,5.007) { 0.6};
\draw[gp path] (1.504,6.348)--(1.684,6.348);
\draw[gp path] (11.893,6.348)--(11.713,6.348);
\node[gp node right] at (1.320,6.348) { 0.8};
\draw[gp path] (1.504,7.688)--(1.684,7.688);
\draw[gp path] (11.893,7.688)--(11.713,7.688);
\node[gp node right] at (1.320,7.688) { 1};
\draw[gp path] (1.504,9.029)--(1.684,9.029);
\draw[gp path] (11.893,9.029)--(11.713,9.029);
\node[gp node right] at (1.320,9.029) { 1.2};
\draw[gp path] (1.504,0.985)--(1.504,1.165);
\draw[gp path] (1.504,9.029)--(1.504,8.849);
\node[gp node center] at (1.504,0.677) {-1};
\draw[gp path] (4.101,0.985)--(4.101,1.165);
\draw[gp path] (4.101,9.029)--(4.101,8.849);
\node[gp node center] at (4.101,0.677) {-0.5};
\draw[gp path] (6.699,0.985)--(6.699,1.165);
\draw[gp path] (6.699,9.029)--(6.699,8.849);
\node[gp node center] at (6.699,0.677) { 0};
\draw[gp path] (9.296,0.985)--(9.296,1.165);
\draw[gp path] (9.296,9.029)--(9.296,8.849);
\node[gp node center] at (9.296,0.677) { 0.5};
\draw[gp path] (11.893,0.985)--(11.893,1.165);
\draw[gp path] (11.893,9.029)--(11.893,8.849);
\node[gp node center] at (11.893,0.677) { 1};
\draw[gp path] (1.504,9.029)--(1.504,0.985)--(11.893,0.985)--(11.893,9.029)--cycle;
\node[gp node center,rotate=-270] at (0.246,5.007) {Demerit Factor};
\node[gp node center] at (6.698,0.215) {$\cos(2\gamma_p)$};
\gpcolor{rgb color={0.000,0.000,0.000}}
\gpsetlinetype{gp lt plot 2}
\gpsetlinewidth{2.00}
\draw[gp path] (1.504,6.769)--(1.609,6.722)--(1.714,6.674)--(1.819,6.626)--(1.924,6.578)%
  --(2.029,6.531)--(2.134,6.483)--(2.239,6.435)--(2.344,6.387)--(2.448,6.340)--(2.553,6.292)%
  --(2.658,6.244)--(2.763,6.196)--(2.868,6.149)--(2.973,6.101)--(3.078,6.053)--(3.183,6.005)%
  --(3.288,5.958)--(3.393,5.910)--(3.498,5.862)--(3.603,5.814)--(3.708,5.767)--(3.813,5.719)%
  --(3.918,5.671)--(4.023,5.623)--(4.127,5.576)--(4.232,5.528)--(4.337,5.480)--(4.442,5.432)%
  --(4.547,5.385)--(4.652,5.337)--(4.757,5.289)--(4.862,5.241)--(4.967,5.194)--(5.072,5.146)%
  --(5.177,5.098)--(5.282,5.050)--(5.387,5.003)--(5.492,4.955)--(5.597,4.907)--(5.702,4.859)%
  --(5.807,4.812)--(5.911,4.764)--(6.016,4.716)--(6.121,4.668)--(6.226,4.620)--(6.331,4.573)%
  --(6.436,4.525)--(6.541,4.477)--(6.646,4.429)--(6.751,4.382)--(6.856,4.334)--(6.961,4.286)%
  --(7.066,4.238)--(7.171,4.191)--(7.276,4.143)--(7.381,4.095)--(7.486,4.047)--(7.590,4.000)%
  --(7.695,3.952)--(7.800,3.904)--(7.905,3.856)--(8.010,3.809)--(8.115,3.761)--(8.220,3.713)%
  --(8.325,3.665)--(8.430,3.618)--(8.535,3.570)--(8.640,3.522)--(8.745,3.474)--(8.850,3.427)%
  --(8.955,3.379)--(9.060,3.331)--(9.165,3.283)--(9.270,3.236)--(9.374,3.188)--(9.479,3.140)%
  --(9.584,3.092)--(9.689,3.045)--(9.794,2.997)--(9.899,2.949)--(10.004,2.901)--(10.109,2.854)%
  --(10.214,2.806)--(10.319,2.758)--(10.424,2.710)--(10.529,2.663)--(10.634,2.615)--(10.739,2.567)%
  --(10.844,2.519)--(10.949,2.472)--(11.053,2.424)--(11.158,2.376)--(11.263,2.328)--(11.368,2.281)%
  --(11.473,2.233)--(11.578,2.185)--(11.683,2.137)--(11.788,2.090)--(11.893,2.042);
\gpsetlinetype{gp lt plot 0}
\draw[gp path] (1.504,3.001)--(1.609,3.049)--(1.714,3.097)--(1.819,3.144)--(1.924,3.192)%
  --(2.029,3.240)--(2.134,3.288)--(2.239,3.335)--(2.344,3.383)--(2.448,3.431)--(2.553,3.479)%
  --(2.658,3.526)--(2.763,3.574)--(2.868,3.622)--(2.973,3.670)--(3.078,3.717)--(3.183,3.765)%
  --(3.288,3.813)--(3.393,3.861)--(3.498,3.908)--(3.603,3.956)--(3.708,4.004)--(3.813,4.052)%
  --(3.918,4.099)--(4.023,4.147)--(4.127,4.195)--(4.232,4.243)--(4.337,4.290)--(4.442,4.338)%
  --(4.547,4.386)--(4.652,4.434)--(4.757,4.481)--(4.862,4.529)--(4.967,4.577)--(5.072,4.625)%
  --(5.177,4.672)--(5.282,4.720)--(5.387,4.768)--(5.492,4.816)--(5.597,4.863)--(5.702,4.911)%
  --(5.807,4.959)--(5.911,5.007)--(6.016,5.054)--(6.121,5.102)--(6.226,5.150)--(6.331,5.198)%
  --(6.436,5.245)--(6.541,5.293)--(6.646,5.341)--(6.751,5.389)--(6.856,5.436)--(6.961,5.484)%
  --(7.066,5.532)--(7.171,5.580)--(7.276,5.627)--(7.381,5.675)--(7.486,5.723)--(7.590,5.771)%
  --(7.695,5.818)--(7.800,5.866)--(7.905,5.914)--(8.010,5.962)--(8.115,6.009)--(8.220,6.057)%
  --(8.325,6.105)--(8.430,6.153)--(8.535,6.200)--(8.640,6.248)--(8.745,6.296)--(8.850,6.344)%
  --(8.955,6.391)--(9.060,6.439)--(9.165,6.487)--(9.270,6.535)--(9.374,6.582)--(9.479,6.630)%
  --(9.584,6.678)--(9.689,6.726)--(9.794,6.773)--(9.899,6.821)--(10.004,6.869)--(10.109,6.917)%
  --(10.214,6.964)--(10.319,7.012)--(10.424,7.060)--(10.529,7.108)--(10.634,7.155)--(10.739,7.203)%
  --(10.844,7.251)--(10.949,7.299)--(11.053,7.346)--(11.158,7.394)--(11.263,7.442)--(11.368,7.490)%
  --(11.473,7.537)--(11.578,7.585)--(11.683,7.633)--(11.788,7.681)--(11.893,7.728);
\gpsetlinewidth{1.00}
\gpsetpointsize{2.00}
\gppoint{gp mark 1}{(3.582,6.348)}
\gppoint{gp mark 1}{(8.696,3.926)}
\gppoint{gp mark 1}{(2.115,6.654)}
\gppoint{gp mark 1}{(10.460,3.371)}
\gppoint{gp mark 1}{(1.785,7.323)}
\gppoint{gp mark 1}{(7.839,3.864)}
\gppoint{gp mark 1}{(11.109,2.541)}
\gppoint{gp mark 1}{(5.762,4.615)}
\gppoint{gp mark 1}{(2.785,6.561)}
\gppoint{gp mark 1}{(4.422,4.983)}
\gppoint{gp mark 1}{(10.179,3.021)}
\gppoint{gp mark 1}{(1.607,7.080)}
\gppoint{gp mark 1}{(2.362,6.347)}
\gppoint{gp mark 1}{(6.009,4.847)}
\gppoint{gp mark 1}{(10.680,2.781)}
\gppoint{gp mark 1}{(4.921,5.227)}
\gppoint{gp mark 1}{(9.511,3.343)}
\gppoint{gp mark 1}{(11.653,2.133)}
\gppoint{gp mark 1}{(6.153,4.973)}
\gppoint{gp mark 1}{(4.142,5.666)}
\gppoint{gp mark 1}{(1.557,7.028)}
\gppoint{gp mark 1}{(11.712,2.102)}
\gppoint{gp mark 1}{(9.039,3.638)}
\gppoint{gp mark 1}{(11.203,2.408)}
\gppoint{gp mark 1}{(1.544,6.958)}
\gppoint{gp mark 1}{(8.031,3.940)}
\gppoint{gp mark 1}{(4.542,5.752)}
\gppoint{gp mark 1}{(2.428,6.691)}
\gppoint{gp mark 1}{(11.751,2.140)}
\gppoint{gp mark 1}{(7.113,4.361)}
\gppoint{gp mark 1}{(5.470,5.055)}
\gppoint{gp mark 1}{(4.001,5.735)}
\gppoint{gp mark 1}{(2.248,6.402)}
\gppoint{gp mark 1}{(10.009,2.853)}
\gppoint{gp mark 1}{(2.869,5.980)}
\gppoint{gp mark 1}{(9.222,3.328)}
\gppoint{gp mark 1}{(10.951,2.471)}
\gppoint{gp mark 1}{(1.530,6.747)}
\gppoint{gp mark 1}{(1.733,6.820)}
\gppoint{gp mark 1}{(7.056,4.504)}
\gppoint{gp mark 1}{(8.438,3.683)}
\gppoint{gp mark 1}{(2.638,6.246)}
\gppoint{gp mark 1}{(11.529,2.264)}
\gppoint{gp mark 1}{(9.639,3.045)}
\gppoint{gp mark 1}{(2.014,6.383)}
\gppoint{gp mark 1}{(3.917,5.651)}
\gppoint{gp mark 1}{(9.973,2.961)}
\gppoint{gp mark 1}{(8.237,3.746)}
\gppoint{gp mark 1}{(4.590,5.431)}
\gppoint{gp mark 1}{(1.522,6.784)}
\gppoint{gp mark 1}{(10.772,2.559)}
\gppoint{gp mark 1}{(1.936,6.573)}
\gppoint{gp mark 1}{(6.402,4.585)}
\gppoint{gp mark 1}{(7.582,4.056)}
\gppoint{gp mark 1}{(11.634,2.150)}
\gppoint{gp mark 1}{(4.193,5.672)}
\gppoint{gp mark 1}{(11.327,2.309)}
\gppoint{gp mark 1}{(9.670,3.061)}
\gppoint{gp mark 1}{(1.519,6.641)}
\gppoint{gp mark 1}{(1.875,6.675)}
\gppoint{gp mark 1}{(4.801,5.493)}
\gppoint{gp mark 1}{(11.836,2.057)}
\gppoint{gp mark 1}{(2.616,6.273)}
\gppoint{gp mark 1}{(6.432,4.507)}
\gppoint{gp mark 1}{(9.948,2.969)}
\gppoint{gp mark 1}{(5.388,5.038)}
\gppoint{gp mark 1}{(3.081,6.063)}
\gppoint{gp mark 1}{(1.825,6.653)}
\gppoint{gp mark 1}{(9.413,3.141)}
\gppoint{gp mark 1}{(10.640,2.610)}
\gppoint{gp mark 1}{(7.947,3.729)}
\gppoint{gp mark 1}{(11.699,2.166)}
\gppoint{gp mark 1}{(11.467,2.216)}
\gppoint{gp mark 1}{(8.874,3.416)}
\gppoint{gp mark 1}{(7.420,4.103)}
\gppoint{gp mark 1}{(5.507,5.044)}
\gppoint{gp mark 1}{(10.789,2.529)}
\gppoint{gp mark 1}{(3.346,5.971)}
\gppoint{gp mark 1}{(11.723,2.133)}
\gppoint{gp mark 1}{(11.518,2.220)}
\gppoint{gp mark 1}{(3.821,5.734)}
\gppoint{gp mark 1}{(6.929,4.320)}
\gppoint{gp mark 1}{(2.735,6.191)}
\gppoint{gp mark 1}{(1.595,6.688)}
\gppoint{gp mark 1}{(1.752,6.650)}
\gppoint{gp mark 1}{(10.914,2.472)}
\gppoint{gp mark 1}{(3.146,6.084)}
\gppoint{gp mark 1}{(11.855,2.048)}
\gppoint{gp mark 1}{(9.469,3.229)}
\gppoint{gp mark 1}{(7.359,4.106)}
\gppoint{gp mark 1}{(5.606,4.944)}
\gppoint{gp mark 1}{(8.212,3.696)}
\gppoint{gp mark 1}{(11.316,2.320)}
\gppoint{gp mark 1}{(1.724,6.671)}
\gppoint{gp mark 1}{(2.976,6.104)}
\gppoint{gp mark 1}{(6.910,4.316)}
\gppoint{gp mark 1}{(7.748,4.007)}
\gppoint{gp mark 1}{(9.708,3.014)}
\gppoint{gp mark 1}{(11.859,2.054)}
\gppoint{gp mark 1}{(2.184,6.405)}
\gppoint{gp mark 1}{(3.376,5.995)}
\gppoint{gp mark 1}{(2.488,6.410)}
\gppoint{gp mark 1}{(11.377,2.308)}
\gppoint{gp mark 1}{(1.512,6.828)}
\gppoint{gp mark 1}{(6.495,4.465)}
\gppoint{gp mark 1}{(1.701,6.692)}
\gppoint{gp mark 1}{(8.840,3.458)}
\gppoint{gp mark 1}{(11.863,2.063)}
\gppoint{gp mark 1}{(3.197,5.902)}
\gppoint{gp mark 1}{(6.112,4.658)}
\gppoint{gp mark 1}{(5.350,4.986)}
\gppoint{gp mark 1}{(11.429,2.292)}
\gppoint{gp mark 1}{(1.568,6.762)}
\gppoint{gp mark 1}{(10.097,2.866)}
\gppoint{gp mark 1}{(9.102,3.297)}
\gppoint{gp mark 1}{(1.845,6.650)}
\gppoint{gp mark 1}{(9.720,2.921)}
\gppoint{gp mark 1}{(5.043,5.151)}
\gppoint{gp mark 1}{(4.373,5.418)}
\gppoint{gp mark 1}{(1.510,6.757)}
\gppoint{gp mark 1}{(1.562,6.727)}
\gppoint{gp mark 1}{(2.592,6.239)}
\gppoint{gp mark 1}{(11.252,2.294)}
\gppoint{gp mark 1}{(7.603,4.071)}
\gppoint{gp mark 1}{(3.767,5.789)}
\gppoint{gp mark 1}{(2.905,6.259)}
\gppoint{gp mark 1}{(9.905,2.956)}
\gppoint{gp mark 1}{(11.795,2.107)}
\gppoint{gp mark 1}{(8.951,3.413)}
\gppoint{gp mark 1}{(2.234,6.485)}
\gppoint{gp mark 1}{(3.237,5.990)}
\gppoint{gp mark 1}{(6.522,4.500)}
\gppoint{gp mark 1}{(5.824,4.850)}
\gppoint{gp mark 1}{(10.396,2.693)}
\gppoint{gp mark 1}{(1.649,6.615)}
\gppoint{gp mark 1}{(8.570,3.548)}
\gppoint{gp mark 1}{(6.869,4.381)}
\gppoint{gp mark 1}{(7.544,4.075)}
\gppoint{gp mark 1}{(10.808,2.591)}
\gppoint{gp mark 1}{(3.093,6.111)}
\gppoint{gp mark 1}{(8.199,3.706)}
\gppoint{gp mark 1}{(11.545,2.213)}
\gppoint{gp mark 1}{(2.412,6.248)}
\gppoint{gp mark 1}{(11.360,2.273)}
\gppoint{gp mark 1}{(4.289,5.502)}
\gppoint{gp mark 1}{(11.142,2.397)}
\gppoint{gp mark 1}{(5.879,4.754)}
\gppoint{gp mark 7}{(3.582,3.398)}
\gppoint{gp mark 7}{(8.696,6.252)}
\gppoint{gp mark 7}{(2.115,4.419)}
\gppoint{gp mark 7}{(10.460,7.088)}
\gppoint{gp mark 7}{(1.785,3.272)}
\gppoint{gp mark 7}{(7.839,5.912)}
\gppoint{gp mark 7}{(11.109,7.483)}
\gppoint{gp mark 7}{(5.762,5.055)}
\gppoint{gp mark 7}{(2.785,3.645)}
\gppoint{gp mark 7}{(4.422,4.689)}
\gppoint{gp mark 7}{(10.179,7.005)}
\gppoint{gp mark 7}{(1.607,2.956)}
\gppoint{gp mark 7}{(2.362,3.537)}
\gppoint{gp mark 7}{(6.009,5.160)}
\gppoint{gp mark 7}{(10.680,7.248)}
\gppoint{gp mark 7}{(4.921,4.723)}
\gppoint{gp mark 7}{(9.511,6.673)}
\gppoint{gp mark 7}{(11.653,7.596)}
\gppoint{gp mark 7}{(6.153,5.141)}
\gppoint{gp mark 7}{(4.142,4.228)}
\gppoint{gp mark 7}{(1.557,3.141)}
\gppoint{gp mark 7}{(11.712,7.690)}
\gppoint{gp mark 7}{(9.039,6.498)}
\gppoint{gp mark 7}{(11.203,7.453)}
\gppoint{gp mark 7}{(1.544,3.154)}
\gppoint{gp mark 7}{(8.031,5.913)}
\gppoint{gp mark 7}{(4.542,4.272)}
\gppoint{gp mark 7}{(2.428,3.354)}
\gppoint{gp mark 7}{(11.751,7.653)}
\gppoint{gp mark 7}{(7.113,5.526)}
\gppoint{gp mark 7}{(5.470,4.860)}
\gppoint{gp mark 7}{(4.001,4.185)}
\gppoint{gp mark 7}{(2.248,3.384)}
\gppoint{gp mark 7}{(10.009,6.950)}
\gppoint{gp mark 7}{(2.869,3.680)}
\gppoint{gp mark 7}{(9.222,6.526)}
\gppoint{gp mark 7}{(10.951,7.308)}
\gppoint{gp mark 7}{(1.530,3.025)}
\gppoint{gp mark 7}{(1.733,3.094)}
\gppoint{gp mark 7}{(7.056,5.577)}
\gppoint{gp mark 7}{(8.438,6.181)}
\gppoint{gp mark 7}{(2.638,3.537)}
\gppoint{gp mark 7}{(11.529,7.580)}
\gppoint{gp mark 7}{(9.639,6.827)}
\gppoint{gp mark 7}{(2.014,3.338)}
\gppoint{gp mark 7}{(3.917,4.124)}
\gppoint{gp mark 7}{(9.973,6.836)}
\gppoint{gp mark 7}{(8.237,6.085)}
\gppoint{gp mark 7}{(4.590,4.413)}
\gppoint{gp mark 7}{(1.522,3.011)}
\gppoint{gp mark 7}{(10.772,7.172)}
\gppoint{gp mark 7}{(1.936,3.225)}
\gppoint{gp mark 7}{(6.402,5.307)}
\gppoint{gp mark 7}{(7.582,5.696)}
\gppoint{gp mark 7}{(11.634,7.605)}
\gppoint{gp mark 7}{(4.193,4.164)}
\gppoint{gp mark 7}{(11.327,7.476)}
\gppoint{gp mark 7}{(9.670,6.737)}
\gppoint{gp mark 7}{(1.519,3.065)}
\gppoint{gp mark 7}{(1.875,3.149)}
\gppoint{gp mark 7}{(4.801,4.481)}
\gppoint{gp mark 7}{(11.836,7.737)}
\gppoint{gp mark 7}{(2.616,3.544)}
\gppoint{gp mark 7}{(6.432,5.252)}
\gppoint{gp mark 7}{(9.948,6.830)}
\gppoint{gp mark 7}{(5.388,4.798)}
\gppoint{gp mark 7}{(3.081,3.784)}
\gppoint{gp mark 7}{(1.825,3.120)}
\gppoint{gp mark 7}{(9.413,6.589)}
\gppoint{gp mark 7}{(10.640,7.167)}
\gppoint{gp mark 7}{(7.947,5.954)}
\gppoint{gp mark 7}{(11.699,7.602)}
\gppoint{gp mark 7}{(11.467,7.537)}
\gppoint{gp mark 7}{(8.874,6.416)}
\gppoint{gp mark 7}{(7.420,5.675)}
\gppoint{gp mark 7}{(5.507,4.787)}
\gppoint{gp mark 7}{(10.789,7.204)}
\gppoint{gp mark 7}{(3.346,3.851)}
\gppoint{gp mark 7}{(11.723,7.660)}
\gppoint{gp mark 7}{(11.518,7.590)}
\gppoint{gp mark 7}{(3.821,4.050)}
\gppoint{gp mark 7}{(6.929,5.482)}
\gppoint{gp mark 7}{(2.735,3.555)}
\gppoint{gp mark 7}{(1.595,3.055)}
\gppoint{gp mark 7}{(1.752,3.182)}
\gppoint{gp mark 7}{(10.914,7.326)}
\gppoint{gp mark 7}{(3.146,3.746)}
\gppoint{gp mark 7}{(11.855,7.710)}
\gppoint{gp mark 7}{(9.469,6.631)}
\gppoint{gp mark 7}{(7.359,5.642)}
\gppoint{gp mark 7}{(5.606,4.854)}
\gppoint{gp mark 7}{(8.212,6.037)}
\gppoint{gp mark 7}{(11.316,7.543)}
\gppoint{gp mark 7}{(1.724,3.107)}
\gppoint{gp mark 7}{(2.976,3.665)}
\gppoint{gp mark 7}{(6.910,5.493)}
\gppoint{gp mark 7}{(7.748,5.883)}
\gppoint{gp mark 7}{(9.708,6.702)}
\gppoint{gp mark 7}{(11.859,7.738)}
\gppoint{gp mark 7}{(2.184,3.317)}
\gppoint{gp mark 7}{(3.376,3.855)}
\gppoint{gp mark 7}{(2.488,3.426)}
\gppoint{gp mark 7}{(11.377,7.458)}
\gppoint{gp mark 7}{(1.512,3.007)}
\gppoint{gp mark 7}{(6.495,5.294)}
\gppoint{gp mark 7}{(1.701,3.114)}
\gppoint{gp mark 7}{(8.840,6.373)}
\gppoint{gp mark 7}{(11.863,7.698)}
\gppoint{gp mark 7}{(3.197,3.793)}
\gppoint{gp mark 7}{(6.112,5.093)}
\gppoint{gp mark 7}{(5.350,4.751)}
\gppoint{gp mark 7}{(11.429,7.554)}
\gppoint{gp mark 7}{(1.568,3.041)}
\gppoint{gp mark 7}{(10.097,6.915)}
\gppoint{gp mark 7}{(9.102,6.467)}
\gppoint{gp mark 7}{(1.845,3.149)}
\gppoint{gp mark 7}{(9.720,6.770)}
\gppoint{gp mark 7}{(5.043,4.571)}
\gppoint{gp mark 7}{(4.373,4.318)}
\gppoint{gp mark 7}{(1.510,3.042)}
\gppoint{gp mark 7}{(1.562,3.043)}
\gppoint{gp mark 7}{(2.592,3.518)}
\gppoint{gp mark 7}{(11.252,7.433)}
\gppoint{gp mark 7}{(7.603,5.758)}
\gppoint{gp mark 7}{(3.767,4.049)}
\gppoint{gp mark 7}{(2.905,3.652)}
\gppoint{gp mark 7}{(9.905,6.820)}
\gppoint{gp mark 7}{(11.795,7.640)}
\gppoint{gp mark 7}{(8.951,6.388)}
\gppoint{gp mark 7}{(2.234,3.350)}
\gppoint{gp mark 7}{(3.237,3.761)}
\gppoint{gp mark 7}{(6.522,5.270)}
\gppoint{gp mark 7}{(5.824,4.923)}
\gppoint{gp mark 7}{(10.396,7.022)}
\gppoint{gp mark 7}{(1.649,3.093)}
\gppoint{gp mark 7}{(8.570,6.197)}
\gppoint{gp mark 7}{(6.869,5.448)}
\gppoint{gp mark 7}{(7.544,5.755)}
\gppoint{gp mark 7}{(10.808,7.209)}
\gppoint{gp mark 7}{(3.093,3.707)}
\gppoint{gp mark 7}{(8.199,6.075)}
\gppoint{gp mark 7}{(11.545,7.549)}
\gppoint{gp mark 7}{(2.412,3.442)}
\gppoint{gp mark 7}{(11.360,7.513)}
\gppoint{gp mark 7}{(4.289,4.302)}
\gppoint{gp mark 7}{(11.142,7.387)}
\gppoint{gp mark 7}{(5.879,4.996)}
\gpcolor{color=gp lt color border}
\gpsetlinetype{gp lt border}
\draw[gp path] (1.504,9.029)--(1.504,0.985)--(11.893,0.985)--(11.893,9.029)--cycle;
\gpdefrectangularnode{gp plot 1}{\pgfpoint{1.504cm}{0.985cm}}{\pgfpoint{11.893cm}{9.029cm}}
\end{tikzpicture}
\end{center}
\end{figure}
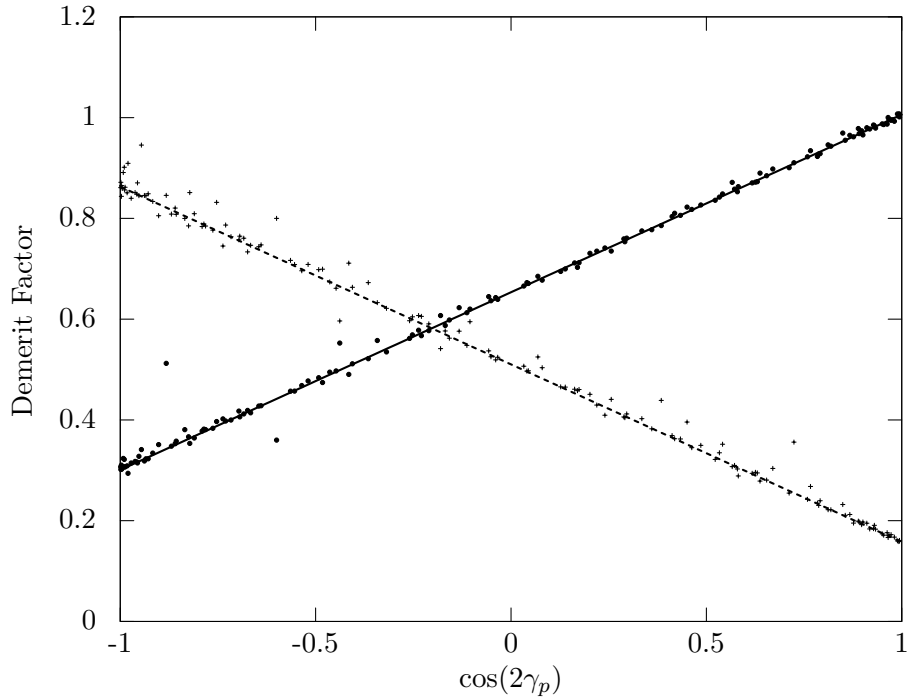
\end{center}

If we crosscorrelate our appended sequences $\fpapp$ and $\gpapp$ derived from quartic characters with appended shifted Legendre sequences $\hpapp$, the performance does not depend appreciably on the prime $p$.  This is also in accord with our proof in Theorem \ref{Clarence} that the asymptotic crosscorrelation demerit factor for $(\fpapp,\hpapp)$ or $(\gpapp,\hpapp)$ should always tend to $1.005976\ldots$ (merit factor $0.994058\ldots$) as $p\to\infty$.
In Figure \ref{Edith}, we show the dependence of autocorrelation and crosscorrelation demerit factors for $\fpapp$ and $\hpapp$ on $\cos(2\gamma_p)$.
The lines indicate the asymptotic values calculated in Theorems \ref{Anne}, \ref{Rebecca}, and \ref{Clarence} of this paper.
The data points show the actual values for the sequence pair $(\fpapp,\hpapp)$ for all primes $p$ with $p\equiv 1 \pmod{4}$ and $p < 2000$.  Note that the autocorrelation demerit factor for $\hpapp$ shows little dependence on $\gamma_p$ (it has asymptotic value of $0.157677\ldots$, or equivalently merit factor $6.342061\ldots$, regardless of $\gamma_p$), while the autocorrelation demerit factor of $\fpapp$ has significant dependence, as was already seen in Figure \ref{Boris}.  The crosscorrelation demerit factor for $(\fpapp,\hpapp)$ is always around $1.006976\ldots$ (merit factor $0.994058\ldots$), which is the asymptotic value given by Theorem \ref{Clarence}.
One would get a similar plot if $\fpapp$ were replaced with $\gpapp$.

\begin{center}
\begin{figure}
\begin{center}
\caption{Demerit factors as a function of $\cos(2\gamma_p)$: autocorrelation of appended quartic residue sequence (plusses, dashed line); autocorrelation of appended Legendre sequence (diamonds, dot-dash line); crosscorrelation of these paired (filled circles, solid line)}\label{Edith}
\begin{tikzpicture}[gnuplot]
\path (0.000,0.000) rectangle (12.446,9.398);
\gpcolor{color=gp lt color border}
\gpsetlinetype{gp lt border}
\gpsetlinewidth{1.00}
\draw[gp path] (1.504,0.985)--(1.684,0.985);
\draw[gp path] (11.893,0.985)--(11.713,0.985);
\node[gp node right] at (1.320,0.985) { 0};
\draw[gp path] (1.504,2.326)--(1.684,2.326);
\draw[gp path] (11.893,2.326)--(11.713,2.326);
\node[gp node right] at (1.320,2.326) { 0.2};
\draw[gp path] (1.504,3.666)--(1.684,3.666);
\draw[gp path] (11.893,3.666)--(11.713,3.666);
\node[gp node right] at (1.320,3.666) { 0.4};
\draw[gp path] (1.504,5.007)--(1.684,5.007);
\draw[gp path] (11.893,5.007)--(11.713,5.007);
\node[gp node right] at (1.320,5.007) { 0.6};
\draw[gp path] (1.504,6.348)--(1.684,6.348);
\draw[gp path] (11.893,6.348)--(11.713,6.348);
\node[gp node right] at (1.320,6.348) { 0.8};
\draw[gp path] (1.504,7.688)--(1.684,7.688);
\draw[gp path] (11.893,7.688)--(11.713,7.688);
\node[gp node right] at (1.320,7.688) { 1};
\draw[gp path] (1.504,9.029)--(1.684,9.029);
\draw[gp path] (11.893,9.029)--(11.713,9.029);
\node[gp node right] at (1.320,9.029) { 1.2};
\draw[gp path] (1.504,0.985)--(1.504,1.165);
\draw[gp path] (1.504,9.029)--(1.504,8.849);
\node[gp node center] at (1.504,0.677) {-1};
\draw[gp path] (4.101,0.985)--(4.101,1.165);
\draw[gp path] (4.101,9.029)--(4.101,8.849);
\node[gp node center] at (4.101,0.677) {-0.5};
\draw[gp path] (6.699,0.985)--(6.699,1.165);
\draw[gp path] (6.699,9.029)--(6.699,8.849);
\node[gp node center] at (6.699,0.677) { 0};
\draw[gp path] (9.296,0.985)--(9.296,1.165);
\draw[gp path] (9.296,9.029)--(9.296,8.849);
\node[gp node center] at (9.296,0.677) { 0.5};
\draw[gp path] (11.893,0.985)--(11.893,1.165);
\draw[gp path] (11.893,9.029)--(11.893,8.849);
\node[gp node center] at (11.893,0.677) { 1};
\draw[gp path] (1.504,9.029)--(1.504,0.985)--(11.893,0.985)--(11.893,9.029)--cycle;
\node[gp node center,rotate=-270] at (0.246,5.007) {Demerit Factor};
\node[gp node center] at (6.698,0.215) {$\cos(2\gamma_p)$};
\gpcolor{rgb color={0.000,0.000,0.000}}
\gpsetlinetype{gp lt plot 2}
\gpsetlinewidth{2.00}
\draw[gp path] (1.504,6.769)--(1.609,6.722)--(1.714,6.674)--(1.819,6.626)--(1.924,6.578)%
  --(2.029,6.531)--(2.134,6.483)--(2.239,6.435)--(2.344,6.387)--(2.448,6.340)--(2.553,6.292)%
  --(2.658,6.244)--(2.763,6.196)--(2.868,6.149)--(2.973,6.101)--(3.078,6.053)--(3.183,6.005)%
  --(3.288,5.958)--(3.393,5.910)--(3.498,5.862)--(3.603,5.814)--(3.708,5.767)--(3.813,5.719)%
  --(3.918,5.671)--(4.023,5.623)--(4.127,5.576)--(4.232,5.528)--(4.337,5.480)--(4.442,5.432)%
  --(4.547,5.385)--(4.652,5.337)--(4.757,5.289)--(4.862,5.241)--(4.967,5.194)--(5.072,5.146)%
  --(5.177,5.098)--(5.282,5.050)--(5.387,5.003)--(5.492,4.955)--(5.597,4.907)--(5.702,4.859)%
  --(5.807,4.812)--(5.911,4.764)--(6.016,4.716)--(6.121,4.668)--(6.226,4.620)--(6.331,4.573)%
  --(6.436,4.525)--(6.541,4.477)--(6.646,4.429)--(6.751,4.382)--(6.856,4.334)--(6.961,4.286)%
  --(7.066,4.238)--(7.171,4.191)--(7.276,4.143)--(7.381,4.095)--(7.486,4.047)--(7.590,4.000)%
  --(7.695,3.952)--(7.800,3.904)--(7.905,3.856)--(8.010,3.809)--(8.115,3.761)--(8.220,3.713)%
  --(8.325,3.665)--(8.430,3.618)--(8.535,3.570)--(8.640,3.522)--(8.745,3.474)--(8.850,3.427)%
  --(8.955,3.379)--(9.060,3.331)--(9.165,3.283)--(9.270,3.236)--(9.374,3.188)--(9.479,3.140)%
  --(9.584,3.092)--(9.689,3.045)--(9.794,2.997)--(9.899,2.949)--(10.004,2.901)--(10.109,2.854)%
  --(10.214,2.806)--(10.319,2.758)--(10.424,2.710)--(10.529,2.663)--(10.634,2.615)--(10.739,2.567)%
  --(10.844,2.519)--(10.949,2.472)--(11.053,2.424)--(11.158,2.376)--(11.263,2.328)--(11.368,2.281)%
  --(11.473,2.233)--(11.578,2.185)--(11.683,2.137)--(11.788,2.090)--(11.893,2.042);
\gpsetlinetype{gp lt plot 4}
\draw[gp path] (1.504,2.042)--(1.609,2.042)--(1.714,2.042)--(1.819,2.042)--(1.924,2.042)%
  --(2.029,2.042)--(2.134,2.042)--(2.239,2.042)--(2.344,2.042)--(2.448,2.042)--(2.553,2.042)%
  --(2.658,2.042)--(2.763,2.042)--(2.868,2.042)--(2.973,2.042)--(3.078,2.042)--(3.183,2.042)%
  --(3.288,2.042)--(3.393,2.042)--(3.498,2.042)--(3.603,2.042)--(3.708,2.042)--(3.813,2.042)%
  --(3.918,2.042)--(4.023,2.042)--(4.127,2.042)--(4.232,2.042)--(4.337,2.042)--(4.442,2.042)%
  --(4.547,2.042)--(4.652,2.042)--(4.757,2.042)--(4.862,2.042)--(4.967,2.042)--(5.072,2.042)%
  --(5.177,2.042)--(5.282,2.042)--(5.387,2.042)--(5.492,2.042)--(5.597,2.042)--(5.702,2.042)%
  --(5.807,2.042)--(5.911,2.042)--(6.016,2.042)--(6.121,2.042)--(6.226,2.042)--(6.331,2.042)%
  --(6.436,2.042)--(6.541,2.042)--(6.646,2.042)--(6.751,2.042)--(6.856,2.042)--(6.961,2.042)%
  --(7.066,2.042)--(7.171,2.042)--(7.276,2.042)--(7.381,2.042)--(7.486,2.042)--(7.590,2.042)%
  --(7.695,2.042)--(7.800,2.042)--(7.905,2.042)--(8.010,2.042)--(8.115,2.042)--(8.220,2.042)%
  --(8.325,2.042)--(8.430,2.042)--(8.535,2.042)--(8.640,2.042)--(8.745,2.042)--(8.850,2.042)%
  --(8.955,2.042)--(9.060,2.042)--(9.165,2.042)--(9.270,2.042)--(9.374,2.042)--(9.479,2.042)%
  --(9.584,2.042)--(9.689,2.042)--(9.794,2.042)--(9.899,2.042)--(10.004,2.042)--(10.109,2.042)%
  --(10.214,2.042)--(10.319,2.042)--(10.424,2.042)--(10.529,2.042)--(10.634,2.042)--(10.739,2.042)%
  --(10.844,2.042)--(10.949,2.042)--(11.053,2.042)--(11.158,2.042)--(11.263,2.042)--(11.368,2.042)%
  --(11.473,2.042)--(11.578,2.042)--(11.683,2.042)--(11.788,2.042)--(11.893,2.042);
\gpsetlinetype{gp lt plot 0}
\draw[gp path] (1.504,7.728)--(1.609,7.728)--(1.714,7.728)--(1.819,7.728)--(1.924,7.728)%
  --(2.029,7.728)--(2.134,7.728)--(2.239,7.728)--(2.344,7.728)--(2.448,7.728)--(2.553,7.728)%
  --(2.658,7.728)--(2.763,7.728)--(2.868,7.728)--(2.973,7.728)--(3.078,7.728)--(3.183,7.728)%
  --(3.288,7.728)--(3.393,7.728)--(3.498,7.728)--(3.603,7.728)--(3.708,7.728)--(3.813,7.728)%
  --(3.918,7.728)--(4.023,7.728)--(4.127,7.728)--(4.232,7.728)--(4.337,7.728)--(4.442,7.728)%
  --(4.547,7.728)--(4.652,7.728)--(4.757,7.728)--(4.862,7.728)--(4.967,7.728)--(5.072,7.728)%
  --(5.177,7.728)--(5.282,7.728)--(5.387,7.728)--(5.492,7.728)--(5.597,7.728)--(5.702,7.728)%
  --(5.807,7.728)--(5.911,7.728)--(6.016,7.728)--(6.121,7.728)--(6.226,7.728)--(6.331,7.728)%
  --(6.436,7.728)--(6.541,7.728)--(6.646,7.728)--(6.751,7.728)--(6.856,7.728)--(6.961,7.728)%
  --(7.066,7.728)--(7.171,7.728)--(7.276,7.728)--(7.381,7.728)--(7.486,7.728)--(7.590,7.728)%
  --(7.695,7.728)--(7.800,7.728)--(7.905,7.728)--(8.010,7.728)--(8.115,7.728)--(8.220,7.728)%
  --(8.325,7.728)--(8.430,7.728)--(8.535,7.728)--(8.640,7.728)--(8.745,7.728)--(8.850,7.728)%
  --(8.955,7.728)--(9.060,7.728)--(9.165,7.728)--(9.270,7.728)--(9.374,7.728)--(9.479,7.728)%
  --(9.584,7.728)--(9.689,7.728)--(9.794,7.728)--(9.899,7.728)--(10.004,7.728)--(10.109,7.728)%
  --(10.214,7.728)--(10.319,7.728)--(10.424,7.728)--(10.529,7.728)--(10.634,7.728)--(10.739,7.728)%
  --(10.844,7.728)--(10.949,7.728)--(11.053,7.728)--(11.158,7.728)--(11.263,7.728)--(11.368,7.728)%
  --(11.473,7.728)--(11.578,7.728)--(11.683,7.728)--(11.788,7.728)--(11.893,7.728);
\gpsetlinewidth{1.00}
\gpsetpointsize{2.00}
\gppoint{gp mark 1}{(3.582,6.348)}
\gppoint{gp mark 1}{(8.696,3.926)}
\gppoint{gp mark 1}{(2.115,6.654)}
\gppoint{gp mark 1}{(10.460,3.371)}
\gppoint{gp mark 1}{(1.785,7.323)}
\gppoint{gp mark 1}{(7.839,3.864)}
\gppoint{gp mark 1}{(11.109,2.541)}
\gppoint{gp mark 1}{(5.762,4.615)}
\gppoint{gp mark 1}{(2.785,6.561)}
\gppoint{gp mark 1}{(4.422,4.983)}
\gppoint{gp mark 1}{(10.179,3.021)}
\gppoint{gp mark 1}{(1.607,7.080)}
\gppoint{gp mark 1}{(2.362,6.347)}
\gppoint{gp mark 1}{(6.009,4.847)}
\gppoint{gp mark 1}{(10.680,2.781)}
\gppoint{gp mark 1}{(4.921,5.227)}
\gppoint{gp mark 1}{(9.511,3.343)}
\gppoint{gp mark 1}{(11.653,2.133)}
\gppoint{gp mark 1}{(6.153,4.973)}
\gppoint{gp mark 1}{(4.142,5.666)}
\gppoint{gp mark 1}{(1.557,7.028)}
\gppoint{gp mark 1}{(11.712,2.102)}
\gppoint{gp mark 1}{(9.039,3.638)}
\gppoint{gp mark 1}{(11.203,2.408)}
\gppoint{gp mark 1}{(1.544,6.958)}
\gppoint{gp mark 1}{(8.031,3.940)}
\gppoint{gp mark 1}{(4.542,5.752)}
\gppoint{gp mark 1}{(2.428,6.691)}
\gppoint{gp mark 1}{(11.751,2.140)}
\gppoint{gp mark 1}{(7.113,4.361)}
\gppoint{gp mark 1}{(5.470,5.055)}
\gppoint{gp mark 1}{(4.001,5.735)}
\gppoint{gp mark 1}{(2.248,6.402)}
\gppoint{gp mark 1}{(10.009,2.853)}
\gppoint{gp mark 1}{(2.869,5.980)}
\gppoint{gp mark 1}{(9.222,3.328)}
\gppoint{gp mark 1}{(10.951,2.471)}
\gppoint{gp mark 1}{(1.530,6.747)}
\gppoint{gp mark 1}{(1.733,6.820)}
\gppoint{gp mark 1}{(7.056,4.504)}
\gppoint{gp mark 1}{(8.438,3.683)}
\gppoint{gp mark 1}{(2.638,6.246)}
\gppoint{gp mark 1}{(11.529,2.264)}
\gppoint{gp mark 1}{(9.639,3.045)}
\gppoint{gp mark 1}{(2.014,6.383)}
\gppoint{gp mark 1}{(3.917,5.651)}
\gppoint{gp mark 1}{(9.973,2.961)}
\gppoint{gp mark 1}{(8.237,3.746)}
\gppoint{gp mark 1}{(4.590,5.431)}
\gppoint{gp mark 1}{(1.522,6.784)}
\gppoint{gp mark 1}{(10.772,2.559)}
\gppoint{gp mark 1}{(1.936,6.573)}
\gppoint{gp mark 1}{(6.402,4.585)}
\gppoint{gp mark 1}{(7.582,4.056)}
\gppoint{gp mark 1}{(11.634,2.150)}
\gppoint{gp mark 1}{(4.193,5.672)}
\gppoint{gp mark 1}{(11.327,2.309)}
\gppoint{gp mark 1}{(9.670,3.061)}
\gppoint{gp mark 1}{(1.519,6.641)}
\gppoint{gp mark 1}{(1.875,6.675)}
\gppoint{gp mark 1}{(4.801,5.493)}
\gppoint{gp mark 1}{(11.836,2.057)}
\gppoint{gp mark 1}{(2.616,6.273)}
\gppoint{gp mark 1}{(6.432,4.507)}
\gppoint{gp mark 1}{(9.948,2.969)}
\gppoint{gp mark 1}{(5.388,5.038)}
\gppoint{gp mark 1}{(3.081,6.063)}
\gppoint{gp mark 1}{(1.825,6.653)}
\gppoint{gp mark 1}{(9.413,3.141)}
\gppoint{gp mark 1}{(10.640,2.610)}
\gppoint{gp mark 1}{(7.947,3.729)}
\gppoint{gp mark 1}{(11.699,2.166)}
\gppoint{gp mark 1}{(11.467,2.216)}
\gppoint{gp mark 1}{(8.874,3.416)}
\gppoint{gp mark 1}{(7.420,4.103)}
\gppoint{gp mark 1}{(5.507,5.044)}
\gppoint{gp mark 1}{(10.789,2.529)}
\gppoint{gp mark 1}{(3.346,5.971)}
\gppoint{gp mark 1}{(11.723,2.133)}
\gppoint{gp mark 1}{(11.518,2.220)}
\gppoint{gp mark 1}{(3.821,5.734)}
\gppoint{gp mark 1}{(6.929,4.320)}
\gppoint{gp mark 1}{(2.735,6.191)}
\gppoint{gp mark 1}{(1.595,6.688)}
\gppoint{gp mark 1}{(1.752,6.650)}
\gppoint{gp mark 1}{(10.914,2.472)}
\gppoint{gp mark 1}{(3.146,6.084)}
\gppoint{gp mark 1}{(11.855,2.048)}
\gppoint{gp mark 1}{(9.469,3.229)}
\gppoint{gp mark 1}{(7.359,4.106)}
\gppoint{gp mark 1}{(5.606,4.944)}
\gppoint{gp mark 1}{(8.212,3.696)}
\gppoint{gp mark 1}{(11.316,2.320)}
\gppoint{gp mark 1}{(1.724,6.671)}
\gppoint{gp mark 1}{(2.976,6.104)}
\gppoint{gp mark 1}{(6.910,4.316)}
\gppoint{gp mark 1}{(7.748,4.007)}
\gppoint{gp mark 1}{(9.708,3.014)}
\gppoint{gp mark 1}{(11.859,2.054)}
\gppoint{gp mark 1}{(2.184,6.405)}
\gppoint{gp mark 1}{(3.376,5.995)}
\gppoint{gp mark 1}{(2.488,6.410)}
\gppoint{gp mark 1}{(11.377,2.308)}
\gppoint{gp mark 1}{(1.512,6.828)}
\gppoint{gp mark 1}{(6.495,4.465)}
\gppoint{gp mark 1}{(1.701,6.692)}
\gppoint{gp mark 1}{(8.840,3.458)}
\gppoint{gp mark 1}{(11.863,2.063)}
\gppoint{gp mark 1}{(3.197,5.902)}
\gppoint{gp mark 1}{(6.112,4.658)}
\gppoint{gp mark 1}{(5.350,4.986)}
\gppoint{gp mark 1}{(11.429,2.292)}
\gppoint{gp mark 1}{(1.568,6.762)}
\gppoint{gp mark 1}{(10.097,2.866)}
\gppoint{gp mark 1}{(9.102,3.297)}
\gppoint{gp mark 1}{(1.845,6.650)}
\gppoint{gp mark 1}{(9.720,2.921)}
\gppoint{gp mark 1}{(5.043,5.151)}
\gppoint{gp mark 1}{(4.373,5.418)}
\gppoint{gp mark 1}{(1.510,6.757)}
\gppoint{gp mark 1}{(1.562,6.727)}
\gppoint{gp mark 1}{(2.592,6.239)}
\gppoint{gp mark 1}{(11.252,2.294)}
\gppoint{gp mark 1}{(7.603,4.071)}
\gppoint{gp mark 1}{(3.767,5.789)}
\gppoint{gp mark 1}{(2.905,6.259)}
\gppoint{gp mark 1}{(9.905,2.956)}
\gppoint{gp mark 1}{(11.795,2.107)}
\gppoint{gp mark 1}{(8.951,3.413)}
\gppoint{gp mark 1}{(2.234,6.485)}
\gppoint{gp mark 1}{(3.237,5.990)}
\gppoint{gp mark 1}{(6.522,4.500)}
\gppoint{gp mark 1}{(5.824,4.850)}
\gppoint{gp mark 1}{(10.396,2.693)}
\gppoint{gp mark 1}{(1.649,6.615)}
\gppoint{gp mark 1}{(8.570,3.548)}
\gppoint{gp mark 1}{(6.869,4.381)}
\gppoint{gp mark 1}{(7.544,4.075)}
\gppoint{gp mark 1}{(10.808,2.591)}
\gppoint{gp mark 1}{(3.093,6.111)}
\gppoint{gp mark 1}{(8.199,3.706)}
\gppoint{gp mark 1}{(11.545,2.213)}
\gppoint{gp mark 1}{(2.412,6.248)}
\gppoint{gp mark 1}{(11.360,2.273)}
\gppoint{gp mark 1}{(4.289,5.502)}
\gppoint{gp mark 1}{(11.142,2.397)}
\gppoint{gp mark 1}{(5.879,4.754)}
\gpsetpointsize{2.80}
\gppoint{gp mark 12}{(3.582,6.348)}
\gppoint{gp mark 12}{(8.696,2.832)}
\gppoint{gp mark 12}{(2.115,3.013)}
\gppoint{gp mark 12}{(10.460,2.813)}
\gppoint{gp mark 12}{(1.785,2.633)}
\gppoint{gp mark 12}{(7.839,2.297)}
\gppoint{gp mark 12}{(11.109,2.336)}
\gppoint{gp mark 12}{(5.762,2.495)}
\gppoint{gp mark 12}{(2.785,2.410)}
\gppoint{gp mark 12}{(4.422,2.288)}
\gppoint{gp mark 12}{(10.179,2.263)}
\gppoint{gp mark 12}{(1.607,2.227)}
\gppoint{gp mark 12}{(2.362,2.215)}
\gppoint{gp mark 12}{(6.009,2.210)}
\gppoint{gp mark 12}{(10.680,2.166)}
\gppoint{gp mark 12}{(4.921,2.168)}
\gppoint{gp mark 12}{(9.511,2.102)}
\gppoint{gp mark 12}{(11.653,2.211)}
\gppoint{gp mark 12}{(6.153,2.200)}
\gppoint{gp mark 12}{(4.142,2.161)}
\gppoint{gp mark 12}{(1.557,2.128)}
\gppoint{gp mark 12}{(11.712,2.088)}
\gppoint{gp mark 12}{(9.039,2.146)}
\gppoint{gp mark 12}{(11.203,2.140)}
\gppoint{gp mark 12}{(1.544,2.124)}
\gppoint{gp mark 12}{(8.031,2.149)}
\gppoint{gp mark 12}{(4.542,2.170)}
\gppoint{gp mark 12}{(2.428,2.139)}
\gppoint{gp mark 12}{(11.751,2.184)}
\gppoint{gp mark 12}{(7.113,2.105)}
\gppoint{gp mark 12}{(5.470,2.110)}
\gppoint{gp mark 12}{(4.001,2.126)}
\gppoint{gp mark 12}{(2.248,2.085)}
\gppoint{gp mark 12}{(10.009,2.112)}
\gppoint{gp mark 12}{(2.869,2.080)}
\gppoint{gp mark 12}{(9.222,2.097)}
\gppoint{gp mark 12}{(10.951,2.060)}
\gppoint{gp mark 12}{(1.530,2.088)}
\gppoint{gp mark 12}{(1.733,2.096)}
\gppoint{gp mark 12}{(7.056,2.080)}
\gppoint{gp mark 12}{(8.438,2.108)}
\gppoint{gp mark 12}{(2.638,2.080)}
\gppoint{gp mark 12}{(11.529,2.109)}
\gppoint{gp mark 12}{(9.639,2.080)}
\gppoint{gp mark 12}{(2.014,2.096)}
\gppoint{gp mark 12}{(3.917,2.088)}
\gppoint{gp mark 12}{(9.973,2.098)}
\gppoint{gp mark 12}{(8.237,2.079)}
\gppoint{gp mark 12}{(4.590,2.087)}
\gppoint{gp mark 12}{(1.522,2.095)}
\gppoint{gp mark 12}{(10.772,2.066)}
\gppoint{gp mark 12}{(1.936,2.074)}
\gppoint{gp mark 12}{(6.402,2.095)}
\gppoint{gp mark 12}{(7.582,2.076)}
\gppoint{gp mark 12}{(11.634,2.071)}
\gppoint{gp mark 12}{(4.193,2.078)}
\gppoint{gp mark 12}{(11.327,2.092)}
\gppoint{gp mark 12}{(9.670,2.087)}
\gppoint{gp mark 12}{(1.519,2.073)}
\gppoint{gp mark 12}{(1.875,2.075)}
\gppoint{gp mark 12}{(4.801,2.063)}
\gppoint{gp mark 12}{(11.836,2.049)}
\gppoint{gp mark 12}{(2.616,2.077)}
\gppoint{gp mark 12}{(6.432,2.069)}
\gppoint{gp mark 12}{(9.948,2.069)}
\gppoint{gp mark 12}{(5.388,2.087)}
\gppoint{gp mark 12}{(3.081,2.057)}
\gppoint{gp mark 12}{(1.825,2.074)}
\gppoint{gp mark 12}{(9.413,2.076)}
\gppoint{gp mark 12}{(10.640,2.058)}
\gppoint{gp mark 12}{(7.947,2.056)}
\gppoint{gp mark 12}{(11.699,2.070)}
\gppoint{gp mark 12}{(11.467,2.074)}
\gppoint{gp mark 12}{(8.874,2.061)}
\gppoint{gp mark 12}{(7.420,2.067)}
\gppoint{gp mark 12}{(5.507,2.070)}
\gppoint{gp mark 12}{(10.789,2.080)}
\gppoint{gp mark 12}{(3.346,2.065)}
\gppoint{gp mark 12}{(11.723,2.064)}
\gppoint{gp mark 12}{(11.518,2.058)}
\gppoint{gp mark 12}{(3.821,2.063)}
\gppoint{gp mark 12}{(6.929,2.060)}
\gppoint{gp mark 12}{(2.735,2.056)}
\gppoint{gp mark 12}{(1.595,2.070)}
\gppoint{gp mark 12}{(1.752,2.068)}
\gppoint{gp mark 12}{(10.914,2.059)}
\gppoint{gp mark 12}{(3.146,2.060)}
\gppoint{gp mark 12}{(11.855,2.077)}
\gppoint{gp mark 12}{(9.469,2.058)}
\gppoint{gp mark 12}{(7.359,2.068)}
\gppoint{gp mark 12}{(5.606,2.070)}
\gppoint{gp mark 12}{(8.212,2.058)}
\gppoint{gp mark 12}{(11.316,2.063)}
\gppoint{gp mark 12}{(1.724,2.057)}
\gppoint{gp mark 12}{(2.976,2.062)}
\gppoint{gp mark 12}{(6.910,2.064)}
\gppoint{gp mark 12}{(7.748,2.050)}
\gppoint{gp mark 12}{(9.708,2.061)}
\gppoint{gp mark 12}{(11.859,2.046)}
\gppoint{gp mark 12}{(2.184,2.056)}
\gppoint{gp mark 12}{(3.376,2.060)}
\gppoint{gp mark 12}{(2.488,2.063)}
\gppoint{gp mark 12}{(11.377,2.062)}
\gppoint{gp mark 12}{(1.512,2.059)}
\gppoint{gp mark 12}{(6.495,2.060)}
\gppoint{gp mark 12}{(1.701,2.065)}
\gppoint{gp mark 12}{(8.840,2.064)}
\gppoint{gp mark 12}{(11.863,2.064)}
\gppoint{gp mark 12}{(3.197,2.054)}
\gppoint{gp mark 12}{(6.112,2.058)}
\gppoint{gp mark 12}{(5.350,2.053)}
\gppoint{gp mark 12}{(11.429,2.052)}
\gppoint{gp mark 12}{(1.568,2.058)}
\gppoint{gp mark 12}{(10.097,2.055)}
\gppoint{gp mark 12}{(9.102,2.059)}
\gppoint{gp mark 12}{(1.845,2.060)}
\gppoint{gp mark 12}{(9.720,2.047)}
\gppoint{gp mark 12}{(5.043,2.052)}
\gppoint{gp mark 12}{(4.373,2.057)}
\gppoint{gp mark 12}{(1.510,2.053)}
\gppoint{gp mark 12}{(1.562,2.060)}
\gppoint{gp mark 12}{(2.592,2.058)}
\gppoint{gp mark 12}{(11.252,2.047)}
\gppoint{gp mark 12}{(7.603,2.050)}
\gppoint{gp mark 12}{(3.767,2.053)}
\gppoint{gp mark 12}{(2.905,2.045)}
\gppoint{gp mark 12}{(9.905,2.067)}
\gppoint{gp mark 12}{(11.795,2.063)}
\gppoint{gp mark 12}{(8.951,2.043)}
\gppoint{gp mark 12}{(2.234,2.055)}
\gppoint{gp mark 12}{(3.237,2.061)}
\gppoint{gp mark 12}{(6.522,2.057)}
\gppoint{gp mark 12}{(5.824,2.057)}
\gppoint{gp mark 12}{(10.396,2.054)}
\gppoint{gp mark 12}{(1.649,2.054)}
\gppoint{gp mark 12}{(8.570,2.057)}
\gppoint{gp mark 12}{(6.869,2.043)}
\gppoint{gp mark 12}{(7.544,2.060)}
\gppoint{gp mark 12}{(10.808,2.056)}
\gppoint{gp mark 12}{(3.093,2.050)}
\gppoint{gp mark 12}{(8.199,2.054)}
\gppoint{gp mark 12}{(11.545,2.058)}
\gppoint{gp mark 12}{(2.412,2.064)}
\gppoint{gp mark 12}{(11.360,2.046)}
\gppoint{gp mark 12}{(4.289,2.049)}
\gppoint{gp mark 12}{(11.142,2.054)}
\gppoint{gp mark 12}{(5.879,2.061)}
\gpsetpointsize{2.00}
\gppoint{gp mark 7}{(10.460,8.456)}
\gppoint{gp mark 7}{(1.785,6.569)}
\gppoint{gp mark 7}{(7.839,8.160)}
\gppoint{gp mark 7}{(11.109,8.236)}
\gppoint{gp mark 7}{(5.762,6.895)}
\gppoint{gp mark 7}{(4.422,8.409)}
\gppoint{gp mark 7}{(10.179,7.303)}
\gppoint{gp mark 7}{(1.607,8.612)}
\gppoint{gp mark 7}{(2.362,8.417)}
\gppoint{gp mark 7}{(6.009,8.426)}
\gppoint{gp mark 7}{(10.680,7.914)}
\gppoint{gp mark 7}{(4.921,8.221)}
\gppoint{gp mark 7}{(9.511,7.521)}
\gppoint{gp mark 7}{(11.653,7.804)}
\gppoint{gp mark 7}{(6.153,8.376)}
\gppoint{gp mark 7}{(4.142,8.400)}
\gppoint{gp mark 7}{(1.557,7.350)}
\gppoint{gp mark 7}{(11.712,7.814)}
\gppoint{gp mark 7}{(9.039,7.475)}
\gppoint{gp mark 7}{(11.203,8.038)}
\gppoint{gp mark 7}{(1.544,8.479)}
\gppoint{gp mark 7}{(8.031,8.013)}
\gppoint{gp mark 7}{(4.542,8.316)}
\gppoint{gp mark 7}{(2.428,7.155)}
\gppoint{gp mark 7}{(11.751,7.775)}
\gppoint{gp mark 7}{(7.113,7.469)}
\gppoint{gp mark 7}{(5.470,7.423)}
\gppoint{gp mark 7}{(4.001,8.190)}
\gppoint{gp mark 7}{(2.248,8.122)}
\gppoint{gp mark 7}{(10.009,7.889)}
\gppoint{gp mark 7}{(2.869,8.092)}
\gppoint{gp mark 7}{(9.222,7.942)}
\gppoint{gp mark 7}{(10.951,7.942)}
\gppoint{gp mark 7}{(1.530,7.338)}
\gppoint{gp mark 7}{(1.733,8.213)}
\gppoint{gp mark 7}{(7.056,7.412)}
\gppoint{gp mark 7}{(8.438,7.587)}
\gppoint{gp mark 7}{(2.638,8.183)}
\gppoint{gp mark 7}{(11.529,7.779)}
\gppoint{gp mark 7}{(9.639,7.904)}
\gppoint{gp mark 7}{(2.014,7.280)}
\gppoint{gp mark 7}{(3.917,7.375)}
\gppoint{gp mark 7}{(9.973,7.918)}
\gppoint{gp mark 7}{(8.237,7.532)}
\gppoint{gp mark 7}{(4.590,7.450)}
\gppoint{gp mark 7}{(1.522,8.077)}
\gppoint{gp mark 7}{(10.772,7.548)}
\gppoint{gp mark 7}{(1.936,8.034)}
\gppoint{gp mark 7}{(6.402,8.020)}
\gppoint{gp mark 7}{(7.582,7.399)}
\gppoint{gp mark 7}{(11.634,7.758)}
\gppoint{gp mark 7}{(4.193,7.421)}
\gppoint{gp mark 7}{(11.327,7.663)}
\gppoint{gp mark 7}{(9.670,7.540)}
\gppoint{gp mark 7}{(1.519,8.008)}
\gppoint{gp mark 7}{(1.875,8.129)}
\gppoint{gp mark 7}{(4.801,7.485)}
\gppoint{gp mark 7}{(11.836,7.744)}
\gppoint{gp mark 7}{(2.616,8.044)}
\gppoint{gp mark 7}{(6.432,7.448)}
\gppoint{gp mark 7}{(9.948,7.635)}
\gppoint{gp mark 7}{(5.388,7.412)}
\gppoint{gp mark 7}{(3.081,8.005)}
\gppoint{gp mark 7}{(1.825,7.433)}
\gppoint{gp mark 7}{(9.413,7.533)}
\gppoint{gp mark 7}{(10.640,7.813)}
\gppoint{gp mark 7}{(7.947,7.907)}
\gppoint{gp mark 7}{(11.699,7.669)}
\gppoint{gp mark 7}{(11.467,7.687)}
\gppoint{gp mark 7}{(8.874,7.929)}
\gppoint{gp mark 7}{(7.420,7.536)}
\gppoint{gp mark 7}{(5.507,8.056)}
\gppoint{gp mark 7}{(10.789,7.812)}
\gppoint{gp mark 7}{(3.346,7.963)}
\gppoint{gp mark 7}{(11.723,7.715)}
\gppoint{gp mark 7}{(11.518,7.723)}
\gppoint{gp mark 7}{(3.821,7.457)}
\gppoint{gp mark 7}{(6.929,7.899)}
\gppoint{gp mark 7}{(2.735,7.461)}
\gppoint{gp mark 7}{(1.595,7.445)}
\gppoint{gp mark 7}{(1.752,8.048)}
\gppoint{gp mark 7}{(10.914,7.801)}
\gppoint{gp mark 7}{(3.146,8.009)}
\gppoint{gp mark 7}{(11.855,7.700)}
\gppoint{gp mark 7}{(9.469,7.661)}
\gppoint{gp mark 7}{(7.359,7.504)}
\gppoint{gp mark 7}{(5.606,7.974)}
\gppoint{gp mark 7}{(8.212,7.955)}
\gppoint{gp mark 7}{(11.316,7.743)}
\gppoint{gp mark 7}{(1.724,7.452)}
\gppoint{gp mark 7}{(2.976,7.489)}
\gppoint{gp mark 7}{(6.910,7.889)}
\gppoint{gp mark 7}{(7.748,7.629)}
\gppoint{gp mark 7}{(9.708,7.566)}
\gppoint{gp mark 7}{(11.859,7.733)}
\gppoint{gp mark 7}{(2.184,7.975)}
\gppoint{gp mark 7}{(3.376,8.035)}
\gppoint{gp mark 7}{(2.488,7.962)}
\gppoint{gp mark 7}{(11.377,7.656)}
\gppoint{gp mark 7}{(1.512,7.954)}
\gppoint{gp mark 7}{(6.495,7.896)}
\gppoint{gp mark 7}{(1.701,7.997)}
\gppoint{gp mark 7}{(8.840,7.880)}
\gppoint{gp mark 7}{(11.863,7.750)}
\gppoint{gp mark 7}{(3.197,7.920)}
\gppoint{gp mark 7}{(6.112,7.871)}
\gppoint{gp mark 7}{(5.350,7.971)}
\gppoint{gp mark 7}{(11.429,7.689)}
\gppoint{gp mark 7}{(1.568,7.534)}
\gppoint{gp mark 7}{(10.097,7.880)}
\gppoint{gp mark 7}{(9.102,7.838)}
\gppoint{gp mark 7}{(1.845,7.533)}
\gppoint{gp mark 7}{(9.720,7.822)}
\gppoint{gp mark 7}{(5.043,7.505)}
\gppoint{gp mark 7}{(4.373,7.918)}
\gppoint{gp mark 7}{(1.510,7.994)}
\gppoint{gp mark 7}{(1.562,7.969)}
\gppoint{gp mark 7}{(2.592,7.529)}
\gppoint{gp mark 7}{(11.252,7.802)}
\gppoint{gp mark 7}{(7.603,7.856)}
\gppoint{gp mark 7}{(3.767,7.488)}
\gppoint{gp mark 7}{(2.905,7.548)}
\gppoint{gp mark 7}{(9.905,7.848)}
\gppoint{gp mark 7}{(11.795,7.680)}
\gppoint{gp mark 7}{(8.951,7.846)}
\gppoint{gp mark 7}{(2.234,7.960)}
\gppoint{gp mark 7}{(3.237,7.496)}
\gppoint{gp mark 7}{(6.522,7.588)}
\gppoint{gp mark 7}{(5.824,7.511)}
\gppoint{gp mark 7}{(10.396,7.599)}
\gppoint{gp mark 7}{(1.649,7.468)}
\gppoint{gp mark 7}{(8.570,7.876)}
\gppoint{gp mark 7}{(6.869,7.613)}
\gppoint{gp mark 7}{(7.544,7.841)}
\gppoint{gp mark 7}{(10.808,7.654)}
\gppoint{gp mark 7}{(3.093,7.575)}
\gppoint{gp mark 7}{(8.199,7.852)}
\gppoint{gp mark 7}{(11.545,7.756)}
\gppoint{gp mark 7}{(2.412,7.520)}
\gppoint{gp mark 7}{(11.360,7.749)}
\gppoint{gp mark 7}{(4.289,7.595)}
\gppoint{gp mark 7}{(11.142,7.650)}
\gppoint{gp mark 7}{(5.879,7.897)}
\gpcolor{color=gp lt color border}
\gpsetlinetype{gp lt border}
\draw[gp path] (1.504,9.029)--(1.504,0.985)--(11.893,0.985)--(11.893,9.029)--cycle;
\gpdefrectangularnode{gp plot 1}{\pgfpoint{1.504cm}{0.985cm}}{\pgfpoint{11.893cm}{9.029cm}}
\end{tikzpicture}
\end{center}
\end{figure}
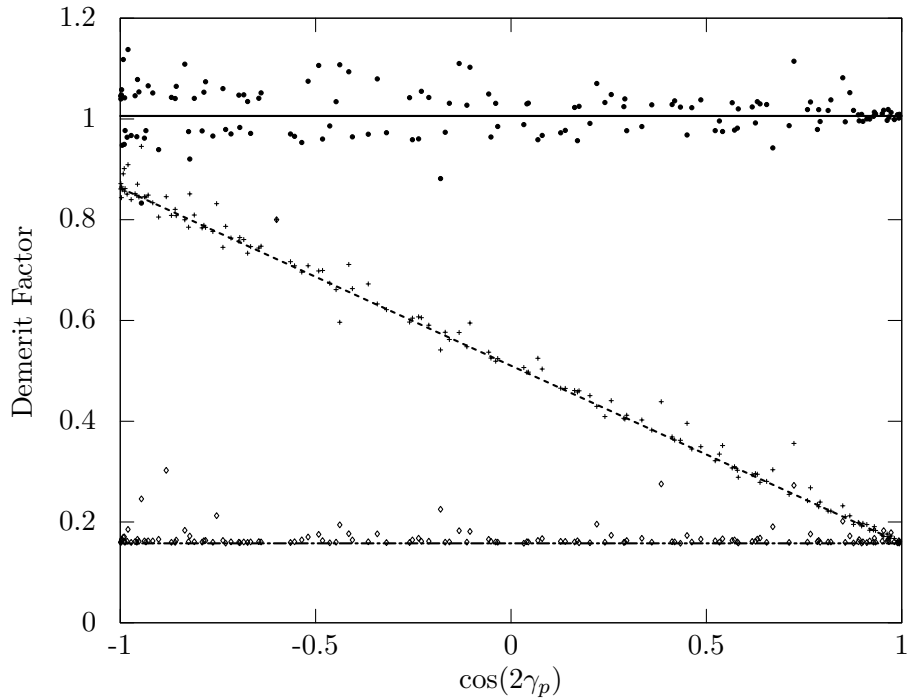
\end{center}

In summary, the sequences $\fpslu$ and $\gpslu$ derived from quartic characters furnish binary sequences with autocorrelation and crosscorrelation properties far superior to randomly selected sequences.  By a careful selection of primes, one can obtain the same asymptotic autocorrelation performance as one can obtain with the modified Legendre sequences, $\hpslu$, which furnish the highest known asymptotic autocorrelation merit factor for binary sequences ($6.342061\ldots$).  However, there is little reason to use $\fpslu$ and $\gpslu$ for applications that solely depend upon autocorrelation, as for most primes they will fall short of the performance of the modified Legendre sequences.
When crosscorrelation performance is important, then pairs $(\fpslu,\gpslu)$ become very interesting, since for most primes they have significantly lower mutual crosscorrelation than random sequences.  We see that there is a tradeoff between autocorrelation and crosscorrelation performance, but both can simultaneously be made considerably better than that of random sequences.
One can also crosscorrelate modified Legendre sequences $\hpslu$ with $\fpslu$ and $\gpslu$ and obtain crosscorrelation performance on par with random sequences while maintaining high autocorrelation performance.

We should remark that the autocorrelation behavior of $\fpslu$ and $\gpslu$ was independently discovered by G\"unther and Schmidt and reported in \cite{Gunther-Schmidt-2015-Merit}.  They obtain the same formulae for asymptotic autocorrelation merit factor as presented here in Theorem \ref{Anne}.  They focus on showing that $\fpslu$ and $\gpslu$ can (by careful choice of primes) obtain the same record asymptotic merit factor of $6.342061\ldots$ that has already been obtained in \cite{Jedwab-Katz-Schmidt-2013-Littlewood, Katz-2013-Asymptotic, Jedwab-Katz-Schmidt-2013-Advances} with the modified Legendre sequences $\hpslu$.  In their paper, they do not discuss crosscorrelation, which is the main concern here, in view of the discussion in the previous paragraph.

The sequences discussed in this Introduction are some of the the simplest examples that can be constructed using linear combinations of multiplicative characters.  The rest of this paper provides proofs of the asymptotic autocorrelation and crosscorrelation merit factors like \eqref{Janice} and \eqref{Helen} discussed in this Introduction (these occur in Section \ref{Gilbert}).  But our study goes much further: it provides formulae for a very general class of sequence constructions using linear combinations of multiplicative characters of finite fields.  Thus the proofs in Section \ref{Gilbert} of the facts adduced in this Introduction are simply an application of much more general results.  We hope these general formulae will provide tools that will enable researchers to design many interesting sequences and sequence families with superior autocorrelation and crosscorrelation performance.

We now summarize the rest of the paper.  Section \ref{Henry} introduces finite field characters and Section \ref{Karl} uses these to define {\it character combination sequences}, which are formed using linear combinations of multiplicative finite field characters.  Section \ref{Alan} has our main theorem that furnishes the asymptotic crosscorrelation merit factors for families of pairs of character combination sequences.  As a corollary, we also provide the asymptotic autocorrelation merit factors for families of character combination sequences.  Section \ref{Percy} shows that some terms in our asymptotic merit factor formulae are closely related to the {\it periodic} correlation properties of our sequences.  Section \ref{Ophelia} shows that the maximum asymptotic autocorrelation merit factor achievable with character combination sequences is $6.342061\ldots$, the highest value currently known for binary sequences.  Section \ref{Ophelia} also gives necessary and sufficient conditions for reaching the maximum value.  The connection that was made between periodic correlation and aperiodic correlation in Section \ref{Percy} now manifests itself as a principle of importance in Section \ref{Ophelia}: lower periodic correlation leads to lower aperiodic correlation for character combination sequences.  In her 1967 paper \cite[p.~157]{Boehmer-1967-Binary}, Boehmer states
\begin{quotation}
In fact, a necessary, but not sufficient, condition for low pulse compression code sidelobes is that the periodic sidelobes be low.  It was then hoped that at least some of the cyclic permutations of those good periodic codes would also be good pulse compression codes.  Many were.
\end{quotation}
Our results in Section \ref{Ophelia} provide further vindication of Boehmer's approach.

Section \ref{Emily} discusses a special case of character combination sequences: binary sequences derived from $2 m$th order residues in finite fields.  When $m=1$, we obtain Legendre sequences and their modifications $\hpslu$.  When $m=2$, we obtain the sequences $\fpslu$ and $\gpslu$ derived from quartic characters.  Higher values of $m$ can also be used to produce further examples not discussed here.  Section \ref{Gilbert} applies the results of Sections \ref{Alan}--\ref{Emily} to the sequences $\hpslu$, $\fpslu$, and $\gpslu$ to provide the results discussed in this Introduction.  Finally, Section \ref{James} demonstrates a construction (based on the sequences defined in this Introduction) which shows that there is no upper bound on the crosscorrelation merit factor of pairs of sequences derived from linear combinations of characters, that is, the crosscorrelation demerit factor can asymptotically approach $0$.

\section{Characters and Gauss Sums}\label{Henry}

In this section, we let $p$ be a prime, let $\Fp$ be the finite field of order $p$, and let $\Fpu$ denote the multiplicative group of nonzero elements of $\Fp$.
A {\it multiplicative character} of $\Fp$ is a group homomorphism from $\Fpu$ into the multiplicative group $\C^*$ of nonzero complex numbers.
We let $\mchars$ denote the set of multiplicative characters of $\Fp$, which is a group whose group operation is multiplication of functions: if $\chi,\psi\in\mchars$, then $(\chi\psi)(a)=\chi(a)\psi(a)$.
We write powers of characters $\chi^k$, including the negative powers, and $\chi^{-1}$ denotes the character with $\chi^{-1}(a)=1/\chi(a)$, not the inverse function.  We also write $\conj{\chi}$ for $\chi^{-1}$, and call it the {\it conjugate of $\chi$}.

The group $\mchars$ is isomorphic to $\Fpu$, and so is a cyclic group of order $p-1$.
Its identity element is the {\it trivial multiplicative character}, which maps every element of $\Fpu$ to $1$.
Let $\phi$ be the Euler phi function.
If we let $\omega$ be a generator of $\mchars$, we see that for each $m$ such that $m \mid p-1$, the group $\mchars$ has $\phi(m)$ characters of order $m$, namely the characters $\omega^{(p-1) k/m}$ with $0 \leq k < m$ and $\gcd(k,m)=1$.
For instance, if $p$ is odd, then there is a unique character $\omega^{(p-1)/2}$ of order $2$ called the {\it quadratic character} or {\it Legendre symbol}.
Or if $p \equiv 1 \pmod{4}$, then there are two {\it quartic characters}, $\omega^{(p-1)/4}$ and its conjugate $\omega^{3(p-1)/4}=\omega^{-(p-1)/4}$.

We extend every multiplicative character $\chi$ to be defined on all of $\Fp$ by setting $\chi(0)=0$.
We even extend the trivial multiplicative character so that it maps $0$ to $0$.

An {\it additive character} of $\Fp$ is a group homomorphism from the group $\Fp$ (with $+$ as the group operation) into the multiplicative group $\C^*$ of nonzero complex numbers.
If we let $\zeta_p=\exp(2\pi i/p)$, then the {\it canonical additive character}, which we write as $\epsilon\colon \Fp \to \C$, is defined by $\epsilon(x)=\zeta_p^x$.
For $a \in \Fp$, we let $\epsilon_a\colon \Fp \to \C$ be defined by $\epsilon_a(x)=\epsilon(a x)$: this is also an additive character, and in fact the set $\achars$ of all additive characters is just $\{\epsilon_a \colon a \in \Fp\}$.
Furthermore $\epsilon_a(x) \epsilon_b(x)=\epsilon_{a+b}(x)$, so that multiplication of characters makes $\achars$ a group that is isomorphic to $\Fp$ itself.
Note that $\epsilon_0$ is the identity element of this group: it maps every element of $\Fp$ to $1$, and so is called the {\it trivial additive character}.
Two relations that will be useful are the {\it orthogonality relations} for additive and multiplicative characters (see \cite[eq.~(5.3)]{Lidl-Niederreiter-1997-Finite}), which state that for $a \in \Fp$, we have
\begin{equation}\label{Oliver}
\sum_{b \in \Fp} \epsilon_a(b) = \begin{cases}
p & \text{if $a=0$}, \\
0 & \text{if $a\not=0$}
\end{cases}
\end{equation}
and for $\chi \in \mchars$, we have
\begin{equation}\label{Otto}
\sum_{b \in \Fpu} \chi(b) = \begin{cases}
p-1 & \text{if $\chi$ is the trivial character}, \\
0 & \text{if $\chi$ is a nontrivial character}.
\end{cases}
\end{equation}

If $\epsilon_a \in \achars$ and $\chi \in \mchars$, then the {\it Gauss sum associated to $\epsilon_a$ and $\chi$} is
\begin{equation}\label{Theodore}
\tau_a(\chi) = \sum_{x \in \Fpu} \epsilon_a(x) \chi(x).
\end{equation}
For $\chi \in \mchars$, we let
\[
\tau(\chi)=\tau_1(\chi).
\]
We can reduce general Gauss sums to these using \cite[Lemma 6(v)]{Katz-2016-Aperiodic}, which states that
\begin{equation}\label{George}
\tau_a(\chi) = \begin{cases}
p-1 & \text{if $a=0$ and $\chi$ is trivial}, \\
\conj{\chi}(a) \tau(\chi) & \text{otherwise}. \\
\end{cases}
\end{equation}
We also have by \cite[Lemma 6(vi)]{Katz-2016-Aperiodic} that
\begin{equation}\label{Katherine}
\conj{\tau(\chi)} = \chi(-1) \tau(\conj{\chi})
\end{equation}
for any $\chi \in \mchars$, and \cite[Lemma 6(iii),(iv)]{Katz-2016-Aperiodic} shows that
\begin{equation}\label{Manuel}
|\tau(\chi)| = 
\begin{cases}
\sqrt{p} & \text{if $\chi$ is nontrivial} \\
1 & \text{if $\chi$ is trivial}.
\end{cases}
\end{equation}
The Gauss sums serve as Fourier coefficients for the expressing multiplicative characters as linear combinations of additive characters, namely, if $\chi \in \mchars$, then \cite[Lemma 8]{Katz-2016-Aperiodic} tells us that
\begin{equation}\label{Genevieve}
\chi(a) = \frac{1}{p} \sum_{b \in \Fp} \tau_b(\chi) \conj{\epsilon_b}(a).
\end{equation}

\section{Character Combination Sequences}\label{Karl}

In this section we define the types of sequences, called {\it character combination sequences}, whose autocorrelation and crosscorrelation properties we investigate.  These give rise to the sequences $\fpslu$, $\gpslu$, and $\hpslu$ discussed in the Introduction.

We shall construct a sequence $f$ using a linear combination of the characters $\chi \in \mchars$.
The coefficients of the linear combination will be a family $\{f_\chi\}_{\chi \in \mchars}$ of complex numbers (typically only a few $f_\chi$ will be nonzero).
We shall always demand that $\sum_{\chi\in\mchars} |f_\chi|^2=1$ (a normalization condition), and if $\chi_0$ is the trivial multiplicative character, then $f_{\chi_0}=0$ (which makes the sequence ``balanced'' in the sense that the sum of the terms of the sequence will be zero).
This linear combination defines a function $F\colon \Fp \to \C$ with
\begin{equation}\label{Francis}
F(a)=\sum_{\chi\in\mchars} f_\chi \chi(a)
\end{equation}
that will be used to compute the terms of $f$.
We also associate to $f$ a {\it shift} $s$ and {\it length} $\ell$ and let
\[
f=(F(s),\ldots,F(s+\ell-1)),
\]
where $F(j)$ with $j \in \Z$ is interpreted by reducing $j$ modulo $p$ to get an element of $\Fp$, that is, $F(j)$ is interpreted as $F(j \!\pmod{p})$, whose value is given in \eqref{Francis} above.  Thus $f$ is a sequence of $\ell$ complex numbers, called the {\it character combination sequence with prime $p$, field $\Fp$, character combination $\{f_\chi\}_{\chi\in\mchars}$, shift $s$, and length $\ell$}.

As a result of the way we extend our characters, we note that $F(j)=0$ whenever $p\mid j$, so that $f$ may have some zero terms.
For typical applications we choose our linear combinations of characters so that $F(j)$ is a unimodular complex number (that is, of magnitude $1$) for every $j$ with $p\nmid j$.
Such character combinations are called {\it unimodularizable}, and the sequences they produce are called {\it unimodularizable character combination sequences}.
Of course, $F(j)=0$ when $p\mid j$, so some terms of a unimodularizable character combination sequences may not be unimodular.
In applications, it is often important that {\it every} term of our sequence $f$ be of complex magnitude $1$.
In this case, we replace the any term of $f$ of the form $F(j)=0$ (which occurs when $j\mid p$) with a term of complex magnitude $1$ (typically the value $1$ is used).
Such a modification is called a {\it unimodularization} of $f$ and produces a truly {\it unimodular sequence}, that is, a sequence whose terms are all of magnitude $1$.
For mathematical convenience, we shall analyze the sequences and arrays that retain the zero entries, and then demonstrate that replacing these zeroes with unimodular complex numbers does not change any of our asymptotic results.

We now see how the sequences $\fpslu$, $\gpslu$, and $\hpslu$ defined in \eqref{Elizabeth} in the Introduction fit into this formalism.
We see that $\hpslu$ is a unimodularization (setting all zero entries to $1$) of the character combination sequence with odd prime $p$, field $\Fp$, shift $s$, length $\ell$, and character combination $\{h_\chi\}_{\chi\in\mchars}$ where $h_\chi=0$ for every character $\chi$ except the quadratic character $\eta$, for which $h_\eta=1$.
We see that $\fpslu$ (resp., $\gpslu$) is a unimodularization (setting all zero entries to $1$) of the character combination sequence with prime $p \equiv 1 \pmod{4}$, field $\Fp$, shift $s$, length $\ell$, and character combination $\{f_\chi\}_{\chi\in\mchars}$ (resp., $\{g_\chi\}_{\chi \in \mchars}$) where $f_\chi=0$ (resp., $g_\chi=0$) for every character $\chi$ except the two quartic characters $\theta_p$ and $\conj{\theta_p}$, for which $f_{\theta_p}=(1-i)/2$ and $f_{\conj{\theta_p}}=(1+i)/2$ (resp., $g_{\theta_p}=(1+i)/2$ and $g_{\conj{\theta_p}}=(1-i)/2$).  In the Introduction, we had defined $\theta_p$ relative to the choice of a primitive element $\alpha_p$ of $\Fpu$, namely, by setting $\theta_p(\alpha_p)=i$.  Obviously changing our choice of $\alpha_p$ can exchange the labels $\theta_p$ and $\conj{\theta_p}$ for our characters, and thus exchange the labels $\fpslu$ and $\gpslu$ of our sequences.

In this paper we study the crosscorrelation of pairs $(f,g)$ of character combination sequences with $f$ and $g$ having the same prime $p$, field $\Fp$, and length $\ell$.  The character combination and shift can be different (when they are identical, we are studying autocorrelation).
Associated with every such sequence pair $(f,g)$ is a set of parameters:
\begin{align}
S_{f,g} & = \sums{\phi,\chi, \psi,\omega \in \mchars \\ \phi\chi=\psi\omega \\ \phi\not\in\{\conj{\chi},\psi,\omega\}} f_\phi g_\chi \conj{f_\psi g_\omega} \frac{\tau(\phi)\tau(\chi)\conj{\tau(\psi)\tau(\omega)}}{p^2} \nonumber \\
& \qquad - \sums{\phi \in \mchars} |f_{\phi} g_{\phi}|^2 - \sum_{\phi\in\mchars} |f_\phi g_{\conj{\phi}}|^2 - \sum_{\phi\in\mchars} f_{\phi} \conj{f_{\conj{\phi}}} g_{\conj{\phi}} \conj{g_{\phi}} + |f_\eta g_\eta|^2, \nonumber \\
U_{f,g} & = \left|\sum_{\phi \in \mchars} f_\phi \conj{g_\phi} \right|^2 \label{Patrick} \\ 
V_{f,g} & = \left|\sum_{\phi \in \mchars} f_\phi g_{\conj{\phi}} \,\, \phi(-1)\right|^2 \nonumber \\
W_f & = \sum_{\phi \in \mchars} |f_\phi| \nonumber \\
W_g & = \sum_{\phi \in \mchars} |g_\phi| \nonumber,
\end{align}
where $\tau$ indicates a Gauss sum (see Section \ref{Henry}) and $\eta$ denotes the quadratic character of $\Fp$.
Recall that our sequences are normalized so that $\sum_{\chi \in \mchars} |f_\chi|^2=\sum_{\chi\in\mchars} |g_\chi|^2=1$.
By the Cauchy-Schwarz inequality this means that
\begin{align}
0 & \leq U_{f,g},  V_{f,g} \leq 1 \label{Barbara} \\
0 & < W_f, W_g. \nonumber
\end{align}

Also for each single sequence $f$, we have the following parameters, which are just the parameters of the pair $(f,f)$, and which are useful for studying autocorrelation:
\begin{align}
S_{f,f} & = \sums{\phi,\chi, \psi,\omega \in \mchars \\ \phi\chi=\psi\omega \\ \phi\not\in\{\conj{\chi},\psi,\omega\}} f_\phi f_\chi \conj{f_\psi f_\omega} \frac{\tau(\phi)\tau(\chi)\conj{\tau(\psi)\tau(\omega)}}{p^2} \nonumber \\
& \qquad\qquad - 2 \sum_{\phi\in\mchars} |f_{\phi} f_{\conj{\phi}}|^2 - \sums{\phi \in \mchars} |f_{\phi}|^4 + |f_\eta|^4, \nonumber \\
U_{f,f} & = 1 \label{Albert} \\
V_{f,f} & = \left|\sum_{\phi \in \mchars} f_\phi f_{\conj{\phi}} \,\, \phi(-1)\right|^2 \nonumber \\
W_f & = \sum_{\phi \in \mchars} |f_\phi| \nonumber.
\end{align}
These parameters become important in our asymptotic calculations in the next section.

\section{Asymptotic Calculations}\label{Alan}

In this section we prove our main result (Theorem \ref{Mary}) on the asymptotic crosscorrelation merit factor for pairs of character combination sequences.  Since the autocorrelation of a sequence $f$ is just the crosscorrelation of $f$ with itself, we then obtain the asymptotic autocorrelation merit factor as Corollary \ref{Ellen}.
Our calculations are expressed in terms of the function
\[
\Omega(x,y) = \sum_{n \in \Z} \max(0,1-|n x - y|)^2,
\]
which is defined and continuous on $\{(x,y) \in \R^2: x\not=0\}$.
Although the sum in $\Omega$ appears infinite, it is locally finite (i.e., for a given $(x,y)$ in the domain, only finitely many summands are nonzero).

\begin{theorem}\label{Mary}
Let $\{(f_\iota,g_\iota)\}_{\iota \in I}$ be a family of pairs of unimodularizable character combination sequences, where for each $\iota \in I$, there is a prime $p_\iota$, field $F_\iota$ of order $p_\iota$, and length $\ell_\iota$ so that both $f_\iota$ and $g_\iota$ have these as their prime, field, and length, and suppose that $r_\iota$ and $s_\iota$ are the respective shifts of $f_\iota$ and $g_\iota$.
Suppose that $\{p_\iota\}_{\iota \in I}$ is infinite and that $\ell_\iota/p_\iota$ tends to a positive real number $\Lambda$ as $p_\iota \to \infty$.
Of the parameters defined in \eqref{Patrick}, suppose that $S_{f_\iota,g_\iota}$, $U_{f_\iota,g_\iota}$, and $V_{f_\iota,g_\iota}$ tend to real limits $S$, $U$, and $V$, respectively, as $p_\iota \to \infty$ and that $W_{f_\iota}^2 W_{g_\iota}^2 (\log p_\iota)^3/\sqrt{p_\iota} \to 0$ as $p_\iota \to \infty$.
If $U \not=0$, then suppose that $(r_\iota-s_\iota)/p_\iota$ tends to a real limit $\Delta$ as $p_\iota \to \infty$.
If $V \not=0$, then suppose that $(r_\iota+s_\iota)/p_\iota$ tends to a real limit $\Sigma$ as $p_\iota \to \infty$.
Then
\[
\CDF(f_\iota,g_\iota) \to S \cdot \frac{2}{3}\Lambda + \Omega\left(\frac{1}{\Lambda},0\right) + U \cdot \Omega\left(\frac{1}{\Lambda},\frac{\Delta}{\Lambda}\right) + V \cdot \Omega\left(\frac{1}{\Lambda},1+\frac{\Sigma}{\Lambda}\right)
\]
as $p_\iota \to \infty$.
If $f^u_\iota$ and $g^u_\iota$ are respective unimodularizations of $f_\iota$ and $g_\iota$ for each $\iota \in I$, then $\CDF(f^u_\iota,g^u_\iota)$ has the same limit as $\CDF(f_\iota,g_\iota)$.
\end{theorem}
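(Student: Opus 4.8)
The plan is to evaluate the numerator $\sum_{t\in\Z}\card{C_{f_\iota,g_\iota}(t)}^2$ of $\CDF(f_\iota,g_\iota)$ asymptotically, extract a main term matching the stated limit, and show the rest is negligible; I suppress the index $\iota$ and write $F,G$ for the functions \eqref{Francis} attached to $\{f_\chi\},\{g_\chi\}$, with shifts $r,s$, common length $\ell$, over the field of order $p$. First I would insert $f_j=F(r+j)$ and $g_{j+t}=G(s+j+t)$ into $C_{f,g}(t)=\sum_{j}F(r+j)\conj{G(s+j+t)}$ and expand each nontrivial multiplicative character by its Gauss-sum Fourier expansion: combining \eqref{Genevieve} and \eqref{George} gives $\phi(a)=\tfrac{\tau(\phi)}{p}\sum_{b}\conj{\phi}(b)\,\conj{\epsilon_b}(a)$, so $F$ and $G$ become linear combinations of additive characters. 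Each $C_{f,g}(t)$ thereby becomes a sum over a pair $\phi,\chi$ (with weight $f_\phi\conj{g_\chi}\,\tau(\phi)\conj{\tau(\chi)}/p^{2}$) and over frequencies, in which the window summation $\sum_{j=0}^{\ell-1}$ of an additive character is an incomplete geometric sum.

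Next I would form $\card{C_{f,g}(t)}^2$, which brings in a second pair $\psi,\omega$ and the four-fold Gauss product $\tau(\phi)\tau(\chi)\conj{\tau(\psi)\tau(\omega)}/p^{2}$, and sum over $t\in\Z$. The additive orthogonality relation \eqref{Oliver}, applied to the frequency variables coupled through the two window sums and the shift sum, forces the surviving character quadruples to satisfy $\phi\chi=\psi\omega$ — this is precisely the support condition in $S_{f,g}$, and the resulting object is recognizable as the periodic-correlation quantity of Section \ref{Percy}. The leftover index sums, over the window positions and over $t$ with offsets $r$ and $s$ and reduced modulo $p$, assemble in the regime $\ell/p\to\Lambda$ into the tent-square kernel $\Omega$: a triangular overlap of two length-$\ell$ windows, folded modulo $p$ and weighted by its recurrences, is exactly $\sum_{n}\max(0,1-\card{n/\Lambda-y})^{2}$ with $y$ determined by the relevant shift offset.

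Then I would split the quadruple sum according to the coincidence among the characters. Two resonant pairings survive with a nontrivial phase. The \emph{matched} pairing, in which the argument $r+j$ of $F$ aligns with the argument $s+j+t$ of $G$ (so $t\approx r-s$), peels off the factor $U_{f,g}=\card{\sum_\phi f_\phi\conj{g_\phi}}^{2}$ and carries offset $y=\Delta/\Lambda$, giving $U\,\Omega(1/\Lambda,\Delta/\Lambda)$; the \emph{reflected} pairing, aligning $r+j$ with $-(s+j+t)$ (so $t\approx-(r+s)$), peels off $V_{f,g}=\card{\sum_\phi f_\phi g_{\conj{\phi}}\,\phi(-1)}^{2}$ and carries offset $y=1+\Sigma/\Lambda$, giving $V\,\Omega(1/\Lambda,1+\Sigma/\Lambda)$. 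The balanced, shift-free configuration contributes, after normalization by $\sum_\phi\card{f_\phi}^{2}=\sum_\phi\card{g_\phi}^{2}=1$, the coefficient-one baseline $\Omega(1/\Lambda,0)$. The generic quadruples $\phi\chi=\psi\omega$ with $\phi\notin\{\conj{\chi},\psi,\omega\}$ retain a phase that is nonconstant in $t$ and equidistributes, so $\Omega(1/\Lambda,y)$ is replaced by its mean over a period, $\Lambda\int_{0}^{1/\Lambda}\Omega(1/\Lambda,y)\,dy=\tfrac23\Lambda$, yielding $S\cdot\tfrac23\Lambda$. Dividing by $\card{C_{f,f}(0)}\,\card{C_{g,g}(0)}\sim\ell^{2}\sim\Lambda^{2}p^{2}$ and invoking the continuity of $\Omega$ with the hypothesized limits $S,U,V,\Delta,\Sigma$ gives the asserted value; the subtracted terms in $S_{f,g}$ simply record the degenerate configurations removed from the generic sum.

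The hardest part is separating this main term from the error rigorously. Each incomplete additive-character sum is $O(\min(\ell,\|c/p\|^{-1}))$ in a nonzero frequency $c$, so the three geometric summations (two windows and the shift over $t$) cost $O((\log p)^{3})$ once summed against the frequency weights, while \eqref{Manuel} supplies $\card{\tau}=\sqrt p$ for each Gauss sum and the $\card{f_\phi},\card{g_\chi}$ contribute factors of $W_f,W_g$; the off-main contributions are then bounded, up to constants, by $W_f^{2}W_g^{2}(\log p)^{3}/\sqrt p$, which is exactly the quantity the hypotheses force to $0$. Carrying out these Gauss-sum and Fej\'er--kernel estimates uniformly, and genuinely isolating main from error, is the technical crux that the parameters of \eqref{Patrick} were designed to package. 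For the final claim, $F(j)=0$ occurs only when $p\mid r+j$ or $p\mid s+j$, i.e.\ at $O(1)$ positions in each window since $\ell\sim\Lambda p$, so a unimodularization perturbs each sequence in boundedly many coordinates by bounded amounts; a standard energy bound gives $\sum_t\card{C_{\delta,g}(t)}^{2}=O(\ell)$ for such a perturbation $\delta$, whence by Cauchy--Schwarz the numerator changes by $O(\ell^{3/2})=o(\ell^{2})$ while the denominator changes by $O(1)$, so $\CDF(f^{u}_\iota,g^{u}_\iota)$ has the same limit.
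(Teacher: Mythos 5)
Your outline has the same overall architecture as the paper's proof---expansion through Gauss sums into additive characters, identification of four resonant configurations yielding $S\cdot\tfrac{2}{3}\Lambda$, $\Omega(1/\Lambda,0)$, $U\cdot\Omega(1/\Lambda,\Delta/\Lambda)$, and $V\cdot\Omega(1/\Lambda,1+\Sigma/\Lambda)$, plus an error of the shape $W_f^2W_g^2(\log p)^3/\sqrt{p}$; your perturbation argument for the unimodularization step is also a sound variant of the paper's Lemma \ref{Nancy}. But there is a genuine gap exactly at what you yourself call the technical crux, and it is not merely technical. The tools you cite for the error term---Fej\'er-type bounds on incomplete geometric sums, $|\tau(\chi)|=\sqrt{p}$ from \eqref{Manuel}, and three harmonic sums contributing $(\log p)^3$---cannot produce the crucial saving of $1/\sqrt{p}$. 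In your expansion of $\sum_{t}|C_{f,g}(t)|^2$ there are \emph{four} frequency variables but only \emph{three} kernels; after a change of variables to the three differences the kernels see, one frequency direction remains free, and it carries a complete sum of the form $\sum_{b\in\Fp}\conj{\phi}(b)\chi(b+h)\psi(b+h')\conj{\omega}(b+h'')$ with four multiplicative characters at shifted arguments. On the exact diagonal $h=h'=h''=0$, multiplicative orthogonality \eqref{Otto} is what forces the product of the four characters to be trivial, i.e.\ your condition $\phi\chi=\psi\omega$ (additive orthogonality \eqref{Oliver} applied to the incomplete window and shift sums, as you invoke it, is not a valid mechanism, since those sums are not complete). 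Off the diagonal, the trivial bound on this complete sum is $p$, and with that bound the non-resonant contribution is of order $(\log p)^3$ \emph{times} the main term, which does not vanish. What is needed is square-root cancellation, $O(\sqrt{p})$, in these shifted complete character sums---a Weil-type input. This is precisely what the paper imports as the evaluation $B=Q+R$ with $|R|\leq 3/\sqrt{p}$ of the translation-averaged Gauss product in \cite[Lemma 14]{Katz-2016-Aperiodic}, and your sketch never identifies this ingredient; without it the main term cannot be separated from the error.

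Two further points. First, the paper never manipulates incomplete sums over the shift range directly: it starts from the exact identity $\sum_{t\in\Z}|C_{f,g}(t)|^2=\sum F(a+r)G(b+s)\conj{F(c+r)G(d+s)}$, summed over window indices $a,b,c,d\in\{0,\ldots,\ell-1\}$ with $a+b=c+d$, and then averages the four frequency variables over a common translation (which the constraint $a+b=c+d$ permits); this device is what converts everything into complete sums, where orthogonality and the Weil-type bound legitimately apply, and it is what you would need in place of your equidistribution heuristic. Relatedly, your explanation of the $S\cdot\tfrac{2}{3}\Lambda$ term (``non-resonant phases equidistribute, so $\Omega$ is replaced by its mean'') lands on the right constant, since $\Lambda\int_0^{1/\Lambda}\Omega(1/\Lambda,y)\,dy=\tfrac{2}{3}\Lambda$ coincides with $\lim \card{\{(a,b,c,d):a+b=c+d\}}/(p\ell^2)$, but in the correct argument this term comes from the frequency-diagonal configuration, whose phase is identically $1$, paired with that solution count; equidistribution is not an argument. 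Second, when $U=0$ (resp.\ $V=0$) the hypotheses supply no limit $\Delta$ (resp.\ $\Sigma$), so ``continuity of $\Omega$'' is unavailable for those terms; you need the uniform bound $\Omega(1/x,y)\leq 2\ceil{x}$ (the paper's Lemma \ref{Jessica}) to conclude that the corresponding contribution still tends to $0$.
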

\begin{proof}
Let $(f,g)$ be a pair of sequences in our family both having prime $p$, field $\Fp$, length $\ell$, and where $r$ and $s$ are the respective shifts of $f$ and $g$.  Let $\{f_\chi\}_{\chi\in \mchars}$ and $\{g_\chi\}_{\chi\in\mchars}$ be the character combinations of $f$ and $g$, and for $a \in \Fp$, we define
\begin{align*}
F(a) & = \sum_{\chi \in \mchars} f_\chi\chi(a)\\
G(a) & = \sum_{\chi \in \mchars} g_\chi\chi(a),
\end{align*}
so that
\begin{align*}
f & = (F(r),\ldots,F(r+\ell-1)) \\
g & = (G(s),\ldots,G(s+\ell-1)).
\end{align*}
Let $L=\{0,1,\ldots,\ell-1\}$.

We first compute the numerator $\sum_{j \in \Z} |C_{f,g}(j)|^2$ of our expression \eqref{Celeste} for $\CDF(f,g)$.  By \cite[eq.~(14)]{Katz-2016-Aperiodic}, we have
\begin{align}
\sum_{j \in \Z} |C_{f,g}(j)|^2
& = \sums{a,b \in L \\ c,d \in L \\ a+b=c+d} F(a+r) G(b+s) \conj{F(c+r) G(d+s)} \nonumber \\
& = \sum_{\phi,\chi,\psi,\omega \in \mchars} f_\phi g_\chi \conj{f_\psi g_\omega} \sums{a,b \in L \\ c,d \in L \\ a+b=c+d} A(\phi,\chi,\psi,\omega;a,b,c,d) \label{Lester}
\end{align}
where
\[
A(\phi,\chi,\psi,\omega;a,b,c,d) =  \phi(a+r) \chi(b+s) \conj{\psi(c+r) \omega(d+s)}.
\]
Now we use \eqref{Genevieve} to expand the multiplicative characters in $A$ in terms of additive characters, so that $A$ is
\[
\frac{1}{p^4} \!\! \sums{t,u \in \Fp \\v,w \in \Fp} \!\! \tau_t(\phi) \tau_u(\chi) \conj{\tau_v(\psi) \tau_w(\omega)} \epsilon(-t(a+r)-u(b+s)+v(c+r)+w(d+s)).
\]
For any $x \in F$, we note that $(t+x,u+x,v+x,w+x)$ runs through $\Fp^4$ as $(t,u,v,w)$ does, so we may replace $(t,u,v,w)$ in our last expression with $(t+x,u+x,v+x,w+x)$ and then average over $x \in \Fp$.
When we do this we note that $-(t+x)(a+r)-(u+x)(b+s)+(v+x)(c+r)+(w+x)(d+s)$ is equal to $-t(a+r)-u(b+s)+v(c+r)+w(d+s)$ because we always have $a+b=c+d$, and so we see that
\[
A = \frac{1}{p^2} \sums{t,u \in \Fp \\v,w \in \Fp} B_{\phi,\chi,\psi,\omega}(t,u,v,w) \epsilon(-t(a+r)-u(b+s)+v(c+r)+w(d+s)),
\]
where
\[
B_{\phi,\chi,\psi,\omega}(t,u,v,w) = \frac{1}{p^3} \sum_{x \in \Fp} \tau_{t+x}(\phi) \tau_{u+x}(\chi) \conj{\tau_{v+x}(\psi) \tau_{w+x}(\omega)}.
\]
Thus, returning to \eqref{Lester}, we see that $\sum_{j\in\Z} |C_{f,g}(j)|^2$ is equal to 
\begin{equation}\label{Roger}
\frac{1}{p^2} \! \sums{t,u \in \Fp \\v,w \in \Fp} \!\! \Gamma(t,u,v,w) \!\!\!\!\!\!\! \sums{a,b \in L \\ c,d \in L \\ a+b=c+d} \!\!\!\!\!\!\! \epsilon(-t(a+r)-u(b+s)+v(c+r)+w(d+s)),
\end{equation}
where
\begin{equation}\label{Gordon} 
\Gamma(t,u,v,w) = \sums{\phi,\chi \in \mchars \\ \psi,\omega \in \mchars} f_\phi g_\chi \conj{f_\psi g_\omega} B_{\phi,\chi,\psi,\omega}(t,u,v,w).
\end{equation}
Now $B_{\phi,\chi,\psi,\omega}(t,u,v,w)$ is computed in \cite[Lemma 14]{Katz-2016-Aperiodic} to be $B=Q+R$ for some $R \in \C$ with $|R| \leq 3/\sqrt{p}$ and 
\[
Q=\begin{cases}
\frac{\tau(\phi)\tau(\chi)\conj{\tau(\psi)\tau(\omega)}}{p^2} & \text{if $t=u=v=w$ and $\phi\chi=\psi\omega$}, \\
1 & \text{if $t=v\not=u=w$, $\phi=\psi$, and $\chi=\omega$}, \\
1 & \text{if $t=w\not=u=v$, $\phi=\omega$, and $\chi=\psi$}, \\
\phi\psi(-1) & \text{if $t=u\not=v=w$, $\phi=\conj{\chi}$, and $\phi=\conj{\omega}$}, \\
0 & \text{otherwise}.
\end{cases}
\]
We substitute $Q+R$ for $B$ in \eqref{Gordon}, and obtain a decomposition $\Gamma(t,u,v,w)=M(t,u,v,w)+E(t,u,v,w)$, with
\[
M(t,u,v,w) = \begin{cases}
\sums{\phi,\chi,\psi,\omega \in \mchars  \\ \phi\chi=\psi\omega} f_\phi g_\chi \conj{f_\psi g_\omega} \, \frac{\tau(\phi)\tau(\chi)\conj{\tau(\psi)\tau(\omega)}}{p^2} & \text{if $t=u=v=w$,} \\
\sums{\phi,\chi,\psi,\omega \in \mchars  \\ \phi=\psi,\chi=\omega} f_\phi g_\chi \conj{f_\psi g_\omega} & \text{if $t=v\not=u=w$,} \\
\sums{\phi,\chi,\psi,\omega \in \mchars  \\ \phi=\omega,\chi=\psi} f_\phi g_\chi \conj{f_\psi g_\omega} & \text{if $t=w\not=u=v$,} \\
\sums{\phi,\chi,\psi,\omega \in \mchars  \\ \phi=\conj{\chi},\psi=\conj{\omega}} f_\phi g_\chi \conj{f_\psi g_\omega} \, \phi\psi(-1) & \text{if $t=u\not=v=w$,} \\
0 & \text{otherwise,}
\end{cases}
\]
and
\[
|E(t,u,v,w)| \leq \frac{3 W_f^2 W_g^2}{\sqrt{p}}.
\]
Now recall the parameters $S_{f,g}$, $U_{f,g}$, $V_{f,g}$, $W_f$, and $W_g$ defined in \eqref{Patrick}, and remember that we insist on normalizing our sequences so that $\sum_{\chi} |f_\chi|^2 = \sum_{\chi} |g_\chi|^2=1$, and apply the portion of Lemma \ref{Harold} below about $S_{f,g}+1+U_{f,g}+V_{f,g}$, to see that
\[
M(t,u,v,w) = \begin{cases}
S_{f,g}+1+U_{f,g}+V_{f,g} & \text{if $t=u=v=w$,} \\
1 & \text{if $t=v\not=u=w$,} \\
U_{f,g} & \text{if $t=w\not=u=v$,} \\
V_{f,g} & \text{if $t=u\not=v=w$,} \\
0 & \text{otherwise.}
\end{cases}
\]

We regard $M(t,u,v,w)$ as the main term and $E(t,u,v,w)$ as the error term in our decomposition $\Gamma(t,u,v,w)=M(t,u,v,w)+E(t,u,v,w)$.  When we substitute this for $\Gamma(t,u,v,w)$ in \eqref{Roger}, we see that $\sum_{j\in\Z} |C_{f,g}(j)|^2=M_0+E_0$, where
\[
M_0 = \frac{1}{p^2} \! \sums{t,u \in \Fp \\v,w \in \Fp} \!\! M(t,u,v,w) \!\!\!\!\!\!\! \sums{a,b \in L \\ c,d \in L \\ a+b=c+d} \!\!\!\!\!\!\! \epsilon(-t(a+r)-u(b+s)+v(c+r)+w(d+s)),
\]
and
\[
E_0 = \frac{1}{p^2} \! \sums{t,u \in \Fp \\v,w \in \Fp} \!\! E(t,u,v,w) \!\!\!\!\!\!\! \sums{a,b \in L \\ c,d \in L \\ a+b=c+d} \!\!\!\!\!\!\! \epsilon(-t(a+r)-u(b+s)+v(c+r)+w(d+s)),
\]
and our bound on $|E(t,u,v,w)|$ makes
\[
|E_0| \leq \frac{3 W_f^2 W_g^2}{p^{5/2}} \!\!\! \sums{t,u \in \Fp \\v,w \in \Fp} \! \left|\sums{a,b \in L \\ c,d \in L \\ a+b=c+d} \!\!\!\!\!\!\! \epsilon(-t(a+r)-u(b+s)+v(c+r)+w(d+s))\right|.
\]
This sum is bounded in \cite[Lemma 15]{Katz-2016-Aperiodic}, which tells us that
\begin{equation}\label{Eustace}
|E_0| \leq 192 W_f^2 W_g^2 p^{3/2} (1+\log p)^3 \max(1,\ell/p)^3.
\end{equation}
Now we return to $M_0$, which we break into four terms, $M_1$, $M_2$, $M_3$, and $M_4$, by partitioning the summation over $t,u,v,w$ into the four regimes where $M(t,u,v,w)$ is nonzero: (i) $t=u=v=w$, (ii) $t=v\not=u=w$, (iii) $t=w\not=u=v$, and (iv) $t=u\not=v=w$.
So
\[
\sum_{j \in \Z} |C_{f,g}(j)|^2 = M_0+E_0 = M_1+M_2+M_3+M_4+E_0,
\]
where we have bounded $E_0$ in \eqref{Eustace}, and
\begin{align*}
M_1 & = (S_{f,g}+1+U_{f,g}+V_{f,g}) \cdot \frac{1}{p} \sums{a,b \in L \\ c,d \in L \\ a+b=c+d} 1, \\
M_2 & = 1 \cdot \frac{1}{p^2} \sums{t,u \in F \\ t\not=u} \sums{a,b \in L \\ c,d \in L \\ a+b=c+d} \epsilon((t-u)(c-a)), \\
M_3 & = U_{f,g} \cdot \frac{1}{p^2} \sums{t,u \in F \\ t\not=u} \sums{a,b \in L \\ c,d \in S \\ a+b=c+d} \epsilon((t-u)(d-a+s-r)), \\
M_4 & = V_{f,g} \cdot \frac{1}{p^2} \sums{t,v \in F \\ t\not=v} \sums{a,b \in L \\ c,d \in S \\ a+b=c+d} \epsilon((v-t)(a+b+r+s)).
\end{align*}
Let $I_1=\{(a,b,c,d) \in L^4: a+b=c+d\}$, so that
\[
M_1 = (S_{f,g}+1+U_{f,g}+V_{f,g}) \frac{\card{I_1}}{p}.
\]
Consider the sum over $t$ and $u$ in $M_2$.  If we restore the terms with $t=u$, we may apply the orthogonality relation \eqref{Oliver} to obtain 
\[
\frac{1}{p^2} \sum_{t,u \in \Fp} \sums{a,b \in L \\ c,d \in L \\ a+b=c+d} \epsilon((t-u)(c-a)) = \card{I_2},
\]
where
\[
I_2 = \{(a,b,c,d) \in I_1: c \equiv a \!\! \pmod{p}\}
\]
Thus if we remove the $t=u$ terms, see that $M_2$ actually comes out to
\[
M_2=\frac{1}{p^2} \sums{t,u \in \Fp \\ t\not=u} \sums{a,b \in L \\ c,d \in L \\ a+b=c+d} \epsilon((t-u)(c-a)) = \card{I_2}-\frac{\card{I_1}}{p}.
\]

Similarly, one shows that
\begin{align*}
M_3 & = U_{f,g} \left(\card{I_3}-\frac{\card{I_1}}{p}\right) \\
M_4 & = V_{f,g} \left(\card{I_4}-\frac{\card{I_1}}{p}\right), \\
\end{align*}
where
\begin{align*}
I_3 & = \{(a,b,c,d) \in I_1: d-a \equiv r-s \!\! \pmod{p}\}, \\
I_4 & = \{(a,b,c,d) \in I_1: a+b \equiv -(r+s) \!\! \pmod{p}\}. \\
\end{align*}
Thus we obtain $\sum_{j\in\Z} |C_{f,g}(j)|^2=M_S+M_T+M_U+M_V+E_0$, where $E_0$ is bounded in \eqref{Eustace} and 
\begin{align*}
M_S & = S_{f,g} \card{I_1}/p \\
M_T & = \card{I_2} \\
M_U & = U_{f,g} \card{I_3} \\
M_V & = V_{f,g} \card{I_4}.
\end{align*}
We use the computation of the cardinalities of the sets $I_1$, $I_2$, $I_3$, and $I_4$ in \cite[Lemmata 11--13]{Katz-2016-Aperiodic} to obtain
\begin{align*}
M_S & = S_{f,g} \cdot \frac{1}{p} \left(\frac{2 \ell^3+\ell}{3}\right) \\
M_T & = \ell^2 \cdot \Omega\left(\frac{p}{\ell},0\right) \\
M_U & = U_{f,g} \cdot \ell^2 \cdot \Omega\left(\frac{p}{\ell},\frac{r-s}{\ell}\right) \\
M_V & = V_{f,g} \cdot \ell^2 \cdot \Omega\left(\frac{p}{\ell},1+\frac{r+s-1}{\ell}\right),
\end{align*}
so then
\[
\frac{\sum_{j\in\Z} |C_{f,g}(j)|^2}{\ell^2} = N_S+N_T+N_U+N_V+E_1,
\]
where
\begin{align*}
N_S & = S_{f,g} \cdot \frac{2 \ell^2+1}{3 p \ell} \\
N_T & = \Omega\left(\frac{p}{\ell},0\right) \\
N_U & = U_{f,g} \cdot \Omega\left(\frac{p}{\ell},\frac{r-s}{\ell}\right) \\
N_V & = V_{f,g} \cdot \Omega\left(\frac{p}{\ell},1+\frac{r+s-1}{\ell}\right), \\
|E_1| & \leq  192 \frac{W_f^2 W_g^2 (1+\log p)^3}{\sqrt{p}} \left(\frac{p}{\ell}\right)^2 \max\left(1,\frac{\ell}{p}\right)^3.
\end{align*}

Now we look at the asymptotic behavior of $\CDF(f,g)$ for pairs $(f,g)$ in our family.
First of all, note that as $p \to \infty$ we also have $\ell\to\infty$ because $p/\ell$ tends to a positive real limit $\Lambda$ as $p\to \infty$.
Thus we may apply Lemma \ref{Lawrence} below to see that $|C_{f,f}(0)|/\ell \to 1$ and $|C_{g,g}(0)|/\ell \to 1$ as $p\to \infty$.
And so $\CDF(f,g)$ has the same limiting behavior as $\sum_{j \in\Z} |C_{f,g}(j)|^2/\ell^2$ as $p\to \infty$.

We now compute limits as $p\to \infty$ using the fact that $\ell/p \to \Lambda$ as $p\to\infty$ and that $\Omega$ is continuous on its domain.
We see that $N_S \to 2 S_{f,g} \Lambda/3$ and $N_T \to \Omega(1/\Lambda,0)$.

When $U \not=0$, we have an additional assumption that $(r-s)/p \to \Delta$ as $p\to \infty$, so we have $N_U \to U \cdot \Omega(1/\Lambda,\Delta/\Lambda)$.  When $U=0$, then $\Omega\left(\frac{p}{\ell},\frac{r-s}{\ell}\right)$ may not tend to a stable limit, but since $\ell/p \to \Lambda$, we know that for sufficiently large $p$ we can guarantee that $\ell/p \leq \floor{\Lambda}+1$, which makes $\Omega\left(\frac{p}{\ell},\frac{r-s}{\ell}\right) \leq 2(\floor{\Lambda}+1)$ by Lemma \ref{Jessica} below, and so $N_U \to 0 = U \cdot \Omega(1/\Lambda,\Delta/\Lambda)$ also in this case.

When $V_{f,g} \not=0$, we have an additional assumption that $(r+s)/p \to \Sigma$ as $p\to \infty$, so we have $N_V \to V \cdot \Omega(1/\Lambda,1+\Sigma/\Lambda)$.  This limit is also true when $V=0$ by the same argument we used when $U=0$ in the previous paragraph.

Finally, $|E_1| \to 0$ by our assumption about the asymptotic behavior of $W_f^2 W_g^2$.

This completes our proof for the limit of $\CDF(f,g)$ when our sequence pairs are true character combination sequences (not unimodularized).
If our sequence pairs $(f_\iota,g_\iota)$ our unimodularizable, and we replace them with their unimodularizations, then Lemma \ref{Nancy} below shows that this does not change the limiting behavior of the crosscorrelation demerit factor.
To see that Lemma \ref{Nancy} applies, we note that $\ell\to\infty$ and $\ell/p^2 \to 0$ as $p\to\infty$ because $\ell/p$ tends to the positive real number $\Lambda$ as $p\to \infty$.
\end{proof}
If we specialize Theorem \ref{Mary} to autocorrelation we obtain the following result for asymptotic autocorrelation demerit factor.
\begin{corollary}\label{Ellen}
Let $\{f_\iota\}_{\iota \in I}$ be a family of unimodularizable character combination sequences, where for each $\iota \in I$, the sequence $f_\iota$ has prime $p_\iota$, field $F_\iota$ of order $p_\iota$, length $\ell_\iota$, and shift $r_\iota$.
Suppose that $\{p_\iota\}_{\iota \in I}$ is infinite and that $\ell_\iota/p_\iota$ tends to a positive real number $\Lambda$ as $p_\iota \to \infty$.
Of the parameters defined in \eqref{Albert}, suppose that $S_{f_\iota,f_\iota}$ and $V_{f_\iota,f_\iota}$ tend to real limits $S$ and $V$, respectively, as $p_\iota \to \infty$ and that $W_{f_\iota}^4 (\log p_\iota)^3/\sqrt{p_\iota} \to 0$ as $p_\iota \to \infty$.
If $V \not=0$, then suppose that $r_\iota/p_\iota$ tends to a real limit $R$ as $p_\iota \to \infty$.
Then
\[
\DF(f_\iota) \to -1 + S \cdot \frac{2}{3}\Lambda + 2 \Omega\left(\frac{1}{\Lambda},0\right) + V \cdot \Omega\left(\frac{1}{\Lambda},1+\frac{2 R}{\Lambda}\right)
\]
as $p_\iota \to \infty$.
If $f^u_\iota$ is a unimodularization of $f_\iota$ for each $\iota \in I$, then $\DF(f^u_\iota)$ has the same limit as $\DF(f_\iota)$.
\end{corollary}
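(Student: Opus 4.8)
The plan is to deduce Corollary \ref{Ellen} from Theorem \ref{Mary} by taking $g_\iota = f_\iota$ for every $\iota \in I$, so that crosscorrelation of the pair degenerates to autocorrelation of the single sequence, and then applying the identity $\DF(f_\iota) = \CDF(f_\iota,f_\iota) - 1$. First I would check that, upon setting $g_\iota = f_\iota$, the pair parameters of \eqref{Patrick} collapse to the single-sequence parameters of \eqref{Albert}. The only point requiring any thought is the $S$-parameter: in $S_{f,g}$ the two correction terms $\sum_{\phi} |f_\phi g_{\conj\phi}|^2$ and $\sum_{\phi} f_\phi \conj{f_{\conj\phi}} g_{\conj\phi} \conj{g_\phi}$ both equal $\sum_\phi |f_\phi f_{\conj\phi}|^2$ when $g = f$, since $f_\phi \conj{f_{\conj\phi}} f_{\conj\phi} \conj{f_\phi} = |f_\phi|^2 |f_{\conj\phi}|^2$, so together they produce the single term $2\sum_\phi |f_\phi f_{\conj\phi}|^2$ recorded in \eqref{Albert}; the remaining terms match termwise. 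Hence $S_{f_\iota,f_\iota}$ and $V_{f_\iota,f_\iota}$ are exactly the quantities named $S$ and $V$ in the corollary.

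Next I would verify the hypotheses of Theorem \ref{Mary} under this specialization. Both members of the degenerate pair carry the same shift $r_\iota$, so the shift of the second sequence equals $r_\iota$. The normalization $\sum_\chi |f_\chi|^2 = 1$ forces $U_{f_\iota,f_\iota} = \left|\sum_\phi |f_\phi|^2\right|^2 = 1$, so the limit $U$ equals $1$ and in particular is nonzero. Because $U \neq 0$, Theorem \ref{Mary} demands that $(r_\iota - s_\iota)/p_\iota$ converge; here this ratio is identically $0$, so $\Delta = 0$ automatically. If $V \neq 0$, Theorem \ref{Mary} demands that $(r_\iota + s_\iota)/p_\iota$ converge; here this ratio is $2 r_\iota/p_\iota$, which tends to $2R$ by the corollary's hypothesis $r_\iota/p_\iota \to R$, so $\Sigma = 2R$. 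Finally the decay hypothesis $W_{f_\iota}^2 W_{g_\iota}^2 (\log p_\iota)^3/\sqrt{p_\iota} \to 0$ becomes $W_{f_\iota}^4 (\log p_\iota)^3/\sqrt{p_\iota} \to 0$, precisely as assumed.

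Then I would substitute into the conclusion of Theorem \ref{Mary}. With $U = 1$, $\Delta = 0$, and $\Sigma = 2R$, the limit reads
\[
\CDF(f_\iota,f_\iota) \to S \cdot \tfrac{2}{3}\Lambda + \Omega\left(\tfrac{1}{\Lambda},0\right) + \Omega\left(\tfrac{1}{\Lambda},0\right) + V \cdot \Omega\left(\tfrac{1}{\Lambda}, 1 + \tfrac{2R}{\Lambda}\right),
\]
where the third summand is the $U$-term evaluated at second coordinate $\Delta/\Lambda = 0$. The two copies of $\Omega(1/\Lambda,0)$ combine into $2\,\Omega(1/\Lambda,0)$, and subtracting $1$ to pass from $\CDF$ to $\DF$ yields exactly the claimed formula. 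The unimodularization assertion requires no new argument: the closing sentence of Theorem \ref{Mary} already guarantees that passing to unimodularizations leaves the limiting crosscorrelation demerit factor unchanged, hence leaves the limiting autocorrelation demerit factor unchanged.

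Since this is a direct specialization, I expect no substantial obstacle. The only places demanding care are the bookkeeping that identifies the pair parameters of \eqref{Patrick} with the single-sequence parameters of \eqref{Albert}—in particular the coincidence of the two $S$-correction terms—and the recognition that $U = 1$ holds automatically from normalization. This latter fact is what forces the doubling of the $\Omega(1/\Lambda,0)$ term, the single feature that distinguishes the autocorrelation formula from a naive transcription of the crosscorrelation formula.
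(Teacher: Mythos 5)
Your proposal is correct and matches the paper's own proof, which likewise obtains the corollary as the special case $f_\iota=g_\iota$ (hence $r_\iota=s_\iota$) of Theorem \ref{Mary}, using the parameters in \eqref{Albert} and the identity $\DF(f)=\CDF(f,f)-1$. Your write-up simply spells out the bookkeeping (the collapse of the $S$-correction terms, $U=1$, $\Delta=0$, $\Sigma=2R$) that the paper leaves implicit.
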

\begin{proof}
This is just the special case of Theorem \ref{Mary} where we let $f_\iota=g_\iota$ (so $r_\iota=s_\iota$) for all $\iota \in I$.
Then we consider the versions of our parameters for autocorrelation in \eqref{Albert}, and recall that $\DF(f)=\CDF(f,f)-1$.
\end{proof}
We conclude this section with the technical lemmata we needed for our proof.
\begin{lemma}\label{Harold}
Let $f, g$ be character combination sequences with prime $p$ and respective character combinations $\{f_\chi\}_{\chi \in \mchars}$ and $\{g_\chi\}_{\chi \in \mchars}$.  If $S_{f,g}$, $U_{f,g}$, and $V_{f,g}$ are the parameters defined in \eqref{Patrick}, then
\begin{align*}
S_{f,g}+1+U_{f,g}+V_{f,g} & =  \sums{\phi,\chi, \psi,\omega \in \mchars \\ \phi\chi=\psi\omega} f_\phi g_\chi \conj{f_\psi g_\omega} \frac{\tau(\phi)\tau(\chi)\conj{\tau(\psi)\tau(\omega)}}{p^2} \\
1 & = \sums{\phi,\chi, \psi,\omega \in \mchars \\ \phi=\psi \\ \chi=\omega} f_\phi g_\chi \conj{f_\psi g_\omega} \frac{\tau(\phi)\tau(\chi)\conj{\tau(\psi)\tau(\omega)}}{p^2} \\
U_{f,g} & = \sums{\phi,\chi, \psi,\omega \in \mchars \\ \phi=\omega \\ \chi=\psi} f_\phi g_\chi \conj{f_\psi g_\omega} \frac{\tau(\phi)\tau(\chi)\conj{\tau(\psi)\tau(\omega)}}{p^2} \\
V_{f,g} & = \sums{\phi,\chi, \psi,\omega \in \mchars \\ \phi=\conj{\chi} \\ \psi=\conj{\omega}} f_\phi g_\chi \conj{f_\psi g_\omega} \frac{\tau(\phi)\tau(\chi)\conj{\tau(\psi)\tau(\omega)}}{p^2}.
\end{align*}
\end{lemma}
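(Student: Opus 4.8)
The plan is to prove the three single-constraint identities first, since they are short, and then obtain the formula for $S_{f,g}+1+U_{f,g}+V_{f,g}$ by an inclusion--exclusion argument that reconciles the full sum over $\phi\chi=\psi\omega$ with the restricted sum in the definition \eqref{Patrick}. Throughout I would invoke the standing convention that $f_{\chi_0}=g_{\chi_0}=0$ for the trivial character $\chi_0$, so that every character carrying a nonzero coefficient is nontrivial; by \eqref{Manuel} its Gauss sum then satisfies $|\tau|^2=p$. I would also record the consequence of \eqref{Katherine} that $\tau(\conj{\phi})=\phi(-1)\conj{\tau(\phi)}$ (valid since $\phi(-1)=\pm1$), whence $\tau(\phi)\tau(\conj{\phi})=p\,\phi(-1)$ for nontrivial $\phi$.

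For the identity constrained by $\phi=\psi,\chi=\omega$, each summand collapses to $|f_\phi|^2|g_\chi|^2\,|\tau(\phi)|^2|\tau(\chi)|^2/p^2=|f_\phi|^2|g_\chi|^2$, and the normalization $\sum_\phi|f_\phi|^2=\sum_\chi|g_\chi|^2=1$ yields the value $1$. For $\phi=\omega,\chi=\psi$, the Gauss sums again cancel and each summand becomes $f_\phi\conj{g_\phi}\cdot g_\chi\conj{f_\chi}$, so the double sum factors as $\bigl(\sum_\phi f_\phi\conj{g_\phi}\bigr)\overline{\bigl(\sum_\chi f_\chi\conj{g_\chi}\bigr)}=U_{f,g}$. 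For $\phi=\conj{\chi},\psi=\conj{\omega}$, I would substitute $\tau(\phi)\tau(\conj{\phi})=p\,\phi(-1)$ and its conjugate; each summand then reduces to $f_\phi g_{\conj{\phi}}\phi(-1)\cdot\overline{f_\psi g_{\conj{\psi}}\psi(-1)}$, and the double sum factors as $\bigl|\sum_\phi f_\phi g_{\conj{\phi}}\phi(-1)\bigr|^2=V_{f,g}$.

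The substantive step, and the one most prone to bookkeeping errors, is the first identity. Writing $T$ for the full sum over $\phi\chi=\psi\omega$ and $T_{\mathrm{gen}}$ for its restriction to the generic region $\phi\notin\{\conj{\chi},\psi,\omega\}$, I would split the index set of $T$ by inclusion--exclusion over the three regions $A=\{\phi=\psi\}$, $B=\{\phi=\omega\}$, and $C=\{\phi=\conj{\chi}\}$. On the constraint surface these force $\chi=\omega$, $\chi=\psi$, and $\psi=\conj{\omega}$ respectively, so $A\cup B\cup C$ is exactly the complement of the generic region, and the three single regions contribute precisely the sums $1$, $U_{f,g}$, $V_{f,g}$ just computed. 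The hard part is the overlaps, each of which must be evaluated with $|\tau|^2=p$ and $|\tau(\phi)\tau(\conj{\phi})|^2=p^2$ to clear the Gauss sums: I expect $A\cap B$ to force all four characters equal and contribute $\sum_\phi|f_\phi g_\phi|^2$; $A\cap C$ to give $\chi=\omega=\conj{\phi}$ and contribute $\sum_\phi|f_\phi g_{\conj{\phi}}|^2$; $B\cap C$ to give $\chi=\psi=\conj{\phi}$ and contribute $\sum_\phi f_\phi\conj{f_{\conj{\phi}}}g_{\conj{\phi}}\conj{g_\phi}$; and $A\cap B\cap C$ to force $\phi=\conj{\phi}$, i.e. $\phi=\eta$, contributing $|f_\eta g_\eta|^2$.

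Inclusion--exclusion then gives $T=T_{\mathrm{gen}}+1+U_{f,g}+V_{f,g}-\sum_\phi|f_\phi g_\phi|^2-\sum_\phi|f_\phi g_{\conj{\phi}}|^2-\sum_\phi f_\phi\conj{f_{\conj{\phi}}}g_{\conj{\phi}}\conj{g_\phi}+|f_\eta g_\eta|^2$, and recognizing the final five terms as exactly $S_{f,g}$ per \eqref{Patrick} completes the identity. The only real care needed is to confirm that these four overlap contributions match the four correction terms of $S_{f,g}$ with the correct signs, which inclusion--exclusion supplies automatically.
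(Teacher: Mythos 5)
Your proposal is correct and follows essentially the same route as the paper's proof: both first evaluate the three single-constraint sums using $|\tau(\chi)|^2=p$ and $\tau(\phi)\tau(\conj{\phi})=p\,\phi(-1)$ for nontrivial characters, then apply inclusion--exclusion over the three coincidence regions $\{\phi=\psi\}$, $\{\phi=\omega\}$, $\{\phi=\conj{\chi}\}$ inside $Q=\{\phi\chi=\psi\omega\}$, with the identical overlap computations ($\sum_\phi|f_\phi g_\phi|^2$, $\sum_\phi|f_\phi g_{\conj{\phi}}|^2$, $\sum_\phi f_\phi\conj{f_{\conj{\phi}}}g_{\conj{\phi}}\conj{g_\phi}$, and $|f_\eta g_\eta|^2$) matching the correction terms in the definition of $S_{f,g}$. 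The only difference is cosmetic: you expand the full sum $T$ as generic part plus union, while the paper writes $1+U_{f,g}+V_{f,g}$ as union plus overlap corrections and then adds $S_{f,g}$.
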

\begin{proof}
Note that \eqref{Manuel} shows that $\tau(\phi)\tau(\chi) \conj{\tau(\phi)\tau(\chi)}/p^2=1$ for every nontrivial $\phi,\chi \in \mchars$.  Since we insist (see Section \ref{Karl}) that $f_{\chi_0}=g_{\chi_0}=0$ for the trivial character $\chi_0$, and that $\sum_{\chi \in \mchars} |f_\chi|^2=\sum_{\chi\in\mchars} |g_\chi|^2=1$, we see that
\[
1 = \sums{\phi,\chi\in\mchars} f_\phi g_\chi \conj{f_\phi g_\chi} \frac{\tau(\phi)\tau(\chi)\conj{\tau(\phi)\tau(\chi)}}{p^2},
\]
which establishes the second identity that we were to show.

Likewise \eqref{Manuel} shows that $\tau(\phi)\tau(\chi) \conj{\tau(\chi)\tau(\phi)}/p^2=1$ for every nontrivial $\phi,\chi \in \mchars$, and since $f_{\chi_0}=g_{\chi_0}=0$ for the trivial character $\chi_0$, we see that
\[
U_{f,g} = \sums{\phi,\chi \in\mchars} f_\phi g_\chi \conj{f_\chi g_\phi} \frac{\tau(\phi)\tau(\chi)\conj{\tau(\chi)\tau(\phi)}}{p^2},
\]
which establishes the third identity that we were to show.

Similarly \eqref{Katherine} and \eqref{Manuel} show that $\tau(\phi)\tau(\conj{\phi}) \conj{\tau(\psi)\tau(\conj{\psi})}/p^2=\phi\conj{\psi}(-1)$ for every nontrivial $\phi,\chi \in \mchars$, and since $f_{\chi_0}=g_{\chi_0}=0$ for the trivial character $\chi_0$, we see that
\[
V_{f,g} = \sum_{\phi,\psi \in \mchars} f_\phi g_{\conj{\phi}} \conj{f_\psi g_{\conj{\psi}}} \frac{\tau(\phi)\tau(\conj{\phi}) \conj{\tau(\psi)\tau(\conj{\psi})}}{p^2},
\]
which establishes the fourth identity we were to show.

If we add the three identities we proved here for $1$, $U_{f,g}$, and $V_{f,g}$, we find that we are summing terms of the form $f_\phi g_\chi \conj{f_\psi g_\omega} \tau(\phi)\tau(\chi)\conj{\tau(\psi)\tau(\omega)}/p^2$ for a collection of quadruples (possibly with repetitions) drawn from $Q=\{(\phi,\chi,\psi,\omega) \in \mchars^4: \phi\chi=\psi\omega\}$.
\begin{itemize}
\item The sum for $1$ uses those quadruples in $Q$ with $\phi=\psi$. (The other condition that $\chi=\omega$ is automatically fulfilled because of the condition $\phi\chi=\psi\omega$ for belonging to $Q$.)
\item The sum for $U_{f,g}$ uses those quadruples in $Q$ with $\phi=\omega$.
\item The sum for $V_{f,g}$ uses those quadruples in $Q$ with $\phi=\conj{\chi}$.
\end{itemize}
Thus by the inclusion-exclusion principle, we have
\begin{align*}
1+U_{f,g}+V_{f,g}
& = \sums{(\phi,\chi,\psi,\omega) \in Q \\ \phi \in \{\conj{\chi},\psi,\omega\}} f_\phi g_\chi \conj{f_\psi g_\omega} \frac{\tau(\phi)\tau(\chi)\conj{\tau(\psi)\tau(\omega)}}{p^2} \\
& \qquad + \sums{(\phi,\chi,\psi,\omega) \in Q \\ \phi=\chi=\psi=\omega} f_\phi g_\chi \conj{f_\psi g_\omega} \frac{\tau(\phi)\tau(\chi)\conj{\tau(\psi)\tau(\omega)}}{p^2} \\
& \qquad + \sums{(\phi,\chi,\psi,\omega) \in Q \\ \phi=\conj{\chi}=\psi=\conj{\omega}} f_\phi g_\chi \conj{f_\psi g_\omega}  \frac{\tau(\phi)\tau(\chi)\conj{\tau(\psi)\tau(\omega)}}{p^2} \\
& \qquad + \sums{(\phi,\chi,\psi,\omega) \in Q \\ \phi=\conj{\chi}=\conj{\psi}=\omega} f_\phi g_\chi \conj{f_\psi g_\omega} \frac{\tau(\phi)\tau(\chi)\conj{\tau(\psi)\tau(\omega)}}{p^2} \\
& \qquad - \sums{(\phi,\chi,\psi,\omega) \in Q \\ \phi=\chi=\psi=\omega=\conj{\phi}=\conj{\chi}=\conj{\psi}=\conj{\omega}} \!\!\!\!\!\!\!\!\!\! f_\phi g_\chi \conj{f_\psi g_\omega}  \frac{\tau(\phi)\tau(\chi)\conj{\tau(\psi)\tau(\omega)}}{p^2},
\end{align*}
and then we use \eqref{Manuel} to obtain
\begin{align*}
1+U_{f,g}+V_{f,g}
& = \sums{\phi,\chi,\psi,\omega \in \mchars\\ \phi\chi=\psi\omega \\ \phi \in \{\conj{\chi},\psi,\omega\}} f_\phi g_\chi \conj{f_\psi g_\omega} \frac{\tau(\phi)\tau(\chi)\conj{\tau(\psi)\tau(\omega)}}{p^2} \\
& \qquad\quad + \sum_{\phi} |f_\phi g_\phi|^2 + \sum_{\phi} |f_\phi g_{\conj{\phi}}|^2 + \sum_{\phi} f_\phi \conj{f_{\conj{\phi}}} g_{\conj{\phi}} \conj{g_\phi}    - |f_\eta g_\eta|^2,
\end{align*}
where $\eta$ is the quadratic character.  Note that the only characters $\phi$ with $\phi=\conj{\phi}$ are the trivial character and the quadratic character, but we insist (see Section \ref{Karl}) that $f_{\chi_0}=g_{\chi_0}=0$ for the trivial character $\chi_0$.
When we add our last expression for $1+U_{f,g}+V_{f,g}$ to $S_{f,g}$ from \eqref{Patrick}, we get the desired relation.
\end{proof}
\begin{lemma}\label{Lawrence}
Let $\{f_\iota\}_{\iota \in I}$ be a family of unimodularizable character combination sequences where sequence $f_\iota$ has prime $p_\iota$ and length $\ell_\iota$, and let $f^u_\iota$ be a unimodularization of $f_\iota$.
Suppose that $\{p_\iota\}_{\iota \in I}$ is infinite and $\ell_\iota \to \infty$ as $p_\iota \to \infty$.
Then $C_{f^u_\iota,f^u_\iota}(0)=\ell_\iota$.
If $\iota \in I$, then $C_{f_\iota,f_\iota}(0)$ is either $\ell_\iota-\ceil{\ell_\iota/p_\iota}$ or $\ell_\iota-\floor{\ell_\iota/p_\iota}$.
And $C_{f_\iota,f_\iota}(0)/\ell_\iota \to 1$ as $p_\iota \to \infty$.
\end{lemma}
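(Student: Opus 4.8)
The plan is to observe that the autocorrelation at shift zero is nothing but the sum of squared magnitudes of the sequence terms, and then to use the unimodularizability hypothesis to turn the problem into an elementary count of multiples of $p$ in a window of $\ell$ consecutive integers. First I would recall that the definition \eqref{Agnes} gives, for any length-$\ell$ sequence $h=(h_0,\ldots,h_{\ell-1})$, the identity $C_{h,h}(0)=\sum_{j=0}^{\ell-1}|h_j|^2$. Applied to the unimodularization $f^u_\iota$, whose every term has magnitude $1$ by construction, this yields $C_{f^u_\iota,f^u_\iota}(0)=\ell_\iota$, which settles the first claim at once.

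For the second claim I would fix $\iota$, write $p=p_\iota$ and $\ell=\ell_\iota$, and let $F(a)=\sum_{\chi\in\mchars}(f_\iota)_\chi\,\chi(a)$ be the defining function with shift $s$, so that $f_\iota=(F(s),\ldots,F(s+\ell-1))$ and hence $C_{f_\iota,f_\iota}(0)=\sum_{j=0}^{\ell-1}|F(s+j)|^2$. Because $f_\iota$ is unimodularizable, each summand $|F(s+j)|^2$ equals $1$ when $p\nmid(s+j)$ and equals $0$ when $p\mid(s+j)$, so $C_{f_\iota,f_\iota}(0)=\ell-N$, where $N$ is the number of integers in the window $\{s,s+1,\ldots,s+\ell-1\}$ that are divisible by $p$. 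The key elementary fact is that the number of multiples of $p$ among any $\ell$ consecutive integers is either $\floor{\ell/p}$ or $\ceil{\ell/p}$: writing $\ell=qp+t$ with $0\le t<p$ makes this transparent, since the count is $q$ or $q+1$ according to the residue of $s$ modulo $p$. Thus $N\in\{\floor{\ell/p},\ceil{\ell/p}\}$, giving $C_{f_\iota,f_\iota}(0)\in\{\ell-\ceil{\ell/p},\,\ell-\floor{\ell/p}\}$ as asserted.

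For the third claim I would divide the two-sided bound from the second claim by $\ell$ and apply a squeeze. Since $\floor{\ell/p}\ge 0$ gives $C_{f_\iota,f_\iota}(0)\le\ell$, while $\ceil{\ell/p}\le\ell/p+1$ gives $C_{f_\iota,f_\iota}(0)\ge\ell-\ell/p-1$, we obtain
\[
1-\frac{1}{p}-\frac{1}{\ell}\le\frac{C_{f_\iota,f_\iota}(0)}{\ell}\le 1.
\]
As $p_\iota\to\infty$ we have $1/p_\iota\to 0$ and, by hypothesis, $\ell_\iota\to\infty$ so that $1/\ell_\iota\to 0$; the squeeze theorem then forces $C_{f_\iota,f_\iota}(0)/\ell_\iota\to 1$. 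No step here presents a genuine obstacle, as the lemma is essentially bookkeeping; the only point requiring mild care is the window-counting fact used in the second claim, and even that is entirely standard.
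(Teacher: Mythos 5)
Your proof is correct and follows essentially the same route as the paper: both arguments reduce $C_{f,f}(0)$ to counting the zero terms of $f_\iota$, observe that these are exactly the positions divisible by $p_\iota$ (so their number is $\floor{\ell_\iota/p_\iota}$ or $\ceil{\ell_\iota/p_\iota}$), and then conclude $C_{f_\iota,f_\iota}(0)/\ell_\iota \to 1$ by the bound $|C_{f_\iota,f_\iota}(0)/\ell_\iota - 1| \leq 1/\ell_\iota + 1/p_\iota$. The only cosmetic difference is that you phrase the counting via $\ell = qp + t$ while the paper views $f_\iota$ as a segment of a period-$p_\iota$ sequence with one zero per period; these are the same observation.
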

\begin{proof}
The unimodularization $f^u_\iota$ is a sequence of length $\ell_\iota$ whose terms are all of magnitude $1$, so $C_{f^u_\iota,f^u_\iota}(0)=\ell_\iota$.  On the other hand, $f_\iota$ can be viewed as a finite segment of length $\ell_\iota$ of a periodic sequence of period $p_\iota$, where each period has $p_\iota-1$ unimodular terms and one zero term.  So $f_\iota$ has all terms of magnitude $1$, except for either $\floor{\ell_\iota/p_\iota}$ or $\ceil{\ell_\iota/p_\iota}$ zero terms, thus proving our second claim.
Thus $C_{f_\iota,f_\iota}(0)$ differs from $\ell_\iota$ by a quantity of magnitude less than $1+\ell_\iota/p_\iota$, and so
\[
\left|\frac{C_{f_\iota,f_\iota}(0)}{\ell_\iota} -1\right| < \frac{1}{\ell_\iota} + \frac{1}{p_\iota},
\]
which tends to $0$ as $p_\iota \to \infty$ by our given assumptions.
\end{proof}
\begin{lemma}\label{Nancy}
Let $\{(f_\iota,g_\iota)\}_{\iota \in I}$ be a family of pairs of unimodularizable character combination sequences, where for each $\iota \in I$, both $f_\iota$ and $g_\iota$ have prime $p_\iota$ and length $\ell_\iota$.
For each $\iota \in I$, let $f^u_\iota$ and $g^u_\iota$ be unimodularizations of $f_\iota$ and $g_\iota$, respectively.  Suppose that $\{p_\iota\}_{\iota \in I}$ is infinite and $\ell_\iota \to \infty$ and $\ell_\iota/p_\iota^2 \to 0$ as $p_\iota \to \infty$.  Then $\CDF(f^u_\iota,g^u_\iota)$ tends to a real number as $p_\iota \to \infty$ if and only if $\CDF(f_\iota,g_\iota)$ does, and in this case, the limits are equal.
\end{lemma}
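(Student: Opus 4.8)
The plan is to treat the unimodularization as a bounded perturbation and to control its effect on both the numerator and the denominator of the demerit factor \eqref{Celeste}. Fix one $\iota$, write $p,\ell$ for its prime and length, and suppress the index. Since $f^u$ agrees with $f$ except at the positions where $f$ vanishes, where $f^u$ has unimodular entries, I would write $f^u=f+\delta_f$ and $g^u=g+\delta_g$, with $\delta_f,\delta_g$ supported on these ``filled'' positions and all nonzero entries of magnitude $1$. By Lemma \ref{Lawrence} the number $z_f$ of nonzero entries of $\delta_f$, and likewise $z_g$, is at most $\ell/p+1$; set $z=\ceil{\ell/p}$, so $z_f,z_g\le z$. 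For the denominators, $C_{f^u,f^u}(0)=C_{g^u,g^u}(0)=\ell$, while $C_{f,f}(0)=\ell-z_f$ and $C_{g,g}(0)=\ell-z_g$, both within $z$ of $\ell$. It thus remains to compare the numerators.

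For the numerators I would use the elementary spectral identity
\[
\sum_{s\in\Z}\card{C_{a,b}(s)}^2=\frac{1}{2\pi}\int_0^{2\pi}\card{\widehat{a}(\theta)}^2\card{\widehat{b}(\theta)}^2\,d\theta,\qquad \widehat{a}(\theta)=\sum_{j} a_j e^{ij\theta},
\]
which follows from Parseval applied to $\widehat{a}(\theta)\conj{\widehat{b}(\theta)}=\sum_s C_{a,b}(s)e^{-is\theta}$. Writing $A=\bigl(\sum_s\card{C_{f,g}(s)}^2\bigr)^{1/2}$ and $A^u$ for the analogous quantity for $(f^u,g^u)$, bilinearity gives $C_{f^u,g^u}=C_{f,g}+C_{f,\delta_g}+C_{\delta_f,g}+C_{\delta_f,\delta_g}$, so the triangle inequality in $\ell^2(\Z)$ yields $\card{A^u-A}\le\eta$, where $\eta$ is the sum of the $\ell^2$-norms of the three cross terms. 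Each of these I would bound using the spectral identity together with $\|\widehat{\delta}\|_\infty\le z$ and $\sum_j\card{a_j}^2=C_{a,a}(0)\le\ell$; for instance $\sum_s\card{C_{f,\delta_g}(s)}^2\le\|\widehat{\delta_g}\|_\infty^2\sum_j\card{f_j}^2\le z^2\ell$. This gives $\eta=O(z\sqrt{\ell})$.

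It then remains to assemble the estimates. With $D=\CDF(f,g)$ and $D^u=\CDF(f^u,g^u)$, one has $A^2=D\,C_{f,f}(0)C_{g,g}(0)\le D\ell^2$, so $A\le\sqrt{D}\,\ell$; expanding
\[
D^u-D=\frac{(A^u)^2-A^2}{\ell^2}+A^2\left(\frac{1}{\ell^2}-\frac{1}{C_{f,f}(0)C_{g,g}(0)}\right)
\]
and inserting $\card{(A^u)^2-A^2}\le\eta(2A+\eta)$ together with $\card{\ell^2-C_{f,f}(0)C_{g,g}(0)}\le 3z\ell$ leads to a bound of the shape $\card{D^u-D}\le \epsilon_\iota(1+\sqrt{D}+D)$, where $\epsilon_\iota$ is a combination of $\sqrt{\ell}/p$, $z/\sqrt{\ell}$, $z^2/\ell$, and $z/\ell$. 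The hypotheses $\ell\to\infty$ and $\ell/p^2\to0$ (hence $p\to\infty$) force each of these to $0$: notably $\sqrt{\ell}/p=\sqrt{\ell/p^2}\to0$ and $z^2/\ell\le \ell/p^2+2/p+1/\ell\to0$. Hence if $D\to L$ for a finite real $L$, then $1+\sqrt{D}+D$ is bounded and $D^u\to L$. Running the identical estimate with the roles of $(f,g)$ and $(f^u,g^u)$ exchanged—now using $\sum_j\card{f^u_j}^2=\ell$ and $A^u\le\sqrt{D^u}\,\ell$—gives $\card{D-D^u}\le\epsilon_\iota'(1+\sqrt{D^u}+D^u)$ with $\epsilon_\iota'\to0$, so convergence of $D^u$ implies convergence of $D$ to the same limit. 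Together these prove the asserted equivalence with equal limits.

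The main obstacle I anticipate is that the lemma is stated in the wide regime $\ell/p^2\to0$, in which the number $z\approx\ell/p$ of filled entries may grow without bound and $D$ itself may be large, so the perturbation cannot be treated as $O(1)$ entries. The estimates must therefore vanish \emph{relative to} the normalizations $\ell^2$ and $C_{f,f}(0)C_{g,g}(0)$ and \emph{uniformly in} the possibly large value of $D$; this is exactly why the error is forced into the form $\epsilon_\iota(1+\sqrt{D}+D)$, and why the hypothesis $\ell/p^2\to0$ (rather than the weaker $\ell/p\to\Lambda$ used elsewhere) is precisely what kills the decisive term $\sqrt{\ell}/p$ arising from the cross-correlations $C_{f,\delta_g}$ and $C_{\delta_f,g}$.
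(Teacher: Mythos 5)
Your proposal is correct and takes essentially the same route as the paper's proof: both rest on the Parseval identity expressing $\sum_{s}\card{C_{f,g}(s)}^2$ as $\frac{1}{2\pi}\int_0^{2\pi}\card{\widehat{f}}^2\card{\widehat{g}}^2\,d\theta$, both treat the unimodularization as a perturbation supported on at most $\ceil{\ell/p}$ unimodular entries, and both bound its effect by $O\bigl(\ceil{\ell/p}\sqrt{\ell}\bigr)$ using the fact that the transform of the sparse perturbation has sup norm at most its number of nonzero terms. The only difference is bookkeeping: the paper works with the square roots (the $L^2$ norms), so the normalized difference tends to $0$ unconditionally and the equivalence of limits is immediate, whereas your choice to work with the squared quantities forces the explicit $(1+\sqrt{D}+D)$ factor and the symmetric two-directional argument—which you correctly supply.
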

\begin{proof}
In this proof, we identify any sequence $h=(h_0,\ldots,h_{n-1})$ of complex numbers with the polynomial $h(z)=h_0+\cdots+h_{n-1} z^{n-1}$ and define the $L^2$ norm of $h(z)$ on the unit circle as
\[
\|h\|_2=\left(\frac{1}{2\pi} \int_0^{2\pi} |h(e^{i\theta})|^2 \, d\theta\right)^{1/2}.
\]
Then it is shown \cite[Section V]{Katz-2016-Aperiodic} that if $f$ and $g$ are sequences then
\begin{equation}\label{Arthur}
C_{f,f}(0)=\|f\|_2^2
\end{equation}
and
\begin{equation}\label{Herbert}
\CDF(f,g) = \frac{\|f g\|_2^2}{\|f\|_2^2 \|g\|_2^2}.
\end{equation}
The advantage of this point of view is that it enables us to use the triangle inequality for the $L^2$ norm.

We let $(f,g)$ be a sequence pair from our family, and let $f^u$ and $g^u$ be unimodularizations of $f$ and $g$ respectively.
Then the triangle inequality tells us that
\[
\left|\|f^u g^u\|_2 -\|f g\|_2\right| \leq \|(f^u-f)g^u\|_2 + \|f(g^u-g)\|_2.
\]
Now $f^u$ is the unimodularization of $f$, so Lemma \ref{Lawrence} shows that they differ in at most $\ceil{\ell/p}$ positions (where $f$ has a zero and $f^u$ has a unimodular complex number).  The same is true of $g^u$ as compared to $g$.
The triangle inequality and the fact that $\|z^j\|_2=1$ for all $j$ implies that if $a=\sum_{j=0}^d a_j z^j$ and $b(z)$ are polynomials, then
\[
\|a b\|_2 \leq \|b\|_2 \sum_{j=0}^d |a_j|.
\]
Thus we have
\[
\left|\|f^u g^u\|_2 -\|f g\|_2\right| \leq \ceil{\ell/p} \left(\|g^u\|_2+\|f\|_2\right),
\]
and since $g^u$ and $f$ are sequences of length $\ell$ with terms of magnitude at most $1$, we know that $C_{g^u,g^u}(0)$ and $C_{f,f}(0)$ are at most $\ell$, and so by \eqref{Arthur}, we know that $\|g^u\|_2, \|f\|_2 \leq \sqrt{\ell}$, and so
\[
\left|\|f^u g^u\|_2 -\|f g\|_2\right| \leq 2 \sqrt{\ell} \ceil{\ell/p}.
\]
Thus 
\begin{equation}\label{Timothy}
\frac{\left|\|f^u g^u\|_2 -\|f g\|_2\right|}{\|f\|_2 \|g\|_2} \leq \frac{2 \sqrt{\ell} (1+\ell/p)}{\|f\|_2 \|g\|_2},
\end{equation}
the right hand of which has the same asymptotic behavior as $2 (\ell^{-1/2}+\ell^{1/2} p^{-1})$ by Lemma \ref{Lawrence}.  Now our given assumption that $\ell\to\infty$ and $\ell/p^2 \to 0$ as $p\to \infty$ make $2(\ell^{-1/2}+\ell^{1/2} p^{-1}) \to 0$ as $p\to\infty$, and thus the right hand side of \eqref{Timothy} tends to $0$ in this limit.
So $\|f g\|_2/(\|f\|_2\|g\|_2)$ and $\|f^u g^u\|_2/(\|f\|_2 \|g\|_2)$ have the same limiting behavior.
Lemma \ref{Lawrence} then shows that $\|f^u g^u\|_2/(\|f\|_2 \|g\|_2)$ and $\|f^u g^u\|_2/(\|f^u\|_2 \|g^u\|_2)$ have the same limiting behavior, so by \eqref{Herbert}, $\CDF(f,g)$ and $\CDF(f^u,g^u)$ have the same limiting behavior.
\end{proof}
\begin{lemma}\label{Jessica}
If $x, y \in \R$ with $x > 0$, then $0 \leq \Omega(1/x,y) \leq 2 \ceil{x}$.
\end{lemma}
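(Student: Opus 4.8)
The plan is to handle the two inequalities separately, as they are of completely different character. The lower bound $0 \le \Omega(1/x,y)$ is immediate from the definition $\Omega(1/x,y) = \sum_{n \in \Z} \max(0, 1 - |n/x - y|)^2$, since this is a (locally finite) sum of squares of real numbers and hence nonnegative. No work is required here beyond citing the definition and noting that $x > 0$ guarantees $1/x$ is well defined, so that $\Omega(1/x,y)$ makes sense.

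For the upper bound, the approach is to reduce everything to a count of the nonzero summands. First I would observe that the $n$th summand $\max(0, 1 - |n/x - y|)^2$ is nonzero exactly when $|n/x - y| < 1$, and that whenever it is nonzero it satisfies $0 \le 1 - |n/x - y| < 1$, so its square is strictly less than $1$. Consequently $\Omega(1/x,y)$ is bounded above by the number of integers $n$ with $|n/x - y| < 1$. Since $x > 0$, the condition $|n/x - y| < 1$ is equivalent to $n \in (xy - x, \, xy + x)$, an open interval of length $2x$. Thus it remains only to count the integers in an open interval of length $2x$.

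The crux — and essentially the only point needing care — is getting the sharp bound $2\ceil{x}$ rather than a cruder quantity like $2x + 1$ or $\ceil{2x}$. I would isolate this as a small counting claim: an open interval of length $L$ contains at most $\ceil{L}$ integers. To see this, note the integers it contains are consecutive, say $m, m+1, \ldots, m+k-1$; the left endpoint $a$ satisfies $a < m$ and the right endpoint satisfies $m + k - 1 < a + L$, which together give $k - 1 < L$, hence $k < L + 1$ and therefore $k \le \ceil{L}$ (checking the integer and non-integer cases of $L$ separately). Applying this with $L = 2x$ and combining with $\ceil{2x} \le 2\ceil{x}$ (which holds because $2x \le 2\ceil{x}$ and $2\ceil{x} \in \Z$) shows there are at most $2\ceil{x}$ nonzero summands, each strictly less than $1$, so $\Omega(1/x,y) \le 2\ceil{x}$. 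The whole argument is elementary; the main obstacle is simply resisting the temptation to settle for a weaker bound, since the clean statement $2\ceil{x}$ is what the proof of Theorem \ref{Mary} invokes.
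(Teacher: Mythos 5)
Your proof is correct and follows essentially the same route as the paper: bound each nonzero summand by $1$ and count the nonzero summands via an interval/lattice argument, arriving at the bound $2\ceil{x}$. The only cosmetic difference is that you rescale to count integers $n$ in the interval $(xy-x,\,xy+x)$ of length $2x$ (then use $\ceil{2x}\le 2\ceil{x}$), whereas the paper counts the points $n/x$, spaced at least $1/\ceil{x}$ apart, in the interval $(y-1,y+1)$ of length $2$ --- the same argument after multiplying through by $x$.
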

\begin{proof}
Since $\Omega(1/x,y)=\sum_{n \in \Z} \max(0,1-|n/x-y|)^2$, it is clearly nonnegative, and note that the $n$th term is at most $1$ and is nonzero if and only if $n/x \in (y-1,y+1)$.  The values $\{n/x: n \in \Z\}$ form a lattice in $\R$ with spacing $1/x \geq 1/\ceil{x}$, so the interval $(y-1,y+1)$ of length $2$ contains at most $2 \ceil{x}$ such lattice points.  So $\Omega(1/x,y)$ is a sum of $2 \ceil{x}$ terms, each at most $1$.
\end{proof}

\section{Connection to Periodic Correlation}\label{Percy}

We now explore the connection between aperiodic correlation and periodic correlation.
Suppose that $f=(f_0,\ldots,f_{n-1})$ is a sequence of complex numbers of length $n$.
We can regard it as a sequence of period $n$ by using the convention that $f_j=f_{j+n}$ for every $j \in \Z$.
In this case we write our sequence as $f=(f_j)_{j \in \Z/n\Z}$ to emphasize its periodic nature.
If $g=(g_j)_{j \in \Z/n\Z}$ is another such sequence, and if $s \in \Z$, then the {\it periodic crosscorrelation of $f$ with $g$ at shift $s$} is
\[
\PC_{f,g}(s)=\sum_{j \in \Z/n\Z} f_j \conj{g_{j+s}},
\]
where the fact that the summation is indexed over $\Z/n\Z$ tells us that we are treating the sequences periodically.
When we compare with our definition of aperiodic crosscorrelation in \eqref{Agnes}, we see that it is this indexing that is the only difference.
When $f=g$, then $\PC_{f,f}(s)$ is the {\it periodic autocorrelation of $f$ at shift $s$}.

Suppose that $f$ is a character combination sequence with prime $p$, field $\Fp$, shift $s$, length $\ell$, and character combination $\{f_\chi\}_{\chi\in \mchars}$.  Then we know that $f=(F(s),\ldots,F(s+\ell-1))$ where $F(j)=\sum_{\chi \in \mchars} f_\chi \chi(j)$.
The {\it periodic version of $f$}, written $\per(f)$, is the periodically-indexed sequence $(F(j))_{j \in \Fp}$, which is of length $p$ regardless of the length of the original sequence $f$.
We define $\per(f)$ this way because the function $F$ that generates the terms of $f$ has natural period $p$, and it is the sequence $\per(f)$ of natural period $p$ whose periodic correlation behavior is related to the aperiodic correlation behavior of the original sequence $f$.

For a periodic sequence $f=(f_j)_{j \in \Fp}$ of complex numbers, we define the {\it Fourier transform} of $f$ to be the map $\ft{f}\colon \Fp \to \C$ with $\ft{f}(a)=\sum_{x \in \Fp} f_x \epsilon_a(x)$.  We usually denote $\ft{f}(a)$ as $\ft{f}_a$ instead.  It is not hard to show that $f_x = \frac{1}{p} \sum_{a \in \Fp} \ft{f}_a \conj{\epsilon_a(x)}$, which is the inverse Fourier transform.

The principal result of this section is to connect the parameters from \eqref{Patrick} for a pair of character combination sequences to the periodic crosscorrelation.
This will become useful later in Section \ref{Ophelia}, where we show that low periodic autocorrelation leads to low aperiodic crosscorrelation.
\begin{proposition}\label{Hilda}
Let $f$ and $g$ be character combination sequences with prime $p$.  Let $S_{f,g}$, $U_{f,g}$, and $V_{f,g}$ be the parameters for these sequences defined in \eqref{Patrick}.
Then
\[
\frac{1}{p(p-1)} \sum_{a \in \Fp} |\PC_{\per(f),\per(g)}(a)|^2 = S_{f,g} + 1 + U_{f,g}+ V_{f,g},
\]
and if $f=g$, then $S_{f,f}+1+U_{f,f}+V_{f,f}=S_{f,f}+2+V_{f,f} \geq 1$.
\end{proposition}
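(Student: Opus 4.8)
The plan is to pass to the Fourier side over $\Fp$. Since $\per(f)=(F(x))_{x\in\Fp}$ and $\per(g)=(G(x))_{x\in\Fp}$ both have period $p$, I would first compute their Fourier transforms using the Gauss-sum identity \eqref{George}: from $\ft{\per(f)}_a=\sum_{x\in\Fp}F(x)\epsilon_a(x)=\sum_{\chi\in\mchars}f_\chi\tau_a(\chi)$ together with $f_{\chi_0}=0$, one gets $\ft{\per(f)}_0=0$ and $\ft{\per(f)}_a=\sum_{\chi\in\mchars}f_\chi\conj{\chi}(a)\tau(\chi)$ for $a\neq 0$, and likewise for $g$. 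Next, using the Fourier inversion formula recorded in this section I would derive the Wiener--Khinchin identity
\[
\sum_{s\in\Fp}|\PC_{\per(f),\per(g)}(s)|^2=\frac{1}{p}\sum_{a\in\Fp}|\ft{\per(f)}_a|^2\,|\ft{\per(g)}_a|^2,
\]
which follows by writing $\PC_{\per(f),\per(g)}(s)=\frac1p\sum_{a}\ft{\per(f)}_a\conj{\ft{\per(g)}_a}\epsilon_a(s)$ and applying the additive orthogonality relation \eqref{Oliver} twice.

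The core of the argument is then to expand the right-hand side. Substituting the Gauss-sum expressions (the $a=0$ term vanishes) yields
\[
\sum_{a\in\Fp}|\ft{\per(f)}_a|^2|\ft{\per(g)}_a|^2=\sum_{\phi,\chi,\psi,\omega\in\mchars}f_\phi g_\chi\conj{f_\psi g_\omega}\,\tau(\phi)\tau(\chi)\conj{\tau(\psi)\tau(\omega)}\sum_{a\in\Fpu}(\conj{\phi}\,\conj{\chi}\,\psi\,\omega)(a).
\]
By the multiplicative orthogonality relation \eqref{Otto}, the inner sum equals $p-1$ when $\phi\chi=\psi\omega$ and $0$ otherwise, so dividing by $p^2(p-1)$ collapses everything to
\[
\frac{1}{p(p-1)}\sum_{a\in\Fp}|\PC_{\per(f),\per(g)}(a)|^2=\sums{\phi,\chi,\psi,\omega\in\mchars\\\phi\chi=\psi\omega}f_\phi g_\chi\conj{f_\psi g_\omega}\frac{\tau(\phi)\tau(\chi)\conj{\tau(\psi)\tau(\omega)}}{p^2}.
\]
This is precisely the quantity that Lemma \ref{Harold} identifies with $S_{f,g}+1+U_{f,g}+V_{f,g}$, which proves the main identity.

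For the final statement I would set $f=g$. Then $U_{f,f}=1$ by \eqref{Albert}, so $S_{f,f}+1+U_{f,f}+V_{f,f}=S_{f,f}+2+V_{f,f}$, giving the stated equality. For the lower bound, applying the identity with $g=f$ shows this quantity equals $\frac{1}{p^2(p-1)}\sum_{a\in\Fp}|\ft{\per(f)}_a|^4$. Two inputs drive a Cauchy--Schwarz estimate: first, $\ft{\per(f)}_0=0$, so at most $p-1$ of the values $\ft{\per(f)}_a$ are nonzero; second, the Parseval relation $\sum_{a}|\ft{\per(f)}_a|^2=p\sum_{x}|F(x)|^2=p(p-1)$, where $\sum_x|F(x)|^2=p-1$ follows from \eqref{Otto} and the normalization $\sum_\chi|f_\chi|^2=1$. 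Cauchy--Schwarz over the at most $p-1$ nonzero terms then gives $\sum_a|\ft{\per(f)}_a|^4\ge (p(p-1))^2/(p-1)=p^2(p-1)$, whence the quantity is $\ge 1$.

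I expect the main obstacle to be the bookkeeping in the fourfold character-sum expansion: one must expand the product cleanly, confirm that the $a=0$ contribution drops out, and check that the weight $\tau(\phi)\tau(\chi)\conj{\tau(\psi)\tau(\omega)}/p^2$ emerges with exactly the normalization of Lemma \ref{Harold}. The remaining steps---extracting Wiener--Khinchin from Fourier inversion, and the closing Cauchy--Schwarz---are routine. It is worth noting that the bound $\ge 1$ genuinely requires using that the Fourier mass sits on at most $p-1$ frequencies: merely isolating the $a=0$ term $\PC_{\per(f),\per(f)}(0)=p-1$ would only yield the weaker estimate $1-1/p$.
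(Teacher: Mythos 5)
Your proposal is correct and follows essentially the same route as the paper's proof: Gauss-sum Fourier coefficients for $\per(f)$ and $\per(g)$ via \eqref{George}, the Wiener--Khinchin identity (the paper's Lemma \ref{Norman}, which you rederive via Fourier inversion rather than direct expansion), multiplicative orthogonality \eqref{Otto} plus Lemma \ref{Harold} for the main identity, and Parseval with $\ft{\per(f)}_0=0$ (the paper's Lemma \ref{Julia}) combined with Cauchy--Schwarz over the $p-1$ nonzero frequencies for the lower bound. Your closing observation that restricting to $p-1$ frequencies is what yields $\geq 1$ rather than $\geq 1-1/p$ is exactly the point of the paper's argument as well.
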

\begin{proof}
Note that if $\{f_\chi\}_{\chi\in \mchars}$ is the character combination for $f$, then
\begin{align*}
\ft{\per(f)}_a
& = \sum_{\chi\in\mchars} f_\chi \sum_{j \in \Fp} \chi(j) \epsilon_a(j) \\
& = \sum_{\chi\in\mchars} f_\chi \tau_a(\chi) \\
& = \sum_{\chi\in\mchars} f_\chi \tau(\chi) \conj{\chi}(a),
\end{align*}
where we have used the definition of the Gauss sum from \eqref{Theodore} in the second step (keeping in mind that $\chi(0)=0$ by our convention for multiplicative characters), and we have used \eqref{George} in the last step.
And similarly, if $\{g_\chi\}_{\chi\in\mchars}$ is the character combination for $g$, then $\ft{\per(g)}_a=\sum_{\chi\in\mchars} g_\chi \tau(\chi) \conj{\chi}(a)$.
We use Lemma \ref{Norman} below and our values of $\ft{\per(f)}_a$ and $\ft{\per(g)}_a$ to see that
\begin{align*}
\sum_{a \in \Fp} \frac{|\PC_{\per(f),\per(g)}(a)|^2}{p(p-1)}
& = \frac{1}{p^2(p-1)} \sum_{a \in \Fp} |\ft{\per(f)}_a \ft{\per(g)}_a|^2 \\
& = \!\!\!\!\! \sum_{\phi,\chi,\psi,\omega \in \mchars} \!\!\!\!\! \frac{f_\phi g_\chi \conj{f_\psi g_\omega} \tau(\phi)\tau(\chi) \conj{\tau(\psi)\tau(\omega)}}{p^2 (p-1)} \sum_{a \in \Fp} \conj{\phi\chi}\psi\omega(a) \\
& = \sums{\phi,\chi,\psi,\omega \in \mchars \\ \phi\chi=\psi\omega} f_\phi g_\chi \conj{f_\psi g_\omega} \frac{\tau(\phi)\tau(\chi) \conj{\tau(\psi)\tau(\omega)}}{p^2} \\
& = S_{f,g}+1+U_{f,g}+V_{f,g},
\end{align*}
where we have used the orthogonality relation \eqref{Otto} in the penultimate step, and Lemma \ref{Harold} in the ultimate one.

When $f=g$, then $U_{f,f}=1$ per \eqref{Albert}.
By the Cauchy-Schwarz inequality applied to the vector $v$ of length $p-1$ whose entries are $|\ft{\per(f)}_a|^2$ for $a \in \{1,\ldots,p-1\}$ and the vector $w$ whose entries are all $1$, we see that
\[
\sum_{a \in \Fpu} |\ft{\per(f)}_a|^2 \leq \sqrt{p-1} \sqrt{\sum_{a \in \Fpu} |\ft{\per(f)}_a|^4},
\]
so that
\[
\sum_{a \in \Fp} |\ft{\per(f)}_a|^4 \geq \frac{1}{p-1} \left(\sum_{a \in \Fpu} |\ft{\per(f)}_a|^2\right)^2.
\]
From Lemma \ref{Julia} below, we know that $\sum_{a \in \Fpu} |\ft{\per(f)}_a|^2=(p-1)p$, so that
\[
\sum_{a \in \Fp} |\ft{\per(f)}_a|^4 \geq (p-1)p^2,
\]
so that Lemma \ref{Norman} shows that
\[
\frac{1}{p(p-1)} \sum_{a \in \Fp} |\PC_{\per(f),\per(f)}(a)|^2 \geq 1,
\]
and since we already know that the left hand side of this last inequality is $S_{f,f}+1+U_{f,f}+V_{f,f}=S_{f,f}+2+V_{f,f}$, this completes our proof.
\end{proof}
We close this section with the technical lemmata used in our proof.
\begin{lemma}\label{Julia}
Let $p$ be a prime.
Let $f$ be a character combination sequence with prime $p$.  Then
\[
\sum_{a \in \Fpu} |\ft{\per(f)}_a|^2 =  p \PC_{\per(f),\per(f)}(0) = p(p-1)
\]
and $\ft{\per(f)}_0=0$.
\end{lemma}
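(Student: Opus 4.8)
The plan is to reduce everything to the two orthogonality relations \eqref{Oliver} and \eqref{Otto}, together with the two normalization conventions from Section~\ref{Karl} (namely $f_{\chi_0}=0$ for the trivial character $\chi_0$, and $\sum_{\chi\in\mchars}|f_\chi|^2=1$). Write $F(x)=\sum_{\chi\in\mchars} f_\chi\chi(x)$, so that $\per(f)=(F(x))_{x\in\Fp}$ and $\ft{\per(f)}_a=\sum_{x\in\Fp} F(x)\epsilon_a(x)$. First I would dispose of $\ft{\per(f)}_0$: since $\epsilon_0(x)=1$ for all $x$, we have $\ft{\per(f)}_0=\sum_{x\in\Fp}\sum_{\chi} f_\chi\chi(x)=\sum_{\chi} f_\chi\sum_{x\in\Fpu}\chi(x)$, where the restriction to $\Fpu$ is legitimate because $\chi(0)=0$ for every character by our convention. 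By \eqref{Otto} the inner sum vanishes unless $\chi=\chi_0$, in which case it is $p-1$; since $f_{\chi_0}=0$, this forces $\ft{\per(f)}_0=0$.

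Next I would prove the Parseval-type identity $\sum_{a\in\Fp}|\ft{\per(f)}_a|^2 = p\,\PC_{\per(f),\per(f)}(0)$. Expanding,
\[
\sum_{a\in\Fp}|\ft{\per(f)}_a|^2 = \sum_{a\in\Fp}\sum_{x,y\in\Fp} F(x)\conj{F(y)}\,\epsilon_a(x)\conj{\epsilon_a(y)},
\]
and using $\conj{\epsilon_a(y)}=\epsilon_a(-y)$ together with $\epsilon_a(x)\epsilon_a(-y)=\epsilon_{x-y}(a)$, I interchange the order of summation to isolate the inner sum $\sum_{a\in\Fp}\epsilon_{x-y}(a)$. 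By \eqref{Oliver} this equals $p$ when $x=y$ and $0$ otherwise, so the double sum collapses to $p\sum_{x\in\Fp}|F(x)|^2=p\,\PC_{\per(f),\per(f)}(0)$. Because $\ft{\per(f)}_0=0$ from the previous step, the left-hand sum over $\Fp$ coincides with the sum over $\Fpu$, which yields the first claimed equality.

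Finally I would evaluate $\PC_{\per(f),\per(f)}(0)=\sum_{x\in\Fp}|F(x)|^2$ directly. Expanding $F$ in characters gives $\sum_{\phi,\psi} f_\phi\conj{f_\psi}\sum_{x\in\Fpu}\phi(x)\conj{\psi(x)}$, where once more the $x=0$ term drops since $\phi(0)=\psi(0)=0$. Treating $\phi\conj{\psi}$ as a single multiplicative character and applying \eqref{Otto}, the inner sum is $p-1$ if $\phi=\psi$ and $0$ otherwise, so the whole expression is $(p-1)\sum_{\phi}|f_\phi|^2=p-1$ by the normalization. Hence $p\,\PC_{\per(f),\per(f)}(0)=p(p-1)$, completing the chain of equalities.

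There is no serious obstacle here: every step is an application of one of the two orthogonality relations, and the only points requiring care are purely bookkeeping --- remembering that the convention $\chi(0)=0$ silently removes the $x=0$ contributions, that $f_{\chi_0}=0$ is exactly what forces $\ft{\per(f)}_0=0$, and that the conjugate of an additive character satisfies $\conj{\epsilon_a(y)}=\epsilon_a(-y)$ so that the exponents combine correctly in the Parseval step. I would deliberately avoid invoking Lemma~\ref{Norman}, since in the form used in Proposition~\ref{Hilda} it sums $|\PC|^2$ over \emph{all} shifts rather than isolating the shift-zero value needed here; the direct argument above is both shorter and entirely self-contained.
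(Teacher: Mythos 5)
Your proof is correct and takes essentially the same route as the paper's: both establish $\ft{\per(f)}_0=0$ from the multiplicative orthogonality relation \eqref{Otto} together with $f_{\chi_0}=0$, then use a Parseval identity to write $\sum_{a\in\Fpu}|\ft{\per(f)}_a|^2 = p\,\PC_{\per(f),\per(f)}(0)$ (the restriction to $\Fpu$ being harmless since the $a=0$ term vanishes), and finally evaluate $\PC_{\per(f),\per(f)}(0)=p-1$ by expanding in characters, applying \eqref{Otto} again, and using the normalization $\sum_\chi|f_\chi|^2=1$. The only difference is cosmetic: you derive the Parseval step directly from the additive orthogonality relation \eqref{Oliver}, whereas the paper simply cites the Parseval theorem.
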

\begin{proof}
Let $\{f_\chi\}_{\chi \in \mchars}$ be the character combination and $s$ be the shift of $f$, so that the $j$th term of $f$ is $\sum_{\chi\in\mchars} f_\chi \chi(j+s)$.  Then
\begin{align*}
\ft{\per(f)}_0
& = \sum_{j \in \Fp} \sum_{\chi\in\mchars} f_\chi \chi(j+s) \epsilon_0(j) \\
& = \sum_{\chi \in \mchars} f_\chi \sum_{j \in \Fp} \chi(j+s),
\end{align*}
and the sum over $j$ is zero by orthogonality relation \eqref{Otto} unless $\chi$ is the trivial character.  But our character combination sequences always have $f_\chi=0$ when $\chi$ is the trivial character (see Section \ref{Karl}), so that $\ft{\per(f)}_0=0$.

Now that we know that $\ft{\per(f)}_0=0$, we use the Parseval theorem to see that $\sum_{a \in \Fpu} |\ft{\per(f)}_a|^2$ is $p$ times the sum of the squared magnitudes of the terms of $\per(f)$, that is,
\begin{align*}
\sum_{a \in \Fpu} |\ft{\per(f)}_a|^2
& = p \PC_{\per(f),\per(f)}(0) \\
& = p \sum_{j \in \Fp} |f_j|^2 \\
& = p \sum_{j \in \Fp} \left|\sum_{\phi\in\mchars} f_\phi \phi(j)\right|^2 \\
& = p \sum_{j \in \Fp} \sum_{\phi,\chi \in \mchars} f_\phi \phi(j) \conj{f_\chi \chi(j)} \\
& = p \sum_{\phi,\chi \in \mchars} f_\phi \conj{f_\chi} \sum_{j \in \Fp} \phi\conj{\chi}(j) \\
& = p(p-1) \sums{\phi,\chi \in \mchars \\ \phi=\chi} f_\phi \conj{f_\chi} \\
& = p(p-1),
\end{align*}
where we use the orthogonality relation \eqref{Otto} in the penultimate step, and the fact that we normalize our sequences so that $\sum_{\chi} |f_\chi|^2=1$ in the ultimate step.
\end{proof}
\begin{lemma}\label{Norman}
Let $n$ be a positive integer, and let $f=(f_j)_{j \in \Z/n\Z}$ and $g=(g_j)_{j \in \Z/n\Z}$ be periodic sequences of complex numbers.  Then
\[
\sum_{a \in \Z/n\Z} |\PC_{f,g}(a)|^2 = \frac{1}{n} \sum_{a \in \Z/n\Z} |\ft{f}_a \ft{g}_a|^2.
\]
\end{lemma}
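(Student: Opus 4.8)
The plan is to prove this as an instance of the correlation theorem (the cyclic analogue of Wiener--Khinchin): the periodic crosscorrelation is, up to the normalizing factor $1/n$, a Fourier transform of the pointwise product $\ft{f}_a\conj{\ft{g}_a}$, and the desired identity then follows from a second application of orthogonality (Parseval). Throughout I work with the additive characters $\epsilon_a(x)=\exp(2\pi i a x/n)$ of $\Z/n\Z$, so that $\ft{f}_a=\sum_{x\in\Z/n\Z} f_x\epsilon_a(x)$, the inverse transform reads $f_x=\frac1n\sum_{a\in\Z/n\Z}\ft{f}_a\conj{\epsilon_a(x)}$, and the orthogonality relation $\sum_{x\in\Z/n\Z}\epsilon_c(x)=n$ when $c\equiv 0\pmod n$ and $0$ otherwise holds (the exact generalization of \eqref{Oliver} to $\Z/n\Z$).

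First I would substitute the inverse transform for both factors into the definition $\PC_{f,g}(s)=\sum_{j\in\Z/n\Z} f_j\conj{g_{j+s}}$. Writing $f_j=\frac1n\sum_a\ft{f}_a\conj{\epsilon_a(j)}$ and $\conj{g_{j+s}}=\frac1n\sum_b\conj{\ft{g}_b}\,\epsilon_b(j+s)$, the inner sum over $j$ becomes $\sum_j\epsilon_{b-a}(j)$, which by orthogonality forces $b=a$ and contributes a factor $n$. This collapses the double character sum to the clean identity $\PC_{f,g}(s)=\frac1n\sum_{a\in\Z/n\Z}\ft{f}_a\conj{\ft{g}_a}\,\epsilon_a(s)$, exhibiting $\PC_{f,g}$ as a transform of the frequency-domain product.

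Next I would form $\sum_{s\in\Z/n\Z}|\PC_{f,g}(s)|^2$ by multiplying this expression by its complex conjugate and summing over $s$. The conjugate introduces a second frequency index $b$ together with $\epsilon_{a-b}(s)=\epsilon_a(s)\conj{\epsilon_b(s)}$; summing over $s$ and applying orthogonality a second time forces $a=b$ and yields another factor of $n$. Combining the two factors of $1/n$ from the two transforms with the one surviving $n$ leaves exactly $\frac1n\sum_{a}|\ft{f}_a|^2|\ft{g}_a|^2=\frac1n\sum_a|\ft{f}_a\ft{g}_a|^2$, which is the claim.

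The only care needed is bookkeeping: tracking the signs in the exponents and the placement of complex conjugates when converting between $f,g$ and their transforms, so that the two applications of orthogonality land on $b=a$ and then $a=b$ respectively. There is no substantive obstacle here, since the argument is entirely self-contained once the additive-character orthogonality for $\Z/n\Z$ is in hand; the main thing to get right is to state the generalized Fourier conventions cleanly, as the transform $\ft{(\cdot)}$ was introduced earlier only for sequences indexed by $\Fp$ rather than by a general $\Z/n\Z$.
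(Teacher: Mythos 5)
Your proof is correct, but it takes a genuinely different route from the paper's. You work from the time side: substituting the inverse transforms into the definition of $\PC_{f,g}(s)$ and applying orthogonality once gives the spectral representation $\PC_{f,g}(s)=\frac{1}{n}\sum_{a}\ft{f}_a\conj{\ft{g}_a}\,\epsilon_a(s)$ (the cyclic correlation theorem), and a second application of orthogonality --- a Parseval-type computation --- then yields the claim. The paper instead works from the frequency side and never writes down that spectral identity: it expands $\frac{1}{n}\sum_{a}|\ft{f}_a\ft{g}_a|^2$ using the forward transform only, applies orthogonality exactly once (summing over $a$) to reduce to the sum of $f_t g_u\conj{f_v g_w}$ over quadruples with $t+u=v+w$, and then reparameterizes that constrained sum (setting $u=v+s$ and $w=t+s$) so that it factors as $\sum_{s}|\PC_{f,g}(s)|^2$. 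Your route costs one extra orthogonality application and needs the inverse-transform formula, which the paper states only for sequences indexed by $\Fp$ but which holds verbatim over $\Z/n\Z$ --- a convention issue you rightly flag; in exchange you obtain a clean, reusable intermediate identity that exhibits the lemma as Parseval applied to the pointwise product $\ft{f}_a\conj{\ft{g}_a}$. The paper's argument is slightly more economical and self-contained, trading that conceptual intermediate step for a combinatorial reindexing of the quadruple sum.
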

\begin{proof}
Note that
\begin{align*}
\frac{1}{n} \sum_{a \in \Z/n\Z} |\ft{f}_a \ft{g}_a|^2
& = \frac{1}{n} \sum_{a,t,u,v,w \in \Z/n\Z} f_t \epsilon_a(t) g_u \epsilon_a(u) \conj{f_v \epsilon_a(v) g_w \epsilon_a(w)} \\
& = \sum_{t,u,v,w \in \Z/n\Z} f_t g_u \conj{f_v g_w} \cdot \frac{1}{n} \sum_{a \in \Z/n\Z} \epsilon(a(t+u-v-w)) \\
& = \sums{t,u,v,w \in \Z/n\Z \\ t+u=v+w} f_t g_u \conj{f_v g_w} \\
& = \sum_{s,t,v \in \Z/n\Z} f_t g_{v+s} \conj{f_{v} g_{t+s}},
\end{align*}
where in the last step, we reparameterize the sum with $w=t+s$ and $u=v+s$.
Thus 
\begin{align*}
\frac{1}{n} \sum_{a \in \Z/n\Z} |\ft{f}_a \ft{g}_a|^2
& = \sum_{s \in \Z/n\Z} \left(\sum_{t \in \Z/n\Z} f_t \conj{g_{t+s}}\right) \conj{\left(\sum_{v \in \Z/n\Z} f_v \conj{g_{v+s}}\right)} \\
& = \sum_{s \in \Z/n\Z} |\PC_{f,g}(s)|^2.\qedhere
\end{align*}
\end{proof}

\section{Optimum Performance}\label{Ophelia}

The following theorem shows that there is a maximum asymptotic autocorrelation merit factor that can be obtained by families of our character combination sequences in the limit described in Corollary \ref{Ellen}.  It also indicates precisely the parameter values that allow us to attain the maximum.
\begin{theorem}\label{Richard}
Let $\{f_\iota\}_{\iota \in I}$ be a family of character combination sequences meeting the hypotheses of Corollary \ref{Ellen}, with $p_\iota$ the prime for $f_\iota$, and with $f^u_\iota$ a unimodularization of $f_\iota$ for each $\iota \in I$.  Then the asymptotic demerit factor $\lim_{p_\iota \to \infty} \DF(f_\iota)=\lim_{p_\iota \to \infty} \DF(f^u_\iota)$ given by Corollary \ref{Ellen} is at least $0.157677\ldots$, the smallest root of $27 x^3+417 x^2+249 x+29$, or equivalently, the asymptotic merit factor is no greater than $6.342061\ldots$, the largest root of $29 x^3+249 x^2+417 x + 27$.  This lower bound on asymptotic demerit factor (or equivalent upper bound on merit factor) is achievable.  To achieve this optimum, it is necessary and sufficient that the limiting parameter values in Corollary \ref{Ellen} be $S=-2$, $V=1$, $\Lambda=1.057827\ldots$, the middle root of $4 x^3-30 x+27$, and $R \in \{\frac{1}{4}(1-2\Lambda)+\frac{n}{2}: n \in \Z\}$.
\end{theorem}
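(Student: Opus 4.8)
The plan is to read Corollary \ref{Ellen}'s asymptotic formula
\[
\DF \to -1 + \tfrac{2}{3} S \Lambda + 2\,\Omega\!\left(\tfrac1\Lambda,0\right) + V\,\Omega\!\left(\tfrac1\Lambda, 1 + \tfrac{2R}{\Lambda}\right)
\]
as an objective to be minimized over the admissible limiting parameters $(S,V,\Lambda,R)$, and then to identify the minimizing arguments. First I would record the constraints these limits must satisfy. Passing \eqref{Barbara} (applied to $V_{f,f}$) to the limit gives $0 \le V \le 1$, and $\Lambda>0$ is given. The decisive inequality is $S+V\ge -1$: Proposition \ref{Hilda} gives $S_{f,f}+2+V_{f,f}\ge 1$ for every member of the family (recall $U_{f,f}=1$), so the same holds in the limit. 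Since shifts are arbitrary integers, $R=\lim r_\iota/p_\iota$ can be any real number, so $1+2R/\Lambda$ sweeps all of $\R$.

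Next I would minimize one variable at a time. Because the coefficient $\tfrac23\Lambda$ of $S$ is strictly positive, the objective strictly decreases in $S$, so at any minimizer the bound $S\ge -1-V$ is active, i.e.\ $S=-1-V$. Substituting yields
\[
-1 - \tfrac{2}{3}\Lambda + 2\,\Omega\!\left(\tfrac1\Lambda,0\right) + V\left[\Omega\!\left(\tfrac1\Lambda, 1+\tfrac{2R}{\Lambda}\right) - \tfrac{2}{3}\Lambda\right].
\]
As the bracket's coefficient $V$ is nonnegative, I would then pick $R$ to minimize $y\mapsto\Omega(1/\Lambda,y)$. For the relevant range $1<\Lambda<3/2$ this function is periodic with period $1/\Lambda$ and, by a direct local computation (only the two terms $n=0,1$ are active near the midpoint), attains its \emph{unique} minimum $2\bigl(1-\tfrac{1}{2\Lambda}\bigr)^2$ at the midpoints $y\equiv\tfrac{1}{2\Lambda}\pmod{1/\Lambda}$; solving $1+2R/\Lambda\equiv\tfrac{1}{2\Lambda}$ recovers exactly $R\in\{\tfrac14(1-2\Lambda)+\tfrac n2:n\in\Z\}$. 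Using also $\Omega(1/\Lambda,0)=1+2(1-1/\Lambda)^2$ on $1<\Lambda<2$, the bracket is negative at the eventual optimum, which forces $V=1$ and hence $S=-2$.

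With $V=1$, $S=-2$, and the optimal $R$ the objective collapses to the single-variable function
\[
D(\Lambda) = 7 - \tfrac{4}{3}\Lambda - \tfrac{10}{\Lambda} + \tfrac{9}{2\Lambda^2},
\]
whose derivative vanishes precisely when $4\Lambda^3-30\Lambda+27=0$; the middle root $\Lambda=1.057827\ldots$ lies in $(1,3/2)$, has $D''>0$, and gives $D=0.157677\ldots$. I would confirm this value is the smallest root of $27x^3+417x^2+249x+29$ (so the merit factor is the largest root of the reciprocal polynomial $29x^3+249x^2+417x+27$) by an elimination/resultant check between the two cubics. For achievability and the ``sufficient'' direction I would invoke the Legendre construction: the sequences $\hpslu$ have $S_{f,f}=-2$ and $V_{f,f}=1$ (a one-line evaluation of \eqref{Albert} using $\eta=\conj\eta$), so taking $\ell\approx\Lambda p$ and shift $\approx Rp$ realizes the optimum, as already recorded in Theorem \ref{Rebecca}.

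The main obstacle is the global one-variable minimization. The reduced objective $\Phi(\Lambda)=\min_{V,R}D$ is only piecewise-smooth, since both $\Omega(1/\Lambda,0)$ and $\min_y\Omega(1/\Lambda,y)$ change functional form as lattice points enter or leave the governing unit interval, and since the optimal $V$ jumps between $0$ and $1$ according to the sign of $\min_y\Omega(1/\Lambda,y)-\tfrac23\Lambda$. I would show $\Phi(\Lambda)\to\infty$ as $\Lambda\to0^+$ (where $\Omega(1/\Lambda,0)=1$) and as $\Lambda\to\infty$ (where $\Omega\sim\tfrac23\Lambda$ makes $-\tfrac23\Lambda+2\Omega\sim\tfrac23\Lambda$), confining the minimum to a bounded interval, and then rule out each remaining piece by comparing its infimum with $0.157677\ldots$. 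Proving that the global minimizer is \emph{unique}---so that $S=-2$, $V=1$, $\Lambda=1.057827\ldots$, and the stated $R$-lattice are all necessary---is the delicate part, as it requires showing that any of $S>-1-V$, $V<1$, or $\Lambda\ne\Lambda^*$ strictly raises the value.
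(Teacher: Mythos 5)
Your core reduction is exactly the paper's: both arguments extract the constraint $S+2+V\geq 1$ from Proposition \ref{Hilda}, use strict positivity of the coefficient $\tfrac{2}{3}\Lambda$ to force that constraint to be active at any minimizer (your $S=-1-V$ is the paper's $Q=1$), minimize over $R$ by moving $1+2R/\Lambda$ to the midpoint of the lattice, and then use negativity of $\min_y \Omega(1/\Lambda,y)-\tfrac{2}{3}\Lambda$ to push $V$ to its cap of $1$, whence $S=-2$. Where you diverge is the endgame. The paper never redoes the two-variable minimization over $(\Lambda,R)$: once $Q=V=1$, it invokes \cite[Corollary 3.2]{Jedwab-Katz-Schmidt-2013-Littlewood}, which supplies in one stroke the global minimum value $0.157677\ldots$, the uniqueness of the minimizing $\Lambda$, the lattice of optimal $R$, and (with \cite[Theorem 1.5]{Katz-2013-Asymptotic}) achievability by appended shifted Legendre sequences. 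You instead recompute by hand, and your computation is correct on the slab $1<\Lambda<3/2$: the closed forms $\Omega(1/\Lambda,0)=1+2(1-1/\Lambda)^2$ and $\min_y\Omega(1/\Lambda,y)=2\bigl(1-\tfrac{1}{2\Lambda}\bigr)^2$, the reduced objective $7-\tfrac{4}{3}\Lambda-\tfrac{10}{\Lambda}+\tfrac{9}{2\Lambda^2}$, and the critical equation $4\Lambda^3-30\Lambda+27=0$ all check out.

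Two points need care before this counts as a proof. First, the step ``the bracket is negative at the eventual optimum, which forces $V=1$'' is circular as written: to conclude $V=1$ at a minimizer you need the bracket negative at whatever $\Lambda$ the minimizer has, which you do not know in advance. The paper avoids this with Lemma \ref{Justin}, which proves $2\Lambda/3>\Omega\bigl(1/\Lambda,1/(2\Lambda)\bigr)$ for \emph{all} $\Lambda>0$ via the piecewise formula of \cite[Lemma 25]{Katz-2016-Aperiodic}, so $V=1$ is forced on every piece, not only on $(1,3/2)$. Your piece-by-piece plan (optimal $V\in\{0,1\}$ per interval $[m-\tfrac12,m+\tfrac12]$ according to the sign of the bracket) can substitute for this, but it must actually be executed on every piece. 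Second, the necessity assertions (that $\Lambda$ must be the middle root of $4x^3-30x+27$ and $R$ must lie in the stated lattice) require strict comparison of infima across all pieces and uniqueness of the minimizer within the winning piece; this is precisely the content of \cite[Corollary 3.2]{Jedwab-Katz-Schmidt-2013-Littlewood}, which you correctly identify as the delicate part but leave as a sketch. So: same approach as the paper, correct on the critical piece, with the global-in-$\Lambda$ analysis — which the paper obtains by citation — left open. (Your appeal to Theorem \ref{Rebecca} for achievability is harmless, since that theorem rests on the same external references, not on Theorem \ref{Richard}.)
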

\begin{proof}
Let $f_\iota$ have $F_\iota$ as it field, $r_\iota$ as its shift, and $\ell_\iota$ as its length.
Let $S_\iota=S_{f_\iota,f_\iota}$, $U_\iota=U_{f_\iota,f_\iota}=1$ and $V_\iota=V_{f_\iota,f_\iota}$ be the parameters associated with $f_\iota$ as defined in \eqref{Albert}.
The assumptions of Corollary \ref{Ellen} tell us that $\{p_\iota: \iota \in I\}$ is infinite, and that there are real numbers $\Lambda>0$, $S$, and $V$ such that $\ell_\iota/p_\iota \to \Lambda$, $S_\iota \to S$, and $V_\iota \to V$ as $p_\iota \to \infty$.
And if $V \not=0$, then we are also given a real number $R$ such that $r_\iota/p_\iota \to R$ as $p_\iota \to \infty$.
Then Corollary \ref{Ellen} gives the value of the limiting demerit factor, which we call
\[
D=\lim_{p_\iota \to\infty} \DF(f_\iota) = \lim_{p_\iota \to\infty} \DF(f^u_\iota).
\]
Corollary \ref{Ellen} states that
\[
D+1=S\cdot \frac{2}{3}\Lambda + 2 \Omega\left(\frac{1}{\Lambda},0\right) + V \cdot \Omega\left(\frac{1}{\Lambda},1+\frac{2 R}{\Lambda}\right).
\]
Let $Q_\iota=S_\iota+1+U_\iota+V_\iota=S_\iota+2+V_\iota$ for each $\iota\in I$, and let $Q=S+2+V$, which is equal to $\lim_{p_\iota \to\infty} Q_\iota$.
Thus
\begin{equation}\label{Regina}
D+1 = (Q-V-2) \cdot \frac{2}{3}\Lambda + 2 \Omega\left(\frac{1}{\Lambda},0\right) + V \cdot \Omega\left(\frac{1}{\Lambda},1+\frac{2 R}{\Lambda}\right).
\end{equation}
By Proposition \ref{Hilda}, we see that $Q_\iota \geq 1$ for all $\iota \in I$, so then $Q \geq 1$, and so
\[
D+1 \geq (-V-1) \cdot \frac{2}{3}\Lambda  + 2 \Omega\left(\frac{1}{\Lambda},0\right) + V \cdot \Omega\left(\frac{1}{\Lambda},1+\frac{2 R}{\Lambda}\right),
\]
and since $\Lambda > 0$, equality is achievable if and only if $Q=1$.
Now \cite[Lemma 22]{Katz-2016-Aperiodic} tells us that for a fixed $\Lambda >0$, the function $\Omega(1/\Lambda,1+\frac{2 R}{\Lambda})$ achieves a global minimum value of $\Omega(1/\Lambda,1/(2\Lambda))$ when $R$ is chosen appropriately, so
\[
D+1 \geq (-V-1) \cdot \frac{2}{3}\Lambda  + 2 \Omega\left(\frac{1}{\Lambda},0\right) + V \cdot \Omega\left(\frac{1}{\Lambda},\frac{1}{2 \Lambda}\right).
\]
Then Lemma \ref{Justin} below tells us that $2 x/3-\Omega(1/x,1/(2 x)) > 0$ for all $x > 0$, so we see that the limiting demerit factor gets strictly smaller as $V$ increases.
Since $V_\iota \leq 1$ for all $\iota$ by \eqref{Barbara}, we see that $V \leq 1$, and so
\[
D+1 \geq -\frac{4}{3}\Lambda  + 2 \Omega\left(\frac{1}{\Lambda},0\right) + \Omega\left(\frac{1}{\Lambda},\frac{1}{2\Lambda}\right),
\]
and equality is achievable if and only if $Q=V=1$.

So we see that the global minimum of $D+1$ (if one exists) can only exist when $Q=V=1$ (or equivalently, $S=-2$ and $V=1$).
When $Q=V=1$, we return to \eqref{Regina} to see that
\[
D+1 = -\frac{4}{3}\Lambda + 2 \Omega\left(\frac{1}{\Lambda},0\right) + \Omega\left(\frac{1}{\Lambda},1+\frac{2 R}{\Lambda}\right).
\]
In \cite[Corollary 3.2]{Jedwab-Katz-Schmidt-2013-Littlewood}, it is shown that the function on the right hand side (for $\Lambda, R$ real numbers with $\Lambda > 0$) achieves a global minimum value with global minimizers as described in the statement of this theorem.  This minimum is obtainable using appended, shifted Legendre sequences and their unimodularizations (see \cite[Theorem 1.5]{Katz-2013-Asymptotic} and \cite[Corollary 3.2]{Jedwab-Katz-Schmidt-2013-Littlewood}).
\end{proof}
Theorem \ref{Richard} shows that having $S+2+V=1$ is a necessary condition for our sequence family $\{f_\iota\}_{\iota \in I}$ to achieve the maximum asymptotic autocorrelation merit factor.  Proposition \ref{Hilda} shows that this is tantamount to requiring that the quantity $\frac{1}{p_\iota(p_\iota-1)} \sum_{a \in \F_{p_\iota}} |\PC_{\per(f_\iota),\per(f_\iota)}(a)|^2$ (which is always at least $1$) tend to $1$ in the limit as $p_\iota \to \infty$.
Thus mean square periodic autocorrelation must be low in order to get low mean square aperiodic autocorrelation (high autocorrelation merit factor).  This provides a vindication of the ideas of Boehmer \cite[p.~157]{Boehmer-1967-Binary} quoted in the Introduction.

The maximum autocorrelation merit factor alluded to in Theorem \ref{Richard} is achieved with moderate appending (the sequences should have a limiting ratio of length to prime tending to $1.057827\ldots$, as noted in \cite{Jedwab-Katz-Schmidt-2013-Advances}).  Beyond this amount of appending, the autocorrelation merit factor drops because the periodic extension of the sequence causes large autocorrelation values at shifts that are multiples of the prime $p$, since the terms of the sequence repeat with period $p$.

For crosscorrelation, there is no upper bound on the merit factor of character combination sequences.
In Theorem \ref{Irene}, we shall see that we can make crosscorrelation merit factor increase without bound by using carefully selected sequences where the appending is such that each sequence repeats many periods.

We close this section with the following technical lemma was used in the proof of Theorem \ref{Richard}.
\begin{lemma}\label{Justin}
For all $x > 0$, we have $2 x/3 > \Omega(1/x,1/(2 x))$.
\end{lemma}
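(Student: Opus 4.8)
The plan is to recognize the sum defining $\Omega(1/x,1/(2x))$ as a midpoint-rule approximation to the integral of a convex function, and then to invoke the strict form of the midpoint inequality for convex functions.

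First I would unfold the definition. Since $\Omega(1/x,1/(2x))=\sum_{n\in\Z}\max(0,1-|n/x-1/(2x)|)^2$ and $n/x-1/(2x)=(2n-1)/(2x)$, the arguments are exactly the odd integer multiples of $1/(2x)$. As $n$ ranges over $\Z$, each odd positive integer $2k+1$ arises twice (from $n=k+1$ and from $n=-k$, giving $\pm(2k+1)$), so by symmetry $\Omega(1/x,1/(2x))=2\sum_{k=0}^{\infty}\max(0,1-(2k+1)/(2x))^2$. Thus it suffices to prove $\sum_{k=0}^{\infty} g(2k+1)<x/3$, where $g(u)=\max(0,1-u/(2x))^2$ for $u\ge 0$.

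Next I would record two facts about $g$ on $[0,\infty)$. The first is the integral $\int_0^\infty g(u)\,du=2x/3$, obtained by the single substitution $w=1-u/(2x)$ on $[0,2x]$. The second is convexity: on $[0,2x]$ the function $g$ is the parabola $(1-u/(2x))^2$ with positive second derivative, on $[2x,\infty)$ it is identically $0$, and the two pieces join $C^1$ with common one-sided derivative $0$ at $u=2x$, so $g'$ is continuous and nondecreasing and $g$ is convex on all of $[0,\infty)$. The point of this setup is that $2\sum_{k\ge 0} g(2k+1)$ is precisely the midpoint-rule sum for $\int_0^\infty g$ relative to the partition of $[0,\infty)$ into the unit-length-doubled intervals $[2k,2k+2]$ with midpoints $2k+1$.

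Then I would apply the midpoint inequality interval by interval: for convex $g$ a supporting line at the midpoint $m=2k+1$ gives $g(u)\ge g(m)+c(u-m)$, and integrating over the symmetric interval $[2k,2k+2]$ kills the linear term, yielding $\int_{2k}^{2k+2} g\ge 2g(2k+1)$, with equality if and only if $g$ is affine on that interval. Summing over $k$ gives $2x/3=\int_0^\infty g\ge 2\sum_{k}g(2k+1)=\Omega(1/x,1/(2x))$. The one step requiring care — the crux, though a mild one — is upgrading this to a strict inequality, for which I need a single interval on which $g$ is genuinely non-affine. Because $x>0$, the interval $[0,2]$ meets $(0,2x)$ in a set of positive measure, and there $g$ coincides with the strictly convex parabola; hence $g$ is not affine on $[0,2]$ and the midpoint inequality for $k=0$ is strict. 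Therefore $\Omega(1/x,1/(2x))<2x/3$, which is the claimed bound $2x/3>\Omega(1/x,1/(2x))$.
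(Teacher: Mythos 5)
Your proof is correct, but it takes a genuinely different route from the paper's. The paper's proof invokes the explicit piecewise formula for $\Omega(1/x,1/(2x))$ from \cite[Lemma 25]{Katz-2016-Aperiodic}: for $x$ in $[m-1/2,m+1/2]$ one has $\Omega(1/x,1/(2x))=2m-2m^2/x+m(4m^2-1)/(6x^2)$, and the desired inequality then reduces, after clearing denominators, to the positivity of $4(x-m)^3+m$ on that interval, which is immediate since $|x-m|\le 1/2$ and $m\ge 1$ (the range $0<x<1/2$, where $\Omega$ vanishes, is handled separately). You instead identify $\Omega(1/x,1/(2x))=2\sum_{k\ge 0}\max\left(0,1-(2k+1)/(2x)\right)^2$ as the midpoint-rule sum, over the intervals $[2k,2k+2]$, for $\int_0^\infty g$ with $g(u)=\max(0,1-u/(2x))^2$, verify that $g$ is convex on $[0,\infty)$ and that $\int_0^\infty g = 2x/3$, and then apply the midpoint inequality (supporting line at each midpoint, linear term integrating to zero over the symmetric interval), with strictness coming from the $k=0$ interval on which $g$ is genuinely non-affine because $x>0$. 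All the steps check out: the reindexing by odd integers, the $C^1$ join of the parabola with the zero function at $u=2x$, the equality case of the midpoint inequality for continuous functions, and the strictness argument. What each approach buys: the paper's argument is very short \emph{given} the external closed-form formula, and that formula gives the exact size of the gap $2x/3-\Omega(1/x,1/(2x))$; your argument is self-contained, needs no case analysis in $x$ and no citation, and explains conceptually \emph{why} the inequality holds --- it is precisely the statement that the midpoint rule underestimates the integral of a (non-affine) convex function --- so it would generalize readily to other offsets and weights.
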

\begin{proof}
For $0 < x < 1/2$, we have $\Omega(1/x,1/(2 x))=0$ by \cite[Lemma 25]{Katz-2016-Aperiodic}, so we may assume $x \geq 1/2$ henceforth, and let $m$ be a positive integer chosen so that $x \in [m-1/2,m+1/2]$.  Then \cite[Lemma 25]{Katz-2016-Aperiodic} tells us that
\[
\Omega\left(\frac{1}{x},\frac{1}{2 x}\right) = 2 m- \frac{2 m^2}{x} + \frac{m(4 m^2-1)}{6 x^2},
\]
so it suffices to show that
\[
\frac{2 x}{3} - 2 m + \frac{2 m^2}{x} - \frac{m(4 m^2-1)}{6 x^2} > 0
\]
for $x \in [m-1/2,m+1/2]$.
Equivalently, it suffices to show that the polynomial $4 x^3-12 m x^2+12 m^2 x - 4 m^3+4 m=4(x-m)^3+m$ is strictly positive for $x \in [m-1/2,m+1/2]$.  This is so because $|x-m| \leq 1/2$ on our interval, making $|4(x-m)|^3 \leq 1/2$, and we are assuming that $m$ is a positive integer.
\end{proof}

\section{Sequences Derived from $2 m$th Order Residues}\label{Emily}

In this section, we construct a special class of character combination sequences based on power residues in finite fields.  These can be unimodularized to obtain (among many others) the sequences $\fpslu$, $\gpslu$, and $\hpslu$ in \eqref{Elizabeth} in the Introduction.  Throughout this section, we let $m$ be a fixed positive integer, and construct families of sequences indexed by the primes that are $1$ modulo $2 m$.

Suppose that $p$ is a prime with $p\equiv 1 \pmod{2 m}$.
Let $\Fputm$ be the subgroup of $\Fpu$ consisting of $2 m$th order residues, that is, $\Fputm=\{x^{2 m}: x \in \Fpu\}$, which is the unique subgroup of index $2 m$ in the cyclic group $\Fpu$.
Then the quotient group $\Fpu/\Fputm$ is a cyclic group of order $2 m$: if $\alpha$ is a primitive element of $\Fpu$, then $\Fpu/\Fputm$ is generated by the coset $\alpha\Fputm$, and consists of the cosets $\Fputm,\alpha\Fputm,\ldots,\alpha^{2 m-1}\Fputm$, which partition $\Fpu$.

Let ${\mathcal A}$ be a subset consisting of half the elements of $\Fpu/\Fputm$, that is, $\card{{\mathcal A}}=m$.  Consider the function $F_{p,{\mathcal A}}\colon \Fp \to \C$ with
\begin{equation}\label{Lisa}
F_{p,{\mathcal A}}(j)= \begin{cases}
0 & \text{if $j=0$,} \\
+1 & \text{if $j \in \bigcup_{A \in {\mathcal A}} A$}, \\
-1 & \text{if $j \in \Fpu\smallsetminus \bigcup_{A \in {\mathcal A}} A$}.
\end{cases}
\end{equation}
We regard $F_{p,{\mathcal A}}$ as a periodic sequence $(F_{p,{\mathcal A}}(j))_{j \in \Fp}$.
For $s,\ell \in \Z$ with $\ell >0$, we define $f_{p,{\mathcal A}}^{s,\ell}$ to be the (non-periodic) sequence
\begin{equation}\label{Fred}
f_{p,{\mathcal A}}^{s,\ell}= (F_{p,{\mathcal A}}(s),\ldots,F_{p,{\mathcal A}}(s+\ell-1)).
\end{equation}
We call $f_{p,{\mathcal A}}^{s,\ell}$ the {\it $2 m$th residue class sequence with prime $p$, classes ${\mathcal A}$, shift $s$, and length $\ell$}.

\begin{example}\label{Earl}
{\em Note that if we set $m=2$, let $p$ be a prime with $p\equiv 1 \pmod{4}$, let $\alpha$ be a primitive element of $\Fpu$, and let ${\mathcal A}=\{\Fpuf,\alpha\Fpuf\}$, then $F_{p,{\mathcal A}}(j)$ in \eqref{Lisa} is almost the same as $F_p$ defined in \eqref{Nelson} in the Introduction, with the only difference being that $F_{p,{\mathcal A}}(0)=0$ while $F_p(0)=1$.  We then let $\fpsl$ be $f_{p,{\mathcal A}}^{s,\ell}$ as in \eqref{Fred}.  Then $\fpslu$ from \eqref{Elizabeth} in the Introduction is just the unimodularization of $\fpsl=f_{p,{\mathcal A}}^{s,\ell}$ where we replace every $0$ term with a $1$.

Similarly, we let $\gpsl$ be $f_{p,{\mathcal B}}^{s,\ell}$ where ${\mathcal B}=\{\Fpuf,\alpha^3\Fpuf\}$, and then $\gpslu$ from \eqref{Elizabeth} in the Introduction is a unimodularization of $\gpsl$.

And likewise, $\hpslu$ from \eqref{Elizabeth} in the Introduction is a unimodularization of $f_{p,{\mathcal D}}^{s,\ell}$ where ${\mathcal D}=\{\Fpuf,\alpha^2\Fpuf\}$.

To obtain quadratic residue sequences based on quadratic characters modulo all odd primes $p$ (rather than just those primes that are $1$ modulo $4$), we instead set $m=1$, let $p$ be a prime with $p\equiv 1 \pmod{2}$, let ${\mathcal E}=\{\Fput\}$, and then let $\hpsl$ be $f_{p,{\mathcal E}}^{s,\ell}$ from \eqref{Fred}.  This $\hpsl$ is a generic quadratic residue sequence, which can be unimodularized to produce $\hpslu$, a Legendre sequence with shift $s$ and modified length $\ell$.  When $p$ is $1$ modulo $4$, this agrees with the definition of $\hpslu$ in the Introduction.}
\end{example}

Returning to $2 m$th residue class sequences in general, we claim that our sequence $f_{p,{\mathcal A}}^{s,\ell}$ as described in \eqref{Fred} is always a character combination sequence.
We prove this in the next lemma, but we need to set some notation first.
The group $\mchars$ of multiplicative characters of $\Fp$ is a cyclic group of order $p-1$, and so it has a unique cyclic group of order $2 m$ which we call $\Theta_p$.  If $\chi \in \Theta_p$, then we note that $\chi(x)=1$ for any $x \in \Fputm$ since $x=y^{2 m}$ for some $y \in \Fpu$, and thus $\chi(x)=\chi(y^{2 m})=\chi^{2 m}(y)=1$ since $\chi^{2 m}$ is the trivial character.  So $\chi$ has a constant value on any coset of $\Fputm$.  By abuse of notation, if $A$ is such a coset, that is, if $A \in \Fpu/\Fputm$, then $\chi(A)$ is just the value of $\chi(a)$ for any $a \in A$.  With this notation, and recalling that $\Fpu/\Fputm$ is a group, we see that $\chi(A B)=\chi(A)\chi(B)$ and $\chi(A^{-1})=\chi(A)^{-1}=\conj{\chi}(A)$ for any $A, B \in \Fpu/\Fputm$.

If $\alpha$ is a primitive element of $\Fpu$, then $\Fpu/\Fputm$ consists of the $2 m$ cosets $\alpha^0\Fputm,\ldots,\alpha^{2 m-1}\Fputm$.  And $\Theta_p$ consists of the $2 m$ characters $\theta_0,\ldots,\theta_{2 m-1}$, where $\theta_j(\alpha^k)=\exp(\pi i j k/m)$ for any $k \in \Z$.  Thus, using our notation, we have  $\theta_j(\alpha^k \Fputm)=\exp(\pi i j k/m)$, and from this it is not hard to show that for any $A \in \Fpu/\Fputm$, we have
\begin{equation}\label{William}
\frac{1}{2 m} \sum_{\chi \in \Theta_p} \chi(A) =\begin{cases}
1 & \text{if $A=\Fputm$}, \\
0 & \text{otherwise.}
\end{cases}
\end{equation}
\begin{lemma}\label{Veronica}
Let $m$ be a positive integer, let $p$ be a prime with $p\equiv 1 \pmod{2 m}$, let ${\mathcal A}$ be a subset of $\Fpu/\Fputm$ with $\card{{\mathcal A}}=m$, and let $f_{p,{\mathcal A}}^{s,\ell}$ be a $2 m$th residue class sequence with prime $p$, classes ${\mathcal A}$, shift $s$, and length $\ell$.
Let $\Theta_p$ be the unique subgroup of order $2 m$ in $\mchars$.
Then $f_{p,{\mathcal A}}^{s,\ell}$ is a unimodularizable character combination sequence with character combination $\{f_\chi\}_{\chi\in \mchars}$, where
\[
f_{\chi} = \begin{cases}
\frac{1}{m} \sum_{A \in {\mathcal A}} \conj{\chi}(A) & \text{if $\chi$ is nontrivial and $\chi \in \Theta_p$}, \\
0 & \text{otherwise},
\end{cases}
\]
and we have $f_{\conj{\chi}}=\conj{f_\chi}$ for all $\chi \in \mchars$.
\end{lemma}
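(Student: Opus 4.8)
The plan is to verify directly that the function $F_{p,{\mathcal A}}$ of \eqref{Lisa} coincides with the function $F$ attached to the proposed coefficients $\{f_\chi\}$ through \eqref{Francis}, and then to check the two defining conditions of a character combination sequence (balance and normalization), the symmetry $f_{\conj{\chi}}=\conj{f_\chi}$, and unimodularizability. The only substantive input is the subgroup orthogonality relation \eqref{William} for $\Theta_p$; everything else is bookkeeping.

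First I would record coefficients for \emph{all} of $\Theta_p$, not merely the nontrivial ones: set $c_\chi=\frac{1}{m}\sum_{A\in{\mathcal A}}\conj{\chi}(A)$ for every $\chi\in\Theta_p$. Then $f_\chi=c_\chi$ for nontrivial $\chi\in\Theta_p$, while $c_{\chi_0}=1$ (since $\chi_0(A)=1$ on each coset and $\card{\mathcal A}=m$), and $f_\chi=0$ for $\chi\notin\Theta_p$ or $\chi=\chi_0$. The central computation is then to evaluate $\sum_{\chi\in\Theta_p}c_\chi\chi(a)$ for $a\in\Fpu$. Writing $a\in B$ for its coset $B\in\Fpu/\Fputm$, I would interchange the sums and apply \eqref{William} in the form $\sum_{\chi\in\Theta_p}\chi(BA^{-1})=2m$ when $B=A$ and $0$ otherwise; this collapses the double sum to $2$ when $B\in{\mathcal A}$ and to $0$ when $B\notin{\mathcal A}$. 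Since $\chi_0(a)=1$ and $c_{\chi_0}=1$, subtracting the trivial-character term gives $\sum_{\chi\neq\chi_0}f_\chi\chi(a)$ equal to $+1$ when $B\in{\mathcal A}$ and $-1$ otherwise, which is exactly $F_{p,{\mathcal A}}(a)$. The case $a=0$ is immediate because $\chi(0)=0$ for all $\chi$. This establishes $F_{p,{\mathcal A}}(a)=\sum_{\chi\in\mchars}f_\chi\chi(a)$, so $f_{p,{\mathcal A}}^{s,\ell}$ from \eqref{Fred} is precisely the character combination sequence for $\{f_\chi\}$.

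Next I would dispatch the remaining conditions. Balance ($f_{\chi_0}=0$) holds by construction. For the normalization $\sum_{\chi}|f_\chi|^2=1$ I would run the same orthogonality trick: expanding $|c_\chi|^2=\frac{1}{m^2}\sum_{A,A'\in{\mathcal A}}\conj{\chi}(A)\chi(A')$ and summing over $\Theta_p$ gives $\frac{1}{m^2}\cdot 2m\cdot\card{\mathcal A}=2$, and removing $|c_{\chi_0}|^2=1$ leaves exactly $1$. For the symmetry $f_{\conj{\chi}}=\conj{f_\chi}$ I would use that $\Theta_p$ is a group, so $\conj{\chi}\in\Theta_p$ if and only if $\chi\in\Theta_p$, together with $\conj{\chi}(A)=\conj{\chi(A)}$; in the nontrivial case this yields $f_{\conj{\chi}}=\frac{1}{m}\sum_{A\in{\mathcal A}}\chi(A)=\conj{f_\chi}$, and in all other cases both sides vanish. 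Finally, unimodularizability is read straight off \eqref{Lisa}: for $p\nmid j$ we have $F_{p,{\mathcal A}}(j)\in\{+1,-1\}$, which are unimodular.

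There is no deep obstacle here; the real content is the single identity $F_{p,{\mathcal A}}=\sum_\chi f_\chi\chi$, and the one point requiring care is the trivial character. Its natural coefficient $c_{\chi_0}$ equals $1$ rather than $0$, and it is precisely the subtraction of this term that converts the indicator value $2$ on $\bigcup_{A\in{\mathcal A}}A$ into the $\pm1$ pattern of \eqref{Lisa}. Once that is handled, both the normalization and the conjugation symmetry follow from the same application of \eqref{William}.
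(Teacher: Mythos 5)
Your proposal is correct and follows essentially the same route as the paper's own proof: both hinge on the orthogonality relation \eqref{William} for $\Theta_p$, establish the identity $F_{p,{\mathcal A}}=\sum_\chi f_\chi\chi$ by summing over all of $\Theta_p$ and then removing the trivial character's contribution, and verify the normalization by the same orthogonality computation. The only difference is cosmetic (you subtract the trivial-character term at the end, while the paper folds it in via the ``scale by $1/m$ and subtract $1$'' step), so there is nothing to add.
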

\begin{proof}
Let $F_{p,{\mathcal A}} \colon \Fp \to \C$ be as defined in \eqref{Lisa}, so that
\[
f_{p,{\mathcal A}}^{s,\ell}=(F_{p,{\mathcal A}}(s),\ldots,F_{p,{\mathcal A}}(s+\ell-1)),
\]
and let $\chi_0$ denote the trivial character in $\mchars$.

Then we claim that
\begin{equation}\label{Ursula}
F_{p,{\mathcal A}}(j)=\sums{\chi\in\Theta_p \\ \chi\not=\chi_0} \left(\frac{1}{m} \sum_{A \in {\mathcal A}} \conj{\chi}(A)\right) \chi(j).
\end{equation}
The formula is clearly correct for $j=0$, so we assume $j \in \Fpu$ henceforth.  By \eqref{William}, we see that
\[
\sum_{A \in {\mathcal A}} \sum_{\chi \in \Theta_p} \conj{\chi}(A) \chi(j) = \sum_{A \in {\mathcal A}} \sum_{\chi\in\Theta_p} \conj{\chi}(j^{-1} A),
\]
is $2 m$ if and only if $j \in \bigcup_{A \in {\mathcal A}} A$; otherwise it is $0$.
Thus, if we scale by $1/m$ and subtract $1$, we obtain
\[
F_{p,{\mathcal A}}(j)=-1 + \sum_{\chi \in \Theta_p} \left(\frac{1}{m} \sum_{A \in {\mathcal A}} \conj{\chi}(A)\right) \chi(j).
\]
Now the trivial character $\chi_0$ contributes $1$ to the double sum regardless of the value of $j \in \Fpu$, so we obtain \eqref{Ursula}.
So the terms of $f_{p,{\mathcal A}}^{s,\ell}$ are indeed obtained from a linear combination of multiplicative characters with the coefficients as claimed in this lemma.

To see that $f_{p,{\mathcal A}}^{s,\ell}$ is a character combination sequence, we need to check that the sum of the squared magnitudes of these coefficients is $1$, so we compute
\begin{align*}
\sums{\chi\in\Theta_p \\ \chi\not=\chi_0} \left|\frac{1}{m} \sum_{A \in {\mathcal A}} \conj{\chi}(A)\right|^2
& = \frac{1}{m^2} \sums{\chi \in \Theta_p \\ \chi \not=\chi_0} \sum_{A,B \in {\mathcal A}} \chi(A^{-1} B) \\
& = \frac{1}{m^2} \sum_{A,B \in {\mathcal A}} (-1+ 2 m \delta_{A,B}) \\
& = 1,
\end{align*}
where $\delta$ is the Kronecker delta, and where we have used \eqref{William} in the second step, and the fact that $\card{{\mathcal A}}=m$ in the third step.
So $f_{p,{\mathcal A}}^{s,\ell}$ is indeed a character combination sequence.

It is clear from the formula for the coefficients that $f_{\conj{\chi}}=\conj{f_\chi}$ for all $\chi \in \mchars$, and it is clear that $f_{p,{\mathcal A}}^{s,\ell}$ is unimodularizable from the definition of $F_{p,{\mathcal A}}$.
\end{proof}
\begin{example}\label{Ethel}
{\em We continue Example \ref{Earl} (keeping the same notation), and now compute the character combinations for the sequences $\fpsl$, $\gpsl$, and $\hpsl$ defined there using the formula in Lemma \ref{Veronica}.  For instance, our $\fpsl$ is $f_{p,{\mathcal A}}^{s,\ell}$ where ${\mathcal A}=\{\Fpuf,\alpha\Fpuf\}$.  We need to use $\Theta_p$, the unique subgroup of order $4$ in $\mchars$.  If $\alpha$ is a primitive element of $\Fpu$, then one quartic (order $4$) character is $\theta\colon \Fpu \to \{\pm 1,\pm i\}$ where $\theta(\alpha^k)=i^k$.  Then $\Theta_p=\{\theta^0,\theta^1,\theta^2,\theta^3\}$, and $\theta^0$ is the trivial character, $\theta^2$ is the quadratic character, and $\theta^3$ is the other quartic character.  Note that $\theta^j(\alpha^k \Fpuf)=i^{j k}$, so then one uses the formula for the character combination coefficients in Lemma \ref{Veronica} to find that $\fpsl$ has character combination $\{f_\chi\}_{\chi\in\mchars}$ with
\[
f_\chi = \begin{cases}
\frac{1-i}{2} & \text{if $\chi=\theta$},\\
\frac{1+i}{2} & \text{if $\chi=\conj{\theta}$},\\
0 & \text{otherwise}.
\end{cases}
\]

Similarly, one can compute that the character combination for $\gpsl$ is $\{g_\chi\}_{\chi\in\mchars}$ with
\[
g_\chi = \begin{cases}
\frac{1+i}{2} & \text{if $\chi=\theta$},\\
\frac{1-i}{2} & \text{if $\chi=\conj{\theta}$},\\
0 & \text{otherwise}.
\end{cases}
\]

Each field $\Fp$ with $p$ an odd prime has a single quadratic character $\eta$, and $\eta$ maps the elements of $\Fput$ to $1$ and the other elements of $\Fpu$ to $-1$.  Given the way it is defined, $\hpsl$ transparently has character combination $\{h_\chi\}_{\chi\in\mchars}$ with
\[
h_\chi = \begin{cases}
1 & \text{if $\chi=\eta$},\\
0 & \text{otherwise},
\end{cases}
\]
or one can easily use the formula of Lemma \ref{Veronica} to compute these coefficients.}
\end{example}

Let us return to the general theory.  Suppose that we have a pair $(f,g)$ of $2 m$th residue class sequences, and we want to compute the parameters in \eqref{Patrick} for pairs of $2 m$th residue class sequences.  The parameter $S_{f,g}$ will turn out to be complicated, and often dependent on the prime.  For our simple cases, it will be easier to compute directly from \eqref{Patrick}.  But $U_{f,g}$ and $V_{f,g}$ have interesting combinatorial formulae.
If $m$ is a positive integer and $A \in \Fpu/\Fputm$, then we use the notation $-A$ to denote the coset $(-1)A$, and if ${\mathcal A} \subseteq \Fpu/\Fputm$, then by extension, we use $-{\mathcal A}$ to denote $\{-A: A \in {\mathcal A}\}$.
\begin{lemma}\label{Martha}
Let $m$ be a positive integer, let $p$ be a prime with $p\equiv 1 \pmod{2 m}$, let ${\mathcal A}$ and ${\mathcal B}$ be subsets of $\Fpu/\Fputm$ with $\card{{\mathcal A}}=\card{{\mathcal B}}=m$, and let $f$ and $g$ be $2 m$th residue class sequences with classes ${\mathcal A}$ and ${\mathcal B}$, respectively, and both with prime $p$.
Let $U_{f,g}$, $V_{f,g}$, $W_f$, and $W_g$ be the parameters for the pair $(f,g)$ as defined in \eqref{Patrick}.  Then
\begin{align*}
U_{f,g} & = \left(\frac{2}{m} \card{{\mathcal A}\cap {\mathcal B}}-1\right)^2 \\
V_{f,g} & = \begin{cases} \left(\frac{2}{m} \card{{\mathcal A}\cap {\mathcal B}}-1\right)^2 & \text{if $p\equiv 1 \pmod{4 m}$} \\ \left(\frac{2}{m}\card{{\mathcal A}\cap(-{\mathcal B})}-1\right)^2 & \text{if $p\not\equiv 1 \pmod{4 m}$}\end{cases} \\
1\leq W_f & \leq \sqrt{2 m-1} \\
1\leq W_g & \leq \sqrt{2 m-1}.
\end{align*}
If $p\not\equiv 1 \pmod{4 m}$ and $\alpha$ is a primitive element of $\Fp$, then $-B=\alpha^m B$ for every $B \in \Fputm/\Fpu$.

In the special case where $f=g$, we have
\begin{align*}
U_{f,f} & = 1 \\
V_{f,f} & = \begin{cases} 1 & \text{if $p\equiv 1 \pmod{4 m}$}, \\ \left(\frac{2}{m}\card{{\mathcal A}\cap(-{\mathcal A})}-1\right)^2 & \text{if $p\not\equiv 1 \pmod{4 m}$}.\end{cases}
\end{align*}
\end{lemma}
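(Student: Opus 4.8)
The plan is to substitute the explicit character combination furnished by Lemma \ref{Veronica} into the definitions \eqref{Patrick} of $U_{f,g}$ and $V_{f,g}$, and then collapse the resulting double sums using the orthogonality relation \eqref{William} for the subgroup $\Theta_p$. Writing $\{f_\chi\}$ and $\{g_\chi\}$ for the respective character combinations, both are supported on the nontrivial characters of $\Theta_p$, with $f_\phi=\frac{1}{m}\sum_{A\in{\mathcal A}}\conj{\phi}(A)$ and $g_\phi=\frac{1}{m}\sum_{B\in{\mathcal B}}\conj{\phi}(B)$. A useful preliminary observation is that Lemma \ref{Veronica} gives $g_{\conj{\phi}}=\conj{g_\phi}$, so the two sums defining $U_{f,g}$ and $V_{f,g}$ share the common summand $f_\phi\conj{g_\phi}$, the only difference being the twisting factor $\phi(-1)$ appearing in $V_{f,g}$.

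First I would compute $\sum_{\phi}f_\phi\conj{g_\phi}$. Since $\conj{g_\phi}=\frac{1}{m}\sum_{B\in{\mathcal B}}\phi(B)$, this equals $\frac{1}{m^2}\sum_{A\in{\mathcal A}}\sum_{B\in{\mathcal B}}\sum_{\phi\in\Theta_p,\,\phi\neq\chi_0}\phi(A^{-1}B)$. By \eqref{William} the full sum over $\Theta_p$ is $2m$ exactly when the coset $A^{-1}B$ is the identity $\Fputm$ (that is, $A=B$) and $0$ otherwise, so subtracting off the trivial character's contribution of $1$ leaves $-1+2m\,\delta_{A,B}$. Summing over ${\mathcal A}$ and ${\mathcal B}$ gives $\frac{1}{m^2}(-m^2+2m\card{{\mathcal A}\cap{\mathcal B}})=\frac{2}{m}\card{{\mathcal A}\cap{\mathcal B}}-1$, a real number, so squaring yields the stated formula for $U_{f,g}$.

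The computation of $V_{f,g}=\bigl|\sum_\phi f_\phi\conj{g_\phi}\,\phi(-1)\bigr|^2$ runs identically except that the orthogonality relation \eqref{William} is now applied to the coset $-A^{-1}B$, so it picks out the pairs with $B=-A$, which are counted by $\card{{\mathcal A}\cap(-{\mathcal B})}$; this produces $\frac{2}{m}\card{{\mathcal A}\cap(-{\mathcal B})}-1$, again real. The case distinction then comes entirely from locating the coset of $-1$ in $\Fpu/\Fputm$: writing $-1=\alpha^{(p-1)/2}$ and $p-1=2mk$, the coset of $-1$ is $\alpha^{mk}\Fputm$, which is the identity coset when $k$ is even (equivalently $p\equiv 1\pmod{4m}$) and equals $\alpha^m\Fputm$ when $k$ is odd (equivalently $p\not\equiv 1\pmod{4m}$), since $m(k-1)$ is then a multiple of $2m$. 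This simultaneously establishes the auxiliary claim that $-B=\alpha^m B$ in the latter regime and shows that $-{\mathcal B}={\mathcal B}$ in the former, which collapses the single expression $\frac{2}{m}\card{{\mathcal A}\cap(-{\mathcal B})}-1$ into the stated piecewise formula for $V_{f,g}$.

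For the bounds on $W_f=\sum_\phi|f_\phi|$, I would use that $f_\phi$ is supported on the $2m-1$ nontrivial characters of $\Theta_p$ and that the normalization forces $\sum_\phi|f_\phi|^2=1$. The lower bound is immediate from $\bigl(\sum_\phi|f_\phi|\bigr)^2=\sum_\phi|f_\phi|^2+(\text{nonnegative cross terms})\geq 1$, and the upper bound $W_f\leq\sqrt{2m-1}$ is Cauchy--Schwarz applied to the at most $2m-1$ nonzero terms; the same reasoning handles $W_g$. The special case $f=g$ (so ${\mathcal A}={\mathcal B}$) then follows by direct substitution, with $\card{{\mathcal A}\cap{\mathcal A}}=m$ giving $U_{f,f}=1$ in agreement with \eqref{Albert} and the $V_{f,g}$ formula specializing to the two stated values. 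The only genuinely delicate step is the bookkeeping for $V_{f,g}$: one must correctly propagate the sign $\phi(-1)$ through the orthogonality relation and pin down the coset of $-1$, since this is precisely where the congruence condition $p\bmod 4m$ enters the statement.
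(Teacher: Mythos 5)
Your proposal is correct and matches the paper's own argument in all essentials: both substitute the coefficients from Lemma \ref{Veronica} into \eqref{Patrick}, collapse the double sums over ${\mathcal A}\times{\mathcal B}$ using the orthogonality relation \eqref{William} to get $\frac{2}{m}\card{{\mathcal A}\cap{\mathcal B}}-1$ and $\frac{2}{m}\card{{\mathcal A}\cap(-{\mathcal B})}-1$, locate the coset of $-1$ via the parity of $(p-1)/(2m)$ to produce the case split and the identity $-B=\alpha^m B$, and derive the $W$ bounds from the normalization over the $2m-1$ nontrivial characters of $\Theta_p$. The only cosmetic differences are the order (you compute $U_{f,g}$ first and treat $V_{f,g}$ as its $\phi(-1)$-twist, while the paper does the reverse) and your slightly more explicit justification of the bound $1\leq W_f$ by expanding the square, which if anything tightens the paper's brief optimization remark.
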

\begin{proof}
Let $\{f_\chi\}_{\chi \in \mchars}$ and $\{g_\chi\}_{\chi\in\mchars}$ be the respective character combinations of $f$ and $g$.  Let $\Theta_p$ be the unique subgroup of order $2 m$ in $\mchars$ and let $\chi_0$ be the trivial character.
First of all we know from Lemma \ref{Veronica} that $f_\chi=0$ when $\chi\not\in\Theta_p$ or when $\chi=\chi_0$, and our sequences are normalized so that $\sums{\chi\in\Theta_p \\ \chi\not=\chi_0} |f_\chi|^2=1$, which is a sum of $2 m-1$ squared magnitudes.  Given this constraint, the sum of magnitudes is minimized when only one magnitude is positive, and is maximized when all magnitudes are equal, so $W_f=\sums{\chi\in\Theta_p\\ \chi\not=\chi_0} |f_\chi|$ lies between $1$ and $\sqrt{2 m-1}$.  The same holds for $W_g$.

We use the values of $f_\phi$ and $g_\phi$ computed in Lemma \ref{Veronica} to obtain
\begin{align*}
V_{f,g}
& = \left|\frac{1}{m^2} \sums{\phi\in\Theta_p \\ \phi\not=\chi_0} \sums{A \in {\mathcal A}\\ B \in {\mathcal B}} \phi(-A^{-1} B)\right|^2 \\
& = \left|\frac{1}{m^2} \sums{A \in {\mathcal A}\\ B \in {\mathcal B}} (-1 +2 m \delta_{A,-B}) \right|^2 \\
& = \left|-1 + \frac{2}{m} \card{{\mathcal A} \cap (-{\mathcal B})}\right|^2.
\end{align*}
where $\delta$ is the Kronecker delta, and we have used \eqref{William} in the second equality, and the fact that $\card{{\mathcal A}}=\card{{\mathcal B}}=m$ in the third, at which point we realize that the quantity inside the absolute value is a real number, thus proving our identity for $V_{f,g}$.
Note that when $p\equiv 1 \pmod{4 m}$, the element $-1 \in \Fpu$ is a $2 m$th power, so then $-B=B$ for all $B \in {\mathcal B}$.  But if $p\not\equiv 1 \pmod{4 m}$, then $-1$ is an $m$th power but not a $2 m$th power, and so if $\alpha$ is a primitive element of $\Fp$, we see that $-\Fputm=\alpha^m\Fputm$, and thus $-B=\alpha^m B$ for any $B \in \Fpu/\Fputm$.

Lemma \ref{Veronica} tells us that $g_{\conj{\phi}}=\conj{g_\phi}$ for every $\phi\in\mchars$, so the calculation of $U_{f,g}$ is similar to that for $V_{f,g}$, but lacks a negative sign in the argument of $\phi$.  We thus obtain
\[
U_{f,g} = \sum_{\phi\in\mchars} |f_\phi g_{\conj{\phi}}|^2 = \left(-1+\frac{2}{m} \card{{\mathcal A} \cap {\mathcal B}}\right)^2.
\]
In the case where $f=g$, we have ${\mathcal A}={\mathcal B}$, and recall that $\card{{\mathcal A}}=m$, so that the results for $U_{f,f}$ and $V_{f,f}$ follow.
\end{proof}
\begin{example}\label{Eveline}
{\em We continue Examples \ref{Earl} and \ref{Ethel} (keeping the same notation), and now compute the $U$ and $V$ parameters from \eqref{Patrick} and \eqref{Albert} for pairs of sequences of the form $f=\fpsl$, $g=\gpsl$, and $h=\hpsl$ defined in Example \ref{Earl}.  Recall that when our prime is $p\equiv 1 \pmod{4}$ and $\alpha$ is a primitive element of $\Fp$, we can set $m=2$ and regard $f=\fpsl$, $g=\gpsl$, and $h=\hpsl$ as $4$th residue class sequences with classes ${\mathcal A}=\{\Fpuf,\alpha\Fpuf\}$, ${\mathcal B}=\{\Fpuf,\alpha^3\Fpuf\}$, and ${\mathcal D}=\{\Fpuf,\alpha^2\Fpuf\}$, respectively.

For any odd prime where we are only considering pairs of quadratic residue sequences, we can instead set $m=1$ and think of $h=\hpsl$ as the $2$nd residue sequence with class $\{\Fput\}$.

In each of these cases, we use the combinatorial formulae in Lemma \ref{Martha} to obtain
\begin{center}
\begin{equation*}
\begin{tabular}{cc}
$U_{f,f} = U_{g,g} = 1$ & $U_{h,h}=1$ \\
$V_{f,f} = V_{g,g} = 1$ & $V_{h,h}=1$ \\
& \\
& \\
$U_{f,g} = 0$ & $U_{f,h}=U_{g,h}=0$ \\
$V_{f,g} = 0$ & $V_{f,h}=V_{g,h}=0$,
\end{tabular}
\end{equation*}
\end{center}
regardless of whatever $p\equiv 1 \pmod{8}$ or $p\not\equiv 1 \pmod{8}$.}
\end{example}

\section{Sequences Derived from Quadratic and Quartic Residues}\label{Gilbert}

In this section, we study the sequences $\fpslu$, $\gpslu$, and $\hpslu$ defined in \eqref{Elizabeth} in the Introduction.
We constructed sequences $\fpsl$, $\gpsl$, and $\hpsl$ in Examples \ref{Earl}, \ref{Ethel}, and \ref{Eveline} such that $\fpslu$, $\gpslu$, and $\hpslu$ can be obtained from $\fpsl$, $\gpsl$, and $\hpsl$, respectively, by unimodularizing (in particular, replacing each zero term with a $1$).
The sequences $\fpsl$ and $\gpsl$ exist for primes $p$ with $p\equiv 1\pmod{4}$.  If $\alpha$ is a primitive element of $\Fp$, then we define the functions
\begin{align*}
\tilde{F}_p(x) & = \begin{cases}
+1 & \text{if $x \in \Fpuf \cup \alpha\Fpuf$} \\
-1 & \text{if $x \in \alpha^2\Fpuf \cup \alpha^3\Fpuf$} \\
 0 & \text{if $x=0$},
\end{cases} \\
\tilde{G}_p(x) & = \begin{cases}
+1 & \text{if $x \in \Fpuf \cup \alpha^3\Fpuf$} \\
-1 & \text{if $x \in \alpha\Fpuf \cup \alpha^2\Fpuf$} \\
0 & \text{if $x=0$},
\end{cases}
\end{align*}
and then
\begin{align}
\fpsl & =(\tilde{F}_p(s),\ldots,\tilde{F}_p(s+\ell-1)) \label{Florence} \\
\gpsl & =(\tilde{G}_p(s),\ldots,\tilde{G}_p(s+\ell-1)). \nonumber
\end{align}
And $\hpsl$ exists for all odd primes $p$, and if $\alpha$ is a primitive element of $\Fp$, then we define the function
\[
\tilde{H}_p(x) = \begin{cases}
+1 & \text{if $x \in \Fput$} \\
-1 & \text{if $x \in \alpha\Fput$} \\
0 & \text{if $x=0$},
\end{cases}
\]
and then
\begin{equation}\label{Ignatius}
\hpsl  =(\tilde{H}_p(s),\ldots,\tilde{H}_p(s+\ell-1)).
\end{equation}
Note that the functions $\tilde{F}_p$, $\tilde{G}_p$, and $\tilde{H}_p$ here differ from $F_p$, $G_p$, and $H_p$ in \eqref{Nelson} in the Introduction only in that $\tilde{F}_p(0)=\tilde{G}_p(0)=\tilde{H}_p(0)=0$ while $F_p(0)=G_p(0)=H_p(0)=1$.  So we see that $\fpslu$, $\gpslu$, and $\hpslu$ in \eqref{Elizabeth} in the Introduction are obtained from $\fpsl$, $\gpsl$, and $\hpsl$ here by a unimodularization which replaces any term $0$ with a $1$.

We have already computed the coefficients of the character combinations for $f=\fpsl$, $g=\gpsl$, and $h=\hpsl$ in Example \ref{Ethel} above.  In Example \ref{Eveline}, we found the $U$ and $V$ parameters from \eqref{Patrick} and \eqref{Albert} for all the mutual pairings of these sequences.  We now use the character combination coefficients from Example \ref{Ethel} to compute directly the $S$ and $W$ parameters from \eqref{Patrick} and \eqref{Albert}.  
In handling the $S$ parameter, we use the fact that $\tau(\conj{\chi})=\chi(-1) \conj{\tau(\chi)}$ from \eqref{Katherine}; other than this, the calculations are routine, and we obtain the following:
\begin{center}
\begin{equation}\label{Judith}
\begin{tabular}{cc}
& \\
$S_{f,f} = S_{g,g} = \frac{-3-\Re(\tau(\theta_p)^4/p^2)}{2}$ &  $S_{h,h}=-2$ \\
$U_{f,f} = U_{g,g} = 1$ & $U_{h,h}=1$ \\
$V_{f,f} = V_{g,g} = 1$ & $V_{h,h}=1$ \\
$W_f    = W_g = \sqrt{2}$ & $W_h=1$ \\
& \\
& \\
$S_{f,g} = \frac{-1+\Re(\tau(\theta_p)^4/p^2)}{2}$ & $S_{f,h}=S_{g,h}=0$ \\
$U_{f,g} = 0$ & $U_{f,h}=U_{g,h}=0$ \\
$V_{f,g} = 0$ & $V_{f,h}=V_{g,h}=0$ \\
& \\
\end{tabular}
\end{equation}
\end{center}
The quantity $\Re(\tau(\theta_p)^4/p^2)$ depends on the specific prime.
Each prime $p$ with $p\equiv 1 \pmod{4}$ can be written as
\[
p=a_p^2+b_p^2
\]
for unique positive integers $a_p, b_p$ with $a_p$ odd and $b_p$ even.
This is a consequence of how primes factor into irreducible elements in the ring of Gaussian integers $\Z[i]$.
For such a prime, we define $\gamma_p$ to be the unique value in $(0,\pi/2)$ with
\begin{equation}\label{Dorothy}
a_p = \sqrt{p} \cos(\gamma_p) \qquad\qquad b_p = \sqrt{p} \sin(\gamma_p).
\end{equation}
\begin{lemma}\label{Theresa}
Let $p$ be a prime with $p\equiv 1 \pmod{4}$ and suppose that $a_p, b_p, \gamma_p$ are as defined above.  If $\theta_p$ is a quartic character of $\Fpu$, then
\[
\Re\left(\frac{\tau(\theta_p)^4}{p^2}\right) = \Re\left(\frac{\tau(\conj{\theta_p})^4}{p^2}\right) =\frac{a_p^2-b_p^2}{p}=\cos(2\gamma_p).
\]
The set of $\gamma_p$ for all primes $p$ congruent to $1$ modulo $4$ is equidistributed in the interval $(0,\pi/2)$.  Thus for any $\gamma \in [0,\pi/2]$, there exists an increasing sequence $\{p_\iota\}_{\iota\in I}$ of primes congruent to $1$ modulo $4$ such that $\gamma_{p_\iota} \to \gamma$ as $p_\iota \to \infty$.
\end{lemma}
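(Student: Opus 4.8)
The plan is to treat the two assertions of the lemma independently: first the closed-form evaluation of $\Re(\tau(\theta_p)^4/p^2)$, which is a finite computation with Gauss and Jacobi sums, and then the equidistribution of the $\gamma_p$, which rests on Hecke's theorem on the angular distribution of Gaussian primes. For the formula I would begin by reducing the quartic Gauss sum to a Jacobi sum. Since $\theta_p^2=\eta$ is the quadratic character, the classical relation $\tau(\theta_p)^2 = J(\theta_p,\theta_p)\,\tau(\eta)$ holds, where $J(\theta_p,\theta_p)=\sum_{x}\theta_p(x)\theta_p(1-x)$; because $p\equiv 1\pmod{4}$ we have $\eta(-1)=1$, so Gauss's sign determination gives $\tau(\eta)=\sqrt{p}$ (see \cite{Lidl-Niederreiter-1997-Finite}). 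Hence $\tau(\theta_p)^4=p\,J(\theta_p,\theta_p)^2$ and therefore $\tau(\theta_p)^4/p^2 = J(\theta_p,\theta_p)^2/p$.

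Writing $J(\theta_p,\theta_p)=A+Bi$ with $A,B\in\Z$, the magnitude relation $|\tau(\theta_p)|^2=p$ from \eqref{Manuel} forces $A^2+B^2=p$, and the classical congruence $J(\theta_p,\theta_p)\equiv -1 \pmod{2(1+i)}$ shows that the real part $A$ is odd and the imaginary part $B$ is even. Matching against the normalization in \eqref{Dorothy}, this identifies $|A|=a_p$ and $|B|=b_p$, so that $\Re(J(\theta_p,\theta_p)^2/p)=(A^2-B^2)/p=(a_p^2-b_p^2)/p=\cos^2\gamma_p-\sin^2\gamma_p=\cos(2\gamma_p)$. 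For the conjugate character I would invoke \eqref{Katherine}: since $-1=\alpha_p^{(p-1)/2}$ with $(p-1)/2$ even, the value $\theta_p(-1)=\pm 1$ is real, so $\conj{\tau(\theta_p)}=\theta_p(-1)\tau(\conj{\theta_p})$ yields $\tau(\conj{\theta_p})^4=\conj{\tau(\theta_p)^4}$, and the two fourth powers consequently share the same real part.

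For the equidistribution, the key observation is that $e^{i\gamma_p}=\pi_p/\sqrt{p}$ where $\pi_p=a_p+ib_p$, so that $e^{4in\gamma_p}=\pi_p^{4n}/p^{2n}$. By Weyl's criterion, equidistribution of $\{\gamma_p\}$ in $(0,\pi/2)$ is equivalent to showing, for every nonzero integer $n$, that $\sum_{p\le x,\; p\equiv 1\,(4)} \pi_p^{4n}/p^{2n}$ is $o$ of the number of primes $p\le x$ with $p\equiv 1\pmod{4}$. The map $\pi\mapsto(\pi/|\pi|)^{4n}$ is invariant under the units $\{\pm 1,\pm i\}$ of $\Z[i]$, and once the normalization ``$a_p$ odd, $b_p$ even, first quadrant'' is encoded as a ray-class condition modulo $2(1+i)$, this becomes a nontrivial Hecke character $\xi_n$ of $\mathbb{Q}(i)$. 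I would then appeal to the analytic properties of its Hecke $L$-function $L(s,\xi_n)$: for $n\ne 0$ it is entire and nonvanishing on the line $\Re(s)=1$, whence the prime ideal theorem for Hecke characters (Landau) gives $\sum_{N(\mathfrak{p})\le x}\xi_n(\mathfrak{p})=o(x/\log x)$. Translating from Gaussian primes back to the rational primes $p\equiv 1\pmod{4}$ lying beneath them supplies the required Weyl cancellation. The final clause is then immediate: equidistribution forces the $\gamma_p$ to be dense in $[0,\pi/2]$ with infinitely many in every subinterval, so for each target $\gamma$ one extracts an increasing subsequence of primes with $\gamma_{p_\iota}\to\gamma$.

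The main obstacle is the equidistribution step, and within it the bookkeeping of the normalization: one must verify that ``$a_p$ odd, $b_p$ even, first quadrant'' cuts out a single coset so that $\xi_n$ is a genuine nontrivial Hecke character, and then borrow the non-vanishing of $L(s,\xi_n)$ on $\Re(s)=1$. This is in essence Hecke's equidistribution theorem for Gaussian primes, so in practice I expect to cite that theorem rather than reconstruct its $L$-function input; the parity and congruence facts for the Jacobi sum used in the first part are likewise classical and can be quoted from a standard reference.
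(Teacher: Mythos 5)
Your evaluation of $\Re(\tau(\theta_p)^4/p^2)$ is correct and is a self-contained alternative to the paper's treatment, which simply cites equations (4.4) and (4.1) of Berndt--Evans: you rederive the identity from $\tau(\theta_p)^2=J(\theta_p,\theta_p)\tau(\eta)$, Gauss's evaluation $\tau(\eta)=+\sqrt{p}$ for $p\equiv 1\pmod{4}$, the classical congruence $J(\theta_p,\theta_p)\equiv -1\pmod{2(1+i)}$ (giving the parities of the real and imaginary parts), and the uniqueness of the representation $p=a_p^2+b_p^2$; the handling of the conjugate character via \eqref{Katherine} is also correct.

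The equidistribution half contains a genuine gap, located exactly at the step you flagged as the main obstacle. Your Weyl criterion is stated correctly: you need $\sum_{p\le x}e^{4in\gamma_p}$ to be $o(x/\log x)$ for each $n\ne 0$. But these sums are not Hecke-character sums over prime ideals, and no choice of conductor makes them so. Any sum over the prime ideals of $\Z[i]$ necessarily includes both $\mathfrak{p}_p=(a_p+ib_p)$ and $\conj{\mathfrak{p}_p}=(a_p-ib_p)$ for every split $p$; with conductor dividing $2(1+i)$ the finite-order part of your $\xi_n$ is forced to be trivial (the ray class group of ${\mathbb Q}(i)$ modulo $2(1+i)$ is trivial, since the four units already represent all of $(\Z[i]/(2+2i))^*$), so $\xi_n((\alpha))=(\alpha/|\alpha|)^{4n}$ for any generator $\alpha$, whence $\xi_n(\conj{\mathfrak{p}_p})=\conj{\xi_n(\mathfrak{p}_p)}$. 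Translating the ideal sum back to rational primes therefore yields $\sum_p 2\cos(4n\gamma_p)$: the imaginary parts $\sum_p\sin(4n\gamma_p)$ of your Weyl sums are invisible to every prime-ideal sum. The normalization ``$a_p$ odd, $b_p$ even, first quadrant'' is not a ray-class condition separating $\mathfrak{p}_p$ from $\conj{\mathfrak{p}_p}$ (conjugate primes lie in the same, unique, ray class), and no finite-order twist can repair this, since by Chebotarev no finite-order Hecke character takes the value $-1$ at every split prime.

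The gap is reparable, and the repaired argument is genuinely different from, and more elementary than, the paper's proof, which invokes the much deeper Heath-Brown--Patterson theorem on equidistribution of normalized quartic Gauss sums. The repair is to notice that the cosine sums suffice. For every integer $k\ne 0$, the map $\psi_k(\mathfrak{a})=(\alpha/|\alpha|)^{2k}$, with $\alpha$ any generator of $\mathfrak{a}$ satisfying $\alpha\equiv 1\pmod{2}$, is a well-defined Hecke character of conductor dividing $(2)$: the only units $\equiv 1\pmod{2}$ are $\pm 1$, which the even exponent kills. This is where the parity normalization genuinely enters your argument---for odd $k$ such infinity types exist only with nontrivial conductor, whereas unramified characters allow only exponents divisible by $4$. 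Hecke's prime ideal theorem then gives $\sum_{p\le x}\cos(2k\gamma_p)=o(x/\log x)$ for all $k\ge 1$, i.e.\ the conjugation-symmetric multiset $\{\pm 2\gamma_p \bmod 2\pi\}$ is equidistributed. Since $2\gamma_p\in(0,\pi)$ while $-2\gamma_p$ lies in $(\pi,2\pi)$ modulo $2\pi$, any interval inside $(0,\pi)$ meets only points of the first kind, so $\{2\gamma_p\}$ is equidistributed in $(0,\pi)$, hence $\{\gamma_p\}$ in $(0,\pi/2)$. (Equivalently: the functions $\cos(2k\gamma)$, $k\ge 0$, form a complete orthogonal system on $[0,\pi/2]$, so cosine Weyl sums alone certify equidistribution; the sine sums then tend to zero a posteriori even though they are not themselves Hecke sums.) Your final density deduction is unaffected.
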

\begin{proof}
The relation $\Re(\tau(\theta_p)^4/p^2)=\Re(\tau(\conj{\theta_p})^4/p^2)=(a_p^2-b_p^2)/p$ is obtained by combining equations (4.4) and (4.1) of \cite{Berndt-Evans-1981-Determination}.  This, in turn, is $\cos(2 \gamma_p)$ by the definition of $\gamma_p$ and the double angle formula for cosine.

Note that the normalized Gauss sums $\tau(\theta_p)/\sqrt{p}$ and $\tau(\conj{\theta_p})/\sqrt{p}$ lie on the complex unit circle by \eqref{Manuel}, and it has been shown that the set of points one obtains from these two normalized sums for all primes $p$ congruent to $1$ modulo $4$ is equidistributed on the unit circle \cite[\S 1]{Heath-Brown-Patterson-1979-Distribution}, \cite{Patterson-1981-Distribution}, \cite{Patterson-1987-Distribution}.  Thus $\gamma_p$ is equidistributed in $(0,\pi/2)$.
\end{proof}
This last result shows that the value in \eqref{Judith} of $S_{f,f}=S_{g,g}$ for our sequences $f=\fpsl$ and $g=\gpsl$ varies between $-2$ and $-1$ depending on the prime, and by a careful selection of primes $p$ with $\gamma_p$ tending to $0$, we can make infinite families of sequences $\{f_\iota\}_{\iota \in I}$ where $S_{f_\iota,f_\iota}$ tends to any value in this range as the prime tends to infinity.  If we let these $S_{f_\iota,f_\iota}$ tend to $-2$, we obtain the same asymptotic autocorrelation behavior as we do for the quadratic residue sequences $h=\hpsl$.  (Note that $S_{h,h}$ is always $-2$, regardless of the prime $p$.)  Thus we can obtain the same maximum asymptotic autocorrelation merit factor of $6.342061\ldots$ with quartic residue sequences as we do with quadratic residue sequences. (We must, of course, choose the appropriate lengths and shifts, as specified in Theorem \ref{Richard}.)  However, for applications where the sole concern is low autocorrelation, quadratic residue sequences are preferable, because one does not need to select primes $p$ with $\gamma_p \to 0$ in order to approach this maximum.  In practical applications, this means that for quartic residue sequences, very low autocorrelation will only be manifest for certain primes, while quadratic residue sequences should reliably produce very low autocorrelation regardless of the prime (with the possible exception of very short sequences).

We now apply Corollary \ref{Ellen} to determine asymptotic autocorrelation merit factors for our quadratic and quartic residue sequences.  This involves substituting the parameters from \eqref{Judith} into Corollary \ref{Ellen}, and (for quartic residue sequences), adding a condition on the primes to make the $S$ parameter tend to a limit.
The first result for quadratic character sequences recapitulates \cite[eq.~(1.4)]{Hoholdt-Jensen-1988-Determination}, \cite[Theorem 1.5]{Katz-2013-Asymptotic}, \cite[Corollary 3.2]{Jedwab-Katz-Schmidt-2013-Littlewood}, and \cite[Theorem 2.1.(i)]{Jedwab-Katz-Schmidt-2013-Advances}.
\begin{theorem}\label{Rebecca}
Let $\{h_\iota\}_{\iota \in I}$ be a family of quadratic residue sequences, with each $h_\iota$ of the form $\tilde{h}_{p_\iota}^{r_\iota,\ell_\iota}$ as described in \eqref{Ignatius}.
Suppose that $\{p_\iota\}_{\iota \in I}$ is infinite and that there are real numbers $\Lambda >0$ and $R$ such that $\ell_\iota/p_\iota \to \Lambda$ and $r_\iota/p_\iota \to R$ as $p_\iota \to \infty$.  Then
\[
\DF(f_\iota) \to -1 - \frac{4}{3}\Lambda + 2 \Omega\left(\frac{1}{\Lambda},0\right) + \Omega\left(\frac{1}{\Lambda},1+\frac{2 R}{\Lambda}\right)
\]
as $p_\iota \to \infty$.
This limit is always at least $0.157677\ldots$, the smallest root of $27 x^3+417 x^2+249 x+29$, or equivalently, the asymptotic merit factor is no greater than $6.342061\ldots$, the largest root of $29 x^3+249 x+417 x + 27$.
This limiting merit factor is achieved if and only if $\Lambda=1.057827\ldots$, the middle root of $4 x^3-30 x+27$ and $R \in \{\frac{1}{4}(1-2\Lambda)+\frac{n}{2}: n \in \Z\}$.

If we fix $\Lambda=1$, then the asymptotic demerit factor is always at least $1/6$, or equivalently, the asymptotic merit factor is no greater than $6$, and this value is achieved if and only if $R \in \{(2 n+1)/4: n \in \Z\}$.

If $h^u_\iota$ is a unimodularization of $h_\iota$ for each $\iota \in I$ (for example, if $h^u_\iota=h_{p_\iota}^{r_\iota,\ell_\iota}$ as described in \eqref{Elizabeth}\em{)}, then $\DF(h^u_\iota)$ has the same limit as $\DF(h_\iota)$.
\end{theorem}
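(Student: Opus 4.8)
The plan is to derive Theorem \ref{Rebecca} as a direct specialization of Corollary \ref{Ellen}, to obtain the bound and its achievability from Theorem \ref{Richard}, and to treat the fixed-$\Lambda=1$ statement by an explicit evaluation of $\Omega$. First I would verify that the family $\{h_\iota\}_{\iota \in I}$ meets the hypotheses of Corollary \ref{Ellen}. From the parameter table \eqref{Judith} every quadratic residue sequence has $S_{h,h}=-2$, $V_{h,h}=1$, and $W_h=1$, independent of the prime. Since these values are constant in $\iota$, the required limits $S_{h_\iota,h_\iota}\to -2$ and $V_{h_\iota,h_\iota}\to 1$ hold trivially, and $W_{h_\iota}^4(\log p_\iota)^3/\sqrt{p_\iota}=(\log p_\iota)^3/\sqrt{p_\iota}\to 0$ as $p_\iota\to\infty$. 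Because $V=1\neq 0$, the remaining hypothesis that $r_\iota/p_\iota\to R$ is exactly what the theorem assumes. Substituting $S=-2$ and $V=1$ into Corollary \ref{Ellen} yields
\[
\DF(h_\iota)\to -1-\frac{4}{3}\Lambda+2\,\Omega\!\left(\frac{1}{\Lambda},0\right)+\Omega\!\left(\frac{1}{\Lambda},1+\frac{2R}{\Lambda}\right),
\]
which is the stated limit; the identical limit for the unimodularizations $h^u_\iota$ is already part of the conclusion of Corollary \ref{Ellen} and needs no separate argument.

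For the general bound and its achievability, I would note that quadratic residue sequences automatically satisfy the two parameter conditions $S=-2$ and $V=1$ that Theorem \ref{Richard} singles out (together with the constraints on $\Lambda$ and $R$) as necessary and sufficient for attaining the optimal asymptotic demerit factor $0.157677\ldots$. Hence Theorem \ref{Richard} applies without modification: the limit above is at least this value, with equality precisely when $\Lambda=1.057827\ldots$ and $R\in\{\frac{1}{4}(1-2\Lambda)+\frac{n}{2}:n\in\Z\}$. Equivalently, one may invoke \cite[Corollary 3.2]{Jedwab-Katz-Schmidt-2013-Littlewood} directly, since it minimizes exactly the function $-\frac{4}{3}\Lambda+2\,\Omega(1/\Lambda,0)+\Omega(1/\Lambda,1+2R/\Lambda)$ over real $\Lambda>0$ and $R$.

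The only step requiring a hands-on computation is the fixed-$\Lambda=1$ claim. I would evaluate $\Omega(1,0)=\sum_{n\in\Z}\max(0,1-|n|)^2=1$, so the formula collapses to $\DF(h_\iota)\to -\frac{1}{3}+\Omega(1,1+2R)$. Since $\Omega(1,y)$ is periodic in $y$ with period $1$ and equals $y^2+(1-y)^2$ for $y\in[0,1]$, it attains its minimum value $\frac{1}{2}$ exactly when $y\equiv \frac{1}{2}\pmod 1$. Taking $y=1+2R$, this occurs precisely when $R\equiv \frac{1}{4}\pmod{\frac{1}{2}}$, that is, $R\in\{(2n+1)/4:n\in\Z\}$, giving limiting demerit factor $-\frac{1}{3}+\frac{1}{2}=\frac{1}{6}$ and merit factor $6$.

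I expect no genuine obstacle here: the substantive analysis is already carried out in Corollary \ref{Ellen} and Theorem \ref{Richard}, so the proof reduces to the constant-parameter verification and the short evaluation of $\Omega$ at $\Lambda=1$. The only point meriting mild care is the bookkeeping that reduces the general optimality condition $R\in\{\frac{1}{4}(1-2\Lambda)+\frac{n}{2}\}$ to the congruence $R\equiv\frac{1}{4}\pmod{\frac{1}{2}}$ when $\Lambda=1$, and confirming that the period-$1$ structure of $\Omega(1,\cdot)$ places the minimizers exactly in the claimed class $\{(2n+1)/4:n\in\Z\}$.
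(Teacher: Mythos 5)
Your proposal is correct and follows essentially the same route as the paper: the paper also obtains Theorem \ref{Rebecca} by substituting the constant parameters $S_{h,h}=-2$, $V_{h,h}=1$, $W_h=1$ from \eqref{Judith} into Corollary \ref{Ellen}, invoking Theorem \ref{Richard} (whose optimality conditions $S=-2$, $V=1$ are automatic here) for the bound and its attainment, and then handling $\Lambda=1$ via the reduction to $-\tfrac{1}{3}+\Omega(1,1+2R)$. The only cosmetic difference is that where the paper cites \cite[Lemma A.4.(ii)]{Katz-2013-Asymptotic} to minimize $\Omega(1,1+2R)$, you evaluate $\Omega(1,y)=y^2+(1-y)^2$ on $[0,1]$ directly, which is a correct and self-contained substitute.
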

\begin{theorem}\label{Anne}
Let $\{f_\iota\}_{\iota \in I}$ be a family of quartic residue sequences, with each $f_\iota$ of the form $\tilde{f}_{p_\iota}^{r_\iota,\ell_\iota}$ or $\tilde{g}_{p_\iota}^{r_\iota,\ell_\iota}$ as described in \eqref{Florence}.
Suppose that $\{p_\iota\}_{\iota \in I}$ is infinite, and for each $p_\iota$, let $\gamma_{p_\iota}$ be as defined in \eqref{Dorothy}.
Suppose that there are real numbers $\Lambda > 0$, $R$, and $\gamma \in [0,\pi/2]$ such that $\ell_\iota/p_\iota \to \Lambda$, $r_\iota/p_\iota\to R$, and $\gamma_p \to \gamma$ as $p_\iota \to \infty$.  Then
\[
\DF(f_\iota) \to -1 - \frac{3+\cos(2\gamma)}{3}\Lambda + 2 \Omega\left(\frac{1}{\Lambda},0\right) + \Omega\left(\frac{1}{\Lambda},1+\frac{2 R}{\Lambda}\right)
\]
as $p_\iota \to \infty$.
This limit is always at least $0.157677\ldots$, the smallest root of $27 x^3+417 x^2+249 x+29$, or equivalently, the asymptotic merit factor is no greater than $6.342061\ldots$, the largest root of $29 x^3+249 x+417 x + 27$.
This limiting merit factor is achieved if and only if $\Lambda=1.057827\ldots$, the middle root of $4 x^3-30 x+27$ and $R \in \{\frac{1}{4}(1-2\Lambda)+\frac{n}{2}: n \in \Z\}$ and $\gamma=0$.

If we fix $\Lambda=1$, then the asymptotic demerit factor is always at least $1/6$, or equivalently, the asymptotic merit factor is no greater than $6$, and this value is achieved if and only if $R \in \{(2 n+1)/4: n \in \Z\}$ and $\gamma=0$.

If $f^u_\iota$ is a unimodularization of $f_\iota$ for each $\iota \in I$ (for example, if $f^u_\iota=f_{p_\iota}^{r_\iota,\ell_\iota}$ or $g_{p_\iota}^{r_\iota,\ell_\iota}$ as described in \eqref{Elizabeth} when $f_\iota$ is respectively $\tilde{f}_{p_\iota}^{r_\iota,\ell_\iota}$ or $\tilde{g}_{p_\iota}^{r_\iota,\ell_\iota}$ as described in \eqref{Florence}\em{)}, then $\DF(f^u_\iota)$ has the same limit as $\DF(f_\iota)$.
\end{theorem}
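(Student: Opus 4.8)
The plan is to obtain this as a direct application of Corollary \ref{Ellen} to the quartic residue sequences $\tilde{f}_{p_\iota}^{r_\iota,\ell_\iota}$ and $\tilde{g}_{p_\iota}^{r_\iota,\ell_\iota}$, feeding in the parameter values already tabulated in \eqref{Judith} together with the Gauss sum evaluation of Lemma \ref{Theresa}. First I would verify the hypotheses of Corollary \ref{Ellen}. These sequences are unimodularizable character combination sequences by Lemma \ref{Veronica}, and $\ell_\iota/p_\iota \to \Lambda > 0$ and $r_\iota/p_\iota \to R$ are assumed. From \eqref{Judith} we have $S_{f_\iota,f_\iota} = (-3 - \Re(\tau(\theta_{p_\iota})^4/p_\iota^2))/2$, and Lemma \ref{Theresa} identifies $\Re(\tau(\theta_{p_\iota})^4/p_\iota^2) = \cos(2\gamma_{p_\iota})$; since $\gamma_{p_\iota} \to \gamma$, this yields $S_{f_\iota,f_\iota} \to S := (-3-\cos(2\gamma))/2$. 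Also from \eqref{Judith}, $V_{f_\iota,f_\iota} = 1$ for every $\iota$, so $V = 1$, while $W_{f_\iota} = \sqrt{2}$ makes $W_{f_\iota}^4(\log p_\iota)^3/\sqrt{p_\iota} = 4(\log p_\iota)^3/\sqrt{p_\iota} \to 0$. Since $V = 1 \neq 0$, the convergence $r_\iota/p_\iota \to R$ is exactly the extra condition Corollary \ref{Ellen} requires.

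With these values in hand, Corollary \ref{Ellen} gives
\[
\DF(f_\iota) \to -1 + \frac{-3-\cos(2\gamma)}{2}\cdot\frac{2}{3}\Lambda + 2\Omega\!\left(\frac{1}{\Lambda},0\right) + \Omega\!\left(\frac{1}{\Lambda},1+\frac{2R}{\Lambda}\right),
\]
and collapsing the middle term to $-\frac{3+\cos(2\gamma)}{3}\Lambda$ produces the claimed formula. The assertion about unimodularizations is then immediate from the final sentence of Corollary \ref{Ellen}.

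For the extremal bounds I would compare directly with the quadratic case. The limiting demerit factor above differs from that in Theorem \ref{Rebecca} only in the coefficient of $\Lambda$, the quartic expression exceeding the quadratic one by exactly $\frac{1-\cos(2\gamma)}{3}\Lambda \geq 0$, with equality if and only if $\gamma = 0$ (as $\Lambda > 0$). Hence the quartic limit dominates the quadratic limit and is therefore at least $0.157677\ldots$ by Theorem \ref{Rebecca}. Equivalently one may invoke Theorem \ref{Richard}: the family meets its hypotheses, we always have $V = 1$, and $S = -2$ holds precisely when $\cos(2\gamma) = 1$, i.e. $\gamma = 0$, so the optimum forces $\gamma = 0$ along with $\Lambda = 1.057827\ldots$ and $R \in \{\frac{1}{4}(1-2\Lambda)+\frac{n}{2}: n \in \Z\}$. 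That this optimum is attained follows because Lemma \ref{Theresa} supplies an increasing family of primes with $\gamma_{p_\iota} \to 0$. The $\Lambda = 1$ assertion is handled by the same comparison, or directly by substituting $\Lambda = 1$, using $\Omega(1,0) = 1$ and the elementary fact that $\Omega(1,y)$ is $1$-periodic with minimum value $1/2$ attained at half-integer $y$, which pins the minimum $1/6$ to the conditions $\gamma = 0$ and $R \in \{(2n+1)/4 : n \in \Z\}$.

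The whole argument is essentially bookkeeping once Corollary \ref{Ellen} and the table \eqref{Judith} are available; the only genuinely substantive inputs are the convergence $S_{f_\iota,f_\iota} \to S$, which rests on Lemma \ref{Theresa}, and the appeal to the equidistribution statement there to guarantee that a family realizing $\gamma = 0$ (and hence the record merit factor) actually exists. The step most prone to error is matching the optimality criteria of Theorem \ref{Richard} against the quartic-specific constraint that $S$ ranges only over $[-2,-1]$, so that the optimal value $S = -2$ is reached only in the limiting regime $\gamma \to 0$ rather than for any single prime.
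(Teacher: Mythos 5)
Your proposal is correct and follows essentially the same route as the paper's proof: apply Corollary \ref{Ellen} (the autocorrelation specialization of Theorem \ref{Mary}) with the parameters tabulated in \eqref{Judith} and the Gauss-sum evaluation of Lemma \ref{Theresa}, then invoke Theorem \ref{Richard} for the lower bound and its equality conditions, with the $\Lambda=1$ case handled by noting $\gamma=0$ is optimal and minimizing over $R$. The only cosmetic differences are that the paper settles the $\Lambda=1$ minimization of $-1/3+\Omega(1,1+2R)$ by citing an external lemma rather than your direct elementary computation of $\Omega(1,y)$, and your alternative comparison against Theorem \ref{Rebecca} is a redundant (though sound) second route to the bound.
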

\begin{proof}
We apply Theorem \ref{Mary} with the parameters from \eqref{Judith}, and note that we meet the optimality conditions of Theorem \ref{Richard} only when $\gamma=0$.  When we fix $\Lambda=1$, then it is clear that using $\gamma=0$ always makes the limiting demerit factor smaller than it would otherwise be, and in that case, we have $\DF(f_\iota) \to -1/3 + \Omega(1,1+2 R)$.  Then we apply \cite[Lemma A.4.(ii)]{Katz-2013-Asymptotic} to see that this function achieves a minimum value of $1/6$ precisely when $R \in \{(2 n+1)/4: n \in \Z\}$.
\end{proof}
Now we proceed to crosscorrelation.  
We apply Theorem \ref{Mary} to calculate asymptotic crosscorrelation merit factors for our quadratic and quartic residue sequences, and present the results below in Theorems \ref{Clarence} and \ref{Christopher}.
This involves substituting the parameters from \eqref{Judith} into Theorem \ref{Mary}, and (in Theorem \ref{Christopher}) adding a condition on the primes to make the $S$ parameter tend to a limit.

Although quartic residue sequences are less useful than quadratic residue sequences from the point of view of autocorrelation, they enable us to obtain sequence pairs with very low crosscorrelation.
If we crosscorrelate our quartic residue sequences with quadratic residue sequences, the performance is average, as seen in Theorem \ref{Clarence}: we achieve asymptotic crosscorrelation demerit factor $1$, which is on par with the performance of random sequences (whose average crosscorrelation demerit factor at any given length is $1$ per Sarwate \cite[eq.~(38)]{Sarwate-1984-Mean}).
But if we crosscorrelate pairs of quartic residue sequences with each other, then we may obtain considerably lower crosscorrelation demerit factors, as seen in Theorem \ref{Christopher}.
\begin{theorem}\label{Clarence}
Let $\{(f_\iota,h_\iota)\}_{\iota \in I}$ be family of sequence pairs, with each $f_\iota$ a quartic residue sequence of the form $\tilde{f}_{p_\iota}^{r_\iota,\ell_\iota}$ or $\tilde{g}_{p_\iota}^{r_\iota,\ell_\iota}$ as described in \eqref{Florence}, and each $h_\iota$ a quadratic residue sequence of the form $\tilde{h}_{p_\iota}^{s_\iota,\ell_\iota}$ as described in \eqref{Ignatius}.
Suppose that $\{p_\iota\}_{\iota \in I}$ is infinite, and that there is a positive real number $\Lambda$ such that $\ell_\iota/p_\iota \to \Lambda$ as $p_\iota \to \infty$.  Then
\[
\CDF(f_\iota,h_\iota) \to \Omega\left(\frac{1}{\Lambda},0\right)
\]
as $p_\iota \to \infty$, which achieves a global minimum value of $1$, and this occurs if and only if $\Lambda \leq 1$.
If $f^u_\iota$ and $h^u_\iota$ are unimodularizations of $f_\iota$ and $h_\iota$, respectively, for each $\iota \in I$, then $\CDF(f^u_\iota,h^u_\iota)$ has the same limit as $\CDF(f_\iota,h_\iota)$.
\end{theorem}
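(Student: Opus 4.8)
The plan is to obtain this result as a direct application of Theorem~\ref{Mary}, since a pair $(f_\iota,h_\iota)$ consisting of a quartic residue sequence and a quadratic residue sequence is a pair of unimodularizable character combination sequences sharing the same prime, field, and length. The decisive input is the parameter table~\eqref{Judith}, which records that for these cross-pairings $S_{f,h}=S_{g,h}=0$, $U_{f,h}=U_{g,h}=0$, and $V_{f,h}=V_{g,h}=0$. Hence, in the notation of Theorem~\ref{Mary}, the limiting parameters are $S=U=V=0$, and the four-term formula for the limiting crosscorrelation demerit factor collapses to its single surviving summand, $\Omega(1/\Lambda,0)$.

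Before invoking Theorem~\ref{Mary}, I would check its hypotheses. Infinitude of $\{p_\iota\}$ and $\ell_\iota/p_\iota\to\Lambda>0$ are given. The parameters $S_{f_\iota,h_\iota}$, $U_{f_\iota,h_\iota}$, $V_{f_\iota,h_\iota}$ are identically zero by~\eqref{Judith}, so they trivially converge to real limits $S=U=V=0$. For the growth condition I would use the values $W_{f_\iota}=\sqrt{2}$ and $W_{h_\iota}=1$ from~\eqref{Judith}, so that $W_{f_\iota}^2 W_{h_\iota}^2=2$ is constant and $W_{f_\iota}^2 W_{h_\iota}^2(\log p_\iota)^3/\sqrt{p_\iota}\to 0$ as $p_\iota\to\infty$. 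Crucially, since $U=0$ and $V=0$, Theorem~\ref{Mary} imposes no requirement that $(r_\iota-s_\iota)/p_\iota$ or $(r_\iota+s_\iota)/p_\iota$ converge; this is precisely why the present theorem needs no hypothesis on the shifts, and why the crosscorrelation of quartic against quadratic residue sequences ``does not depend appreciably on the prime.'' Applying Theorem~\ref{Mary} then yields $\CDF(f_\iota,h_\iota)\to\Omega(1/\Lambda,0)$, and its final sentence gives the identical limit for the unimodularizations $\CDF(f^u_\iota,h^u_\iota)$.

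It remains to analyze $\Omega(1/\Lambda,0)=\sum_{n\in\Z}\max(0,1-|n/\Lambda|)^2$ as a function of $\Lambda>0$. The $n=0$ summand equals $1$ and every other summand is nonnegative, so $\Omega(1/\Lambda,0)\ge 1$ for all $\Lambda>0$, establishing that $1$ is a global minimum value. For equality I would observe that a summand with $n\ne 0$ is nonzero exactly when $|n|<\Lambda$; thus all such summands vanish if and only if every nonzero integer $n$ satisfies $|n|\ge\Lambda$, that is, if and only if $\Lambda\le 1$. This pins down both the minimum value and the set of minimizers.

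I do not expect a genuine obstacle here: the substantive work lies upstream, in the parameter computation~\eqref{Judith} and in Theorem~\ref{Mary} itself, both of which may be assumed. The only points demanding care are bookkeeping ones---confirming that the vanishing of $U$ and $V$ legitimately removes the shift hypotheses from Theorem~\ref{Mary}, and handling the boundary case $\Lambda=1$ correctly in the elementary analysis of $\Omega(1/\Lambda,0)$, where the $n=\pm 1$ terms are exactly zero.
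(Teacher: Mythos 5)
Your proposal is correct and follows essentially the same route as the paper, which likewise proves the theorem by applying Theorem \ref{Mary} with the parameters from \eqref{Judith}. The only difference is in the minimization claim: the paper disposes of it by citing \cite[Lemma 24]{Katz-2016-Aperiodic} (an explicit piecewise formula for $\Omega(1/x,0)$), whereas you prove it directly---the $n=0$ summand of $\Omega(1/\Lambda,0)$ equals $1$, all other summands are nonnegative, and the summand for $n\neq 0$ is nonzero exactly when $|n|<\Lambda$, so the minimum value $1$ is attained if and only if $\Lambda\leq 1$. Your elementary argument is sound (including the boundary case $\Lambda=1$) and has the advantage of being self-contained; your explicit verification of the hypotheses of Theorem \ref{Mary}---in particular that $W_{f_\iota}^2W_{h_\iota}^2=2$ is constant and that the vanishing of $U$ and $V$ removes any need for hypotheses on the shifts---is also correct and spells out what the paper's one-line proof leaves implicit.
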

\begin{proof}
We apply Theorem \ref{Mary} with the parameters from \eqref{Judith}.  The global minimization of $\Omega(1/\Lambda,0)$ is done in \cite[Lemma 24]{Katz-2016-Aperiodic}.
\end{proof}
\begin{theorem}\label{Christopher}
Let $\{(f_\iota,g_\iota)\}_{\iota \in I}$ be family of pairs of quartic residue sequences, with $f_\iota=\tilde{f}_{p_\iota}^{r_\iota,\ell_\iota}$ and $g_\iota=\tilde{g}_{p_\iota}^{s_\iota,\ell_\iota}$ as described in \eqref{Florence}.
Suppose that $\{p_\iota\}_{\iota \in I}$ is infinite, and for each $p_\iota$, let $\gamma_{p_\iota}$ be as defined in \eqref{Dorothy}.
Suppose that there are positive real numbers $\Lambda >0$ and $\gamma \in [0,\pi/2]$ such that $\ell_\iota/p_\iota \to \Lambda$ and $\gamma_p \to \gamma$ as $p_\iota \to \infty$.  Then
\[
\CDF(f_\iota,g_\iota) \to \frac{-1+\cos(2\gamma)}{3} \Lambda + \Omega\left(\frac{1}{\Lambda},0\right)
\]
as $p_\iota \to \infty$.  
If $\Lambda=1$, this equals $\frac{2+\cos(2\gamma)}{3}$ and so achieves a minimum value of $1/3$ when $\gamma=\pi/2$.  
If $f^u_\iota$ and $g^u_\iota$ are unimodularizations of $f_\iota$ and $g_\iota$, respectively, for each $\iota \in I$ (for example, if $f^u_\iota=f_{p_\iota}^{r_\iota,\ell_\iota}$ and $g^u_\iota=g_{p_\iota}^{r_\iota,\ell_\iota}$ as described in \eqref{Elizabeth}\em{)}, then $\CDF(f^u_\iota,g^u_\iota)$ has the same limit as $\CDF(f_\iota,g_\iota)$.
\end{theorem}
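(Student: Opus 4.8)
The plan is to deduce this directly from the main asymptotic formula in Theorem~\ref{Mary}, feeding it the parameter values for the pair $(f_\iota,g_\iota)$ recorded in the crosscorrelation column of \eqref{Judith}. First I would confirm that the hypotheses of Theorem~\ref{Mary} are met. The primes $\{p_\iota\}$ are infinite and $\ell_\iota/p_\iota\to\Lambda>0$ by assumption. From \eqref{Judith} we read off $S_{f_\iota,g_\iota}=\frac{-1+\Re(\tau(\theta_{p_\iota})^4/p_\iota^2)}{2}$, $U_{f_\iota,g_\iota}=0$, $V_{f_\iota,g_\iota}=0$, and $W_{f_\iota}=W_{g_\iota}=\sqrt{2}$. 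Lemma~\ref{Theresa} identifies $\Re(\tau(\theta_{p_\iota})^4/p_\iota^2)=\cos(2\gamma_{p_\iota})$, so the hypothesis $\gamma_{p_\iota}\to\gamma$ gives $S_{f_\iota,g_\iota}\to S:=\frac{-1+\cos(2\gamma)}{2}$, while $U_{f_\iota,g_\iota}$ and $V_{f_\iota,g_\iota}$ are identically zero and hence converge to $U=V=0$. Finally $W_{f_\iota}^2 W_{g_\iota}^2=4$ is constant, so $W_{f_\iota}^2 W_{g_\iota}^2(\log p_\iota)^3/\sqrt{p_\iota}\to 0$, as required.

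The convenient feature here is that $U=V=0$: Theorem~\ref{Mary} only demands convergence of $(r_\iota-s_\iota)/p_\iota$ when $U\neq 0$ and of $(r_\iota+s_\iota)/p_\iota$ when $V\neq 0$, so no hypothesis on the shifts $r_\iota,s_\iota$ is needed, and correspondingly the two shift-dependent terms $U\cdot\Omega(1/\Lambda,\Delta/\Lambda)$ and $V\cdot\Omega(1/\Lambda,1+\Sigma/\Lambda)$ drop out of the limit. Theorem~\ref{Mary} then yields
\[
\CDF(f_\iota,g_\iota)\to S\cdot\frac{2}{3}\Lambda+\Omega\left(\frac{1}{\Lambda},0\right)
=\frac{-1+\cos(2\gamma)}{2}\cdot\frac{2}{3}\Lambda+\Omega\left(\frac{1}{\Lambda},0\right)
=\frac{-1+\cos(2\gamma)}{3}\Lambda+\Omega\left(\frac{1}{\Lambda},0\right),
\]
which is the asserted formula. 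The invariance under passing to the unimodularizations $f^u_\iota,g^u_\iota$ is precisely the final clause of Theorem~\ref{Mary}, which applies since $\ell_\iota/p_\iota\to\Lambda>0$ forces $\ell_\iota\to\infty$ and $\ell_\iota/p_\iota^2\to 0$.

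For the special case $\Lambda=1$, I would substitute $\Omega(1,0)=\sum_{n\in\Z}\max(0,1-|n|)^2=1$ (only the $n=0$ term survives) into the limit, obtaining $\frac{-1+\cos(2\gamma)}{3}+1=\frac{2+\cos(2\gamma)}{3}$. Since $\gamma\in[0,\pi/2]$ makes $\cos(2\gamma)$ range over $[-1,1]$ with minimum $-1$ attained at $\gamma=\pi/2$, this expression is minimized at $\gamma=\pi/2$ with value $\frac{1}{3}$. I do not anticipate a genuine obstacle here: the entire argument is a substitution into Theorem~\ref{Mary}, and the only points to watch are the bookkeeping observation that $U=V=0$ eliminates the shift-dependent terms (and thereby removes any need to control the shifts) together with the routine verification of the $W$-growth condition.
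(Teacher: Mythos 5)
Your proposal is correct and follows exactly the paper's own route: the paper's proof is the one-line "apply Theorem \ref{Mary} with the parameters from \eqref{Judith}, and note that $\Omega(1,0)=1$," and your write-up simply fills in the same substitutions (using Lemma \ref{Theresa} for $S\to\frac{-1+\cos(2\gamma)}{2}$, observing $U=V=0$ removes the shift hypotheses, checking the $W$-growth condition, and invoking the unimodularization clause of Theorem \ref{Mary}). No discrepancies to report.
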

\begin{proof}
We apply Theorem \ref{Mary} with the parameters from \eqref{Judith}, and note that $\Omega(1,0)=1$.
\end{proof}

\section{Asymptotic Crosscorrelation Merit Factor is Unbounded}\label{James}
  
Appending character sequences to obtain lengths far beyond their natural length causes large autocorrelation sidelobe values at shifts that are multiples of the period of the original sequence, and as such is not recommended for most applications.
Nonetheless, it is interesting to see what happens to the mean-square crosscorrelation of the sequences $\fpslu$ and $\gpslu$ defined in \eqref{Elizabeth} in the Introduction when we append by more and more.

In the Introduction we appended these sequences to lengths $\ell$ equal to about $1.057827$ times the their natural period $p$.
Recall the definition of $\gamma_p$ in \eqref{Dorothy}, and recall from Lemma \ref{Theresa} that for any $\gamma \in [0,\pi/2]$, there is an increasing sequence of primes $p$ congruent to $1$ modulo $4$ such that $\gamma_p \to \gamma$ as $p\to\infty$.

If one further appends our sequences $\fpslu$ and $\gpslu$, the autocorrelation merit factor will decrease, but if we use sequences based on primes $p$ with $\cos(2 \gamma_p) \to -1$, then we can make the crosscorrelation merit factor increase without bound (i.e., demerit factor tends to $0$).
This is proved in Theorem \ref{Irene} below, and is illustrated in Figure \ref{Cecilia}.
In that figure we consider $100$ sequence pairs based on the smallest $100$ primes $p_1 < p_2 < \cdots < p_{100}$ of the form $1+(2 c)^2$ with $c \in \Z$ (so $\cos(2\gamma_{p_k})$ approaches $-1$ as $k$ increases).  The $k$th sequence pair is $(f_{p_k}^{s_k,\ell_k},g_{p_k}^{s_k,\ell_k})$ where $\ell_k$ is as close as possible to $k/10$ periods (that is, $\ell_k$ is $p_k\cdot k/10$, rounded to the nearest integer) and $s_k$ is chosen to keep autocorrelation relatively low  given the length $\ell_k$ (we let $s_k$ be $p_k \cdot (3-2(k/10))/4$, rounded to the nearest integer).  We say that the {\it fractional length} of $k$th sequence is $\ell_k/p_k=k/10$.  In Figure \ref{Cecilia}, the dots show the crosscorrelation demerit factors for these $100$ sequence pairs as a function of fractional length.  The curve in our figure indicates the asymptotic crosscorrelation demerit factor that we would obtain as a function of fractional length in the limit as $\cos(\gamma_p) \to -1$.
One can see that all but the shortest sequences have performance very close to asymptotic.

\begin{center}
\begin{figure}
\begin{center}
\caption{Crosscorrelation demerit factors as a function of fractional length for sequences derived from linear combinations of quartic characters (with $\cos(2 \gamma_p)$ approaching $-1$)}\label{Cecilia}
\begin{tikzpicture}[gnuplot]
\path (0.000,0.000) rectangle (12.446,9.398);
\gpcolor{color=gp lt color border}
\gpsetlinetype{gp lt border}
\gpsetlinewidth{1.00}
\draw[gp path] (1.504,0.985)--(1.684,0.985);
\draw[gp path] (11.893,0.985)--(11.713,0.985);
\node[gp node right] at (1.320,0.985) { 0};
\draw[gp path] (1.504,2.594)--(1.684,2.594);
\draw[gp path] (11.893,2.594)--(11.713,2.594);
\node[gp node right] at (1.320,2.594) { 0.2};
\draw[gp path] (1.504,4.203)--(1.684,4.203);
\draw[gp path] (11.893,4.203)--(11.713,4.203);
\node[gp node right] at (1.320,4.203) { 0.4};
\draw[gp path] (1.504,5.811)--(1.684,5.811);
\draw[gp path] (11.893,5.811)--(11.713,5.811);
\node[gp node right] at (1.320,5.811) { 0.6};
\draw[gp path] (1.504,7.420)--(1.684,7.420);
\draw[gp path] (11.893,7.420)--(11.713,7.420);
\node[gp node right] at (1.320,7.420) { 0.8};
\draw[gp path] (1.504,9.029)--(1.684,9.029);
\draw[gp path] (11.893,9.029)--(11.713,9.029);
\node[gp node right] at (1.320,9.029) { 1};
\draw[gp path] (1.504,0.985)--(1.504,1.165);
\draw[gp path] (1.504,9.029)--(1.504,8.849);
\node[gp node center] at (1.504,0.677) { 0};
\draw[gp path] (3.582,0.985)--(3.582,1.165);
\draw[gp path] (3.582,9.029)--(3.582,8.849);
\node[gp node center] at (3.582,0.677) { 2};
\draw[gp path] (5.660,0.985)--(5.660,1.165);
\draw[gp path] (5.660,9.029)--(5.660,8.849);
\node[gp node center] at (5.660,0.677) { 4};
\draw[gp path] (7.737,0.985)--(7.737,1.165);
\draw[gp path] (7.737,9.029)--(7.737,8.849);
\node[gp node center] at (7.737,0.677) { 6};
\draw[gp path] (9.815,0.985)--(9.815,1.165);
\draw[gp path] (9.815,9.029)--(9.815,8.849);
\node[gp node center] at (9.815,0.677) { 8};
\draw[gp path] (11.893,0.985)--(11.893,1.165);
\draw[gp path] (11.893,9.029)--(11.893,8.849);
\node[gp node center] at (11.893,0.677) { 10};
\draw[gp path] (1.504,9.029)--(1.504,0.985)--(11.893,0.985)--(11.893,9.029)--cycle;
\node[gp node center,rotate=-270] at (0.246,5.007) {Demerit Factor};
\node[gp node center] at (6.698,0.215) {Fractional Length};
\gpcolor{rgb color={0.000,0.000,0.000}}
\gpsetlinetype{gp lt plot 0}
\gpsetlinewidth{2.00}
\draw[gp path] (1.609,8.487)--(1.714,7.946)--(1.819,7.404)--(1.924,6.862)--(2.029,6.321)%
  --(2.134,5.779)--(2.239,5.237)--(2.344,4.696)--(2.448,4.154)--(2.553,3.614)--(2.658,3.231)%
  --(2.763,3.021)--(2.868,2.902)--(2.973,2.825)--(3.078,2.764)--(3.183,2.700)--(3.288,2.627)%
  --(3.393,2.537)--(3.498,2.427)--(3.603,2.300)--(3.708,2.201)--(3.813,2.139)--(3.918,2.101)%
  --(4.023,2.074)--(4.127,2.052)--(4.232,2.029)--(4.337,2.001)--(4.442,1.964)--(4.547,1.918)%
  --(4.652,1.862)--(4.757,1.818)--(4.862,1.790)--(4.967,1.772)--(5.072,1.758)--(5.177,1.747)%
  --(5.282,1.735)--(5.387,1.720)--(5.492,1.700)--(5.597,1.674)--(5.702,1.643)--(5.807,1.619)%
  --(5.911,1.603)--(6.016,1.592)--(6.121,1.585)--(6.226,1.578)--(6.331,1.570)--(6.436,1.561)%
  --(6.541,1.548)--(6.646,1.531)--(6.751,1.511)--(6.856,1.496)--(6.961,1.486)--(7.066,1.480)%
  --(7.171,1.475)--(7.276,1.470)--(7.381,1.465)--(7.486,1.458)--(7.590,1.449)--(7.695,1.438)%
  --(7.800,1.424)--(7.905,1.414)--(8.010,1.407)--(8.115,1.402)--(8.220,1.399)--(8.325,1.395)%
  --(8.430,1.392)--(8.535,1.387)--(8.640,1.380)--(8.745,1.371)--(8.850,1.361)--(8.955,1.354)%
  --(9.060,1.349)--(9.165,1.346)--(9.270,1.343)--(9.374,1.341)--(9.479,1.338)--(9.584,1.334)%
  --(9.689,1.329)--(9.794,1.322)--(9.899,1.314)--(10.004,1.309)--(10.109,1.305)--(10.214,1.303)%
  --(10.319,1.301)--(10.424,1.299)--(10.529,1.297)--(10.634,1.293)--(10.739,1.289)--(10.844,1.284)%
  --(10.949,1.278)--(11.053,1.274)--(11.158,1.271)--(11.263,1.269)--(11.368,1.267)--(11.473,1.266)%
  --(11.578,1.264)--(11.683,1.261)--(11.788,1.258)--(11.893,1.253);
\gpsetlinewidth{1.00}
\gpsetpointsize{2.00}
\gppoint{gp mark 7}{(1.712,9.029)}
\gppoint{gp mark 7}{(1.687,7.241)}
\gppoint{gp mark 7}{(1.813,6.769)}
\gppoint{gp mark 7}{(1.915,6.777)}
\gppoint{gp mark 7}{(2.026,6.414)}
\gppoint{gp mark 7}{(2.127,5.809)}
\gppoint{gp mark 7}{(2.232,5.468)}
\gppoint{gp mark 7}{(2.336,4.898)}
\gppoint{gp mark 7}{(2.439,4.304)}
\gppoint{gp mark 7}{(2.543,3.705)}
\gppoint{gp mark 7}{(2.647,3.295)}
\gppoint{gp mark 7}{(2.751,3.056)}
\gppoint{gp mark 7}{(2.855,2.921)}
\gppoint{gp mark 7}{(2.959,2.834)}
\gppoint{gp mark 7}{(3.062,2.798)}
\gppoint{gp mark 7}{(3.166,2.700)}
\gppoint{gp mark 7}{(3.270,2.636)}
\gppoint{gp mark 7}{(3.374,2.559)}
\gppoint{gp mark 7}{(3.478,2.455)}
\gppoint{gp mark 7}{(3.582,2.318)}
\gppoint{gp mark 7}{(3.686,2.221)}
\gppoint{gp mark 7}{(3.790,2.152)}
\gppoint{gp mark 7}{(3.893,2.113)}
\gppoint{gp mark 7}{(3.997,2.082)}
\gppoint{gp mark 7}{(4.101,2.070)}
\gppoint{gp mark 7}{(4.205,2.033)}
\gppoint{gp mark 7}{(4.309,2.011)}
\gppoint{gp mark 7}{(4.413,1.974)}
\gppoint{gp mark 7}{(4.517,1.939)}
\gppoint{gp mark 7}{(4.621,1.888)}
\gppoint{gp mark 7}{(4.725,1.829)}
\gppoint{gp mark 7}{(4.828,1.800)}
\gppoint{gp mark 7}{(4.932,1.781)}
\gppoint{gp mark 7}{(5.036,1.764)}
\gppoint{gp mark 7}{(5.140,1.750)}
\gppoint{gp mark 7}{(5.244,1.741)}
\gppoint{gp mark 7}{(5.348,1.729)}
\gppoint{gp mark 7}{(5.452,1.707)}
\gppoint{gp mark 7}{(5.556,1.685)}
\gppoint{gp mark 7}{(5.660,1.658)}
\gppoint{gp mark 7}{(5.763,1.629)}
\gppoint{gp mark 7}{(5.867,1.609)}
\gppoint{gp mark 7}{(5.971,1.597)}
\gppoint{gp mark 7}{(6.075,1.588)}
\gppoint{gp mark 7}{(6.179,1.581)}
\gppoint{gp mark 7}{(6.283,1.574)}
\gppoint{gp mark 7}{(6.387,1.568)}
\gppoint{gp mark 7}{(6.491,1.556)}
\gppoint{gp mark 7}{(6.595,1.541)}
\gppoint{gp mark 7}{(6.699,1.524)}
\gppoint{gp mark 7}{(6.802,1.504)}
\gppoint{gp mark 7}{(6.906,1.491)}
\gppoint{gp mark 7}{(7.010,1.484)}
\gppoint{gp mark 7}{(7.114,1.477)}
\gppoint{gp mark 7}{(7.218,1.473)}
\gppoint{gp mark 7}{(7.322,1.469)}
\gppoint{gp mark 7}{(7.426,1.463)}
\gppoint{gp mark 7}{(7.530,1.455)}
\gppoint{gp mark 7}{(7.634,1.446)}
\gppoint{gp mark 7}{(7.737,1.432)}
\gppoint{gp mark 7}{(7.841,1.420)}
\gppoint{gp mark 7}{(7.945,1.411)}
\gppoint{gp mark 7}{(8.049,1.406)}
\gppoint{gp mark 7}{(8.153,1.402)}
\gppoint{gp mark 7}{(8.257,1.397)}
\gppoint{gp mark 7}{(8.361,1.395)}
\gppoint{gp mark 7}{(8.465,1.391)}
\gppoint{gp mark 7}{(8.569,1.386)}
\gppoint{gp mark 7}{(8.672,1.378)}
\gppoint{gp mark 7}{(8.776,1.367)}
\gppoint{gp mark 7}{(8.880,1.359)}
\gppoint{gp mark 7}{(8.984,1.353)}
\gppoint{gp mark 7}{(9.088,1.349)}
\gppoint{gp mark 7}{(9.192,1.345)}
\gppoint{gp mark 7}{(9.296,1.343)}
\gppoint{gp mark 7}{(9.400,1.340)}
\gppoint{gp mark 7}{(9.504,1.337)}
\gppoint{gp mark 7}{(9.607,1.333)}
\gppoint{gp mark 7}{(9.711,1.328)}
\gppoint{gp mark 7}{(9.815,1.320)}
\gppoint{gp mark 7}{(9.919,1.313)}
\gppoint{gp mark 7}{(10.023,1.308)}
\gppoint{gp mark 7}{(10.127,1.305)}
\gppoint{gp mark 7}{(10.231,1.303)}
\gppoint{gp mark 7}{(10.335,1.301)}
\gppoint{gp mark 7}{(10.439,1.299)}
\gppoint{gp mark 7}{(10.542,1.296)}
\gppoint{gp mark 7}{(10.646,1.293)}
\gppoint{gp mark 7}{(10.750,1.289)}
\gppoint{gp mark 7}{(10.854,1.283)}
\gppoint{gp mark 7}{(10.958,1.277)}
\gppoint{gp mark 7}{(11.062,1.274)}
\gppoint{gp mark 7}{(11.166,1.271)}
\gppoint{gp mark 7}{(11.270,1.269)}
\gppoint{gp mark 7}{(11.374,1.267)}
\gppoint{gp mark 7}{(11.477,1.266)}
\gppoint{gp mark 7}{(11.581,1.264)}
\gppoint{gp mark 7}{(11.685,1.261)}
\gppoint{gp mark 7}{(11.789,1.258)}
\gppoint{gp mark 7}{(11.893,1.254)}
\gpcolor{color=gp lt color border}
\gpsetlinetype{gp lt border}
\draw[gp path] (1.504,9.029)--(1.504,0.985)--(11.893,0.985)--(11.893,9.029)--cycle;
\gpdefrectangularnode{gp plot 1}{\pgfpoint{1.504cm}{0.985cm}}{\pgfpoint{11.893cm}{9.029cm}}
\end{tikzpicture}
\end{center}
\end{figure}
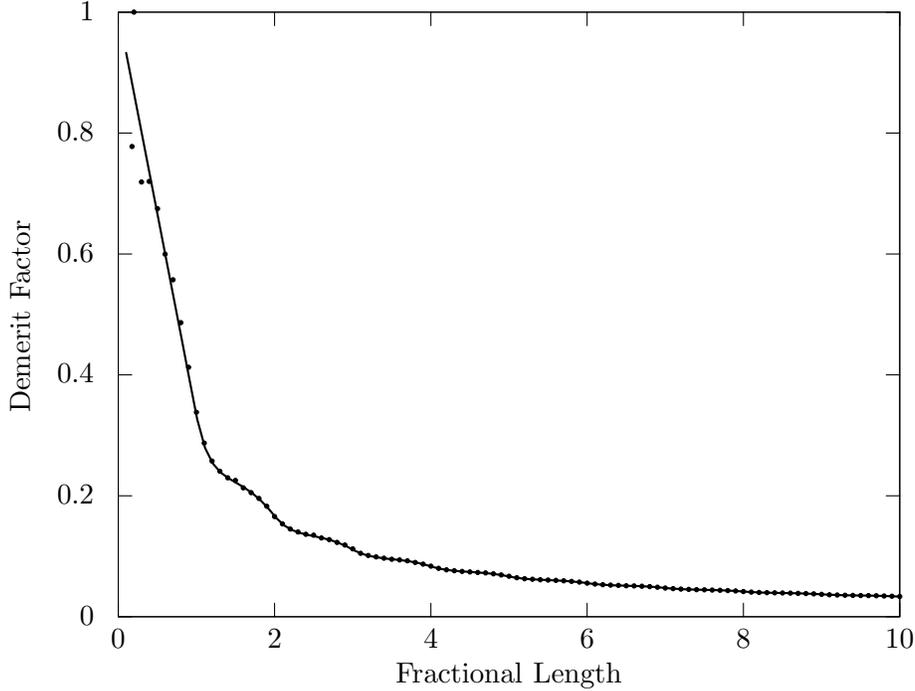
\end{center}

We conclude with a proof that the limiting crosscorrelation demerit factor tends to $0$ in this scenario.
\begin{theorem}\label{Irene}
There exists an infinite family $\{(f_\iota,g_\iota)\}_{\iota \in I}$ of pairs of quartic residue sequences such that for each $\iota \in I$, both $f_\iota$ and $g_\iota$ have prime $p_\iota$, and the set $\{p_\iota : \iota \in I\}$ is infinite, and $\CDF(f_\iota,g_\iota) \to 0$ as $p_\iota\to \infty$.
If $f^u_\iota$ and $g^u_\iota$ are unimodularizations of $f_\iota$ and $g_\iota$, respectively, for each $\iota \in I$, then $\CDF(f^u_\iota,g^u_\iota) \to 0$ as $p_\iota \to\infty$.
\end{theorem}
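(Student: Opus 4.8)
The plan is to push Theorem \ref{Christopher} into the regime where $\cos(2\gamma)\to -1$ (that is, $\gamma\to\pi/2$) while simultaneously letting the fractional length $\Lambda=\ell/p$ grow without bound. Theorem \ref{Christopher} asserts that for a family of quartic residue sequence pairs with $\ell_\iota/p_\iota\to\Lambda$ and $\gamma_{p_\iota}\to\gamma$, the crosscorrelation demerit factor tends to $\tfrac{-1+\cos(2\gamma)}{3}\Lambda+\Omega(1/\Lambda,0)$. Taking $\gamma=\pi/2$ makes the coefficient of $\Lambda$ equal to $-\tfrac{2}{3}$, so the limit becomes $-\tfrac{2}{3}\Lambda+\Omega(1/\Lambda,0)$. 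First I would evaluate this for an integer value $\Lambda=L$: the nonzero summands of $\Omega(1/L,0)=\sum_{n\in\Z}\max(0,1-|n|/L)^2$ are exactly those with $|n|\le L-1$, so using $\sum_{j=1}^{L-1}j^2=\tfrac{(L-1)L(2L-1)}{6}$ one finds $\Omega(1/L,0)=\tfrac{2L}{3}+\tfrac{1}{3L}$, and hence $-\tfrac{2}{3}L+\Omega(1/L,0)=\tfrac{1}{3L}$. (As sanity checks, $L=1$ gives $\Omega(1,0)=1$ and $L=2$ gives $\Omega(1/2,0)=3/2$, both matching the formula.)

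Consequently, for each fixed positive integer $L$, Theorem \ref{Christopher} applied with $\Lambda=L$ and $\gamma=\pi/2$ produces a family of quartic residue sequence pairs whose crosscorrelation demerit factor tends to $\tfrac{1}{3L}$; the primes realizing $\gamma_p\to\pi/2$ exist by the equidistribution statement of Lemma \ref{Theresa}. Note that the shifts $r_\iota,s_\iota$ are immaterial here, since the parameters in \eqref{Judith} give $U_{f,g}=V_{f,g}=0$, so only $\Lambda$ and $\gamma$ enter the limit; for concreteness I would use shifts as in Figure \ref{Cecilia}. The remaining work is a diagonalization across $L$: I would, for each $L$, select a single prime $p_L$ from the $L$th family, chosen increasing and large enough that (i) $\gamma_{p_L}$ is within $1/L$ of $\pi/2$, (ii) the actual demerit factor of the resulting pair $(f_{p_L},g_{p_L})$, with $\ell_L$ the nearest integer to $Lp_L$, is within $1/L$ of $\tfrac{1}{3L}$, and (iii) $p_L>L^2$. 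Each of these can be arranged because the relevant convergences in Theorem \ref{Christopher} and Lemma \ref{Theresa} hold along an infinite set of primes, so one may always pass to a sufficiently large tail.

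Collecting these choices yields the family $\{(f_{p_L},g_{p_L})\}_{L\ge 1}$, indexed so that $p_L\to\infty$ as $L\to\infty$, with $\{p_L\}$ infinite by (iii). Along this family the demerit factor is within $1/L$ of $\tfrac{1}{3L}$, so $\CDF(f_{p_L},g_{p_L})\to 0$, establishing the first assertion. For the unimodularized sequences I would invoke Lemma \ref{Nancy} directly, rather than the unimodularization clause of Theorem \ref{Christopher} (whose hypotheses fail here because $\Lambda\to\infty$): we have $\ell_L\to\infty$, while $\ell_L/p_L^2$ is comparable to $L/p_L<1/L\to 0$ by condition (iii), so Lemma \ref{Nancy} shows that $\CDF(f^u_{p_L},g^u_{p_L})$ has the same limit, namely $0$.

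The main obstacle is precisely that Theorem \ref{Christopher} only delivers a limit for a fixed finite $\Lambda$, so one cannot naively substitute $\Lambda\to\infty$; the content of the argument is therefore the diagonal selection that simultaneously meets the three requirements above, together with the observation that condition (iii), which forces $p_L$ to grow faster than $L^2$, is exactly what makes $\ell_L/p_L^2\to 0$ and thus lets Lemma \ref{Nancy} transfer the conclusion to the unimodularizations.
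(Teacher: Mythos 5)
Your proposal is correct and follows essentially the same route as the paper's proof: both apply Theorem \ref{Christopher} with $\gamma=\pi/2$ to a sequence of families whose length ratios $\Lambda$ grow without bound, diagonalize by selecting one sufficiently large prime per family so that the actual demerit factor is close to the asymptotic value $-\tfrac{2}{3}\Lambda+\Omega(1/\Lambda,0)$, and impose a growth condition on the primes (your $p_L>L^2$, the paper's $p_n\ge n(\Lambda_n+1)$) precisely so that $\ell/p^2\to 0$ and Lemma \ref{Nancy} transfers the conclusion to the unimodularizations. The only cosmetic difference is that you restrict to integer $\Lambda=L$ and compute $-\tfrac{2}{3}L+\Omega(1/L,0)=\tfrac{1}{3L}$ exactly, whereas the paper allows arbitrary real $\Lambda_n\to\infty$ and invokes Lemma \ref{Samantha} to see that this quantity tends to $0$.
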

\begin{proof}
Let $\Lambda_1,\Lambda_2,\ldots$ be a sequence of positive real numbers chosen so that $\lim_{n\to\infty} \Lambda_n=\infty$.
For each positive integer $n$, let ${\mathcal F}_n$ be a family of pairs of quartic residue sequences that meets the hypotheses of Theorem \ref{Christopher} with the parameters $\Lambda$ and $\gamma$ in that theorem being equal to $\Lambda_n$ and $\pi/2$, respectively.  Then the same theorem tells us that the limiting value of crosscorrelation demerit factor for the sequences in $F_n$ as their prime tends to infinity is $-2\Lambda_n/3+\Omega(1/\Lambda_n,0)$.  For each $n$, select a sequence pair $(f_n,g_n)$ from ${\mathcal F}_n$ with prime $p_n$ and length $\ell_n$ such that
\begin{align*}
\left|\frac{\ell_n}{p_n} -\Lambda_n\right| & < 1 \\
\left|\CDF(f_n,g_n) -\left(-\frac{2\Lambda_n}{3}+\Omega\Big(\frac{1}{\Lambda_n},0\Big)\right)\right| & < \frac{1}{n},
\end{align*}
and make sure that $p_n > p_{n-1}$ and $p_n \geq n(\Lambda_n+1)$.
Since Lemma \ref{Samantha} below shows that $\lim_{x \to \infty} -2x/3+\Omega(1/x,0) = 0$, we see that $\CDF(f_n,g_n) \to 0$ as $n \to \infty$ (equivalently, as $p_n \to \infty$).
We then let $f^u_n$ and $g^u_n$ be unimodularizations of $f_n$ and $g_n$ for each $n$.  Then apply Lemma \ref{Nancy} to see that $\CDF(f^u_n,g^u_n) \to 0 $ as $p_n \to \infty$.  To see that Lemma \ref{Nancy} applies, note that $\ell_n \geq p_n(\Lambda_n-1)$ and that $\Lambda_n\to\infty$ and $p_n \to \infty$ as $n\to\infty$.  Thus $\ell_n\to\infty$ as $n\to\infty$ (equivalently, as $p_n \to \infty$).  Furthermore $\frac{\ell_n}{p_n^2} < \frac{\Lambda_n+1}{p_n} \leq 1/n$, so that $\ell_n/p_n^2 \to 0$ as $n\to \infty$ (equivalently, as $p_n \to \infty$).
\end{proof}
We close this section with a technical lemma needed in the proof of Theorem \ref{Irene}.
\begin{lemma}\label{Samantha}
The function $f(x)=-2 x/3+\Omega(1/x,0)$ tends to $0$ as $x \to \infty$.
\end{lemma}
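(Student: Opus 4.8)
The plan is to evaluate $\Omega(1/x,0)$ exactly for large $x$ and show that the resulting closed form differs from $2x/3$ by a quantity of order $1/x$. Recall from the definition that
\[
\Omega(1/x,0) = \sum_{n \in \Z} \max\left(0, 1-\left|\frac{n}{x}\right|\right)^2,
\]
and the $n$th summand is nonzero precisely when $|n| < x$. So, letting $k$ denote the largest integer strictly less than $x$, only the indices $n \in \{-k,\ldots,k\}$ contribute, and by the symmetry $n \mapsto -n$ we have
\[
\Omega(1/x,0) = 1 + 2 \sum_{n=1}^{k} \left(1 - \frac{n}{x}\right)^2.
\]

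First I would expand the square and apply the standard closed forms for $\sum_{n=1}^k n$ and $\sum_{n=1}^k n^2$ to obtain
\[
\Omega(1/x,0) = 1 + 2k - \frac{2k(k+1)}{x} + \frac{k(k+1)(2k+1)}{3x^2}.
\]
Then, writing $x = k + \delta$ with $\delta \in (0,1]$ (so that $\delta$ stays bounded while $k \to \infty$), I would subtract $2x/3$ and simplify. The routine but essential computation is that, after clearing the common denominator $3x^2$, every term of degree $3$ and of degree $2$ in $k$ cancels, leaving
\[
\Omega(1/x,0) - \frac{2x}{3} = \frac{k + 3\delta^2 - 2\delta^3}{3x^2}.
\]
Finally I would bound the right-hand side: since $0 \leq k < x$ and $0 \leq 3\delta^2 - 2\delta^3 \leq 1$ for $\delta \in (0,1]$, we get $\left|\Omega(1/x,0) - 2x/3\right| \leq (x+1)/(3x^2)$, which tends to $0$ as $x \to \infty$. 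This is precisely the assertion that $f(x) \to 0$.

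The main obstacle is the cancellation step. The naive Riemann-sum heuristic $\frac{1}{x}\Omega(1/x,0) = \frac{1}{x}\sum_{|n|<x}(1-|n/x|)^2 \to \int_{-1}^{1}(1-|t|)^2\,dt = 2/3$ only yields $\Omega(1/x,0)/x \to 2/3$, whereas the claim is the sharper statement that the difference $\Omega(1/x,0) - 2x/3$ itself vanishes. Establishing this requires the exact polynomial evaluation above and verifying that the leading $k^3$ and $k^2$ contributions (which, after dividing by $3x^2$, are individually of size $x$ and $1$ and hence not obviously negligible) precisely annihilate one another, so that only a numerator of size $k + O(1) = O(x)$ survives over $3x^2$. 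Keeping careful track of the fractional offset $\delta = x - k$ throughout is exactly what makes the cancellation an identity rather than a mere asymptotic estimate.
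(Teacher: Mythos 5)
Your proof is correct and takes essentially the same route as the paper: both arguments evaluate $\Omega(1/x,0)$ exactly on each interval between consecutive integers and then write $x$ as integer part plus fractional part to expose the cancellation, arriving at the identical expression $(k+3\delta^2-2\delta^3)/(3x^2)$ (the paper's $(x-y+3y^2-2y^3)/(3x^2)$). The only difference is that the paper cites \cite[Lemma 24]{Katz-2016-Aperiodic} for the closed form of $\Omega(1/x,0)$, whereas you derive it directly from the definition, which makes your version slightly more self-contained.
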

\begin{proof}
In \cite[Lemma 24]{Katz-2016-Aperiodic}, it is shown that if $m$ is a positive integer and $x \in [m,m+1]$, then
\[
\Omega\left(\frac{1}{x},0\right)= 2 m+1 -\frac{2 m(m+1)}{x} + \frac{m(m+1)(2 m+1)}{3 x^2},
\]
so that
\[
f(x) = \frac{-2 x^3+ 3(2 m+1) x^2 - 6 m(m+1) x + m(m+1)(2 m+1)}{3 x^2}
\]
for $x \in [m,m+1]$.  If we write $x=m+y$ with $y \in [0,1]$, then one can substitute $x-y$ for $m$ in the previous equation to obtain $f(x) = (x-y+3 y^2-2 y^3)/(3 x^2)$.  Then it is clear that $f(x) \to 0$ as $x \to \infty$.
\end{proof}

\section*{Acknowledgements}
The authors thank Jonathan Jedwab and Amy Wiebe for helpful comments about this work.  They also thank the anonymous reviewers for their corrections, which improved this paper.  The second author also thanks Jacob King for valuable corrections to the manuscript.

\end{document}